\newcommand{\Bg}{\prescript{\gamma}{}B}
\newcommand{\Gg}{\prescript{\gamma}{}G}
\newcommand{\Ggf}{\prescript{\gamma}{}{\tilde G}_f}
\newcommand{\WF}{\mathbb{WF}}
\newcommand{\wec}[1]{{\bf #1}}
\newcommand{\akt}{{\boldsymbol{\act}}}
\newcommand{\bea}{\begin{eqnarray}}
\newcommand{\eea}{\end{eqnarray}}
\newcommand{\nn}{\nonumber}
\newcommand{\be}{\begin{equation}}
\newcommand{\ee}{\end{equation}}
\theoremstyle{definition}
\newtheorem{defi}{Definition}
\newtheorem{pro}[defi]{Property}
\theoremstyle{plain}
\newtheorem{theo}[defi]{Theorem}
\newtheorem{lem}[defi]{Lemma}
\newtheorem{prop}[defi]{Proposition}
\newcommand{\Diag}{\mathrm{Diag}}
\newcommand{\N}{\mathbb{N}}
\newcommand{\Zi}{\mathcal{Z}_{\mbox{int}}}
\newcommand{\CC}{\mathcal{C}}
\newcommand{\DD}{\mathcal{D}}
\newcommand{\GG}{\mathcal{G}}
\newcommand{\WW}{\mathcal{W}}
\newcommand{\R}{\mathbb{R}}
\newcommand{\ww}{{\bf w}}
\newcommand{\calD}{{\mathcal D}}
\newcommand{\id}{\mathbf{1}}
\DeclareMathOperator{\tr}{tr}
\newcommand{\ZZ}{\mathcal{Z}}
\newcommand{\act}{\triangleright}
\newcommand{\hodge}{\ast}
\newcommand{\lalg}[1]{\mathfrak{#1}}  
\newcommand{\SU}{\mathrm{SU}}
\newcommand{\SO}{\mathrm{SO}}
\newcommand{\OO}{\mathrm{O}}
\newcommand{\Spin}{\mathrm{Spin}}
\newcommand{\su}{\lalg{su}}
\newcommand{\so}{\lalg{so}}
\newcommand{\spin}{\lalg{spin}}
\newcommand{\dd}{\mathrm{d}}
\def\la{\langle}
\def\ra{\rangle}
\title{\boldmath Holonomy spin foam models: Asymptotic geometry of the partition function.}
\author[a]{Frank Hellmann,}
\author[a,b,c]{Wojciech Kami{\'n}ski.}
\affiliation[a]{MPI for Gravitational Physics, Golm, Germany}
\affiliation[b]{University of Warsaw, Warsaw, Poland}
\affiliation[c]{Perimeter Institute for Theoretical Physics, Waterloo ON, Canada}
\emailAdd{Frank.Hellmann@aei.mpg.de}
\emailAdd{Wojciech.Kaminski@fuw.edu.pl}
\abstract{We study the asymptotic geometry of the spin foam partition function for a large class of models, including the models of Barrett and Crane, Engle, Pereira, Rovelli and Livine, and, Freidel and Krasnov.

The asymptotics is taken with respect to the boundary spins only, no assumption of large spins is made in the interior. We give a sufficient criterion for the existence of the partition function. We find that geometric boundary data is suppressed unless its interior continuation satisfies certain accidental curvature constraints. This means in particular that most Regge manifolds are suppressed in the asymptotic regime. We discuss this explicitly for the case of the configurations arising in the 3-3 Pachner move.

We identify the origin of these accidental curvature constraints as an incorrect twisting of the face amplitude upon introduction of the Immirzi parameter and propose a way to resolve this problem, albeit at the price of losing the connection to the SU(2) boundary Hilbert space.

The key methodological innovation that enables these results is the introduction of the notion of wave front sets, and the adaptation of tools for their study from micro local analysis to the case of spin foam partition functions.}
\begin{document}
\maketitle
\flushbottom

\section{Introduction}
Spin foam models\footnote{See \cite{Baez:1997zt,Baez:1999sr} for the origin of the name and \cite{Perez:2012wv,Bianchi2013,Barrett:2011zjj} for extensive reviews and an overview of recent developments.} are a class of discrete models for quantum space-times. The oldest and prototypical example is the Ponzano-Regge model for 3d Euclidean gravity without cosmological constant \cite{ponzanoregge,Barrett:2006ru,Freidel:2005bb,Freidel:2004vi,Freidel:2004nb,Barrett:2008wh} and the Turaev-Viro model for 3d Euclidean gravity with negative cosmological constant \cite{Turaev1992}.

The 3d spin foam models, while formulated as discrete state sums, are actually independent of the discretisation. The continuum limit is straightforward, and they define a continuum theory. In 4d a number of models have been proposed as discretisations of gravity over the years. These depend strongly on the discretisation and the continuum limit remains a key open question. The first such model was the one of Barrett and Crane (BC) \cite{Barrett:1997gw,Barrett:1999qw}. This was followed by the models of Engle, Livine, Pereira and Rovelli (EPRL) \cite{Engle:2007wy,Pereira:2007nh,Engle:2007qf,Engle:2007uq}, and that of Freidel and Krasnov (FK) \cite{Freidel:2007py,Livine:2007ya,Livine:2007vk}. A large number of variations for these models exist, notably the one of Kaminski, Kieselowski and Lewandowski (KKL) \cite{Kaminski2010,Kaminski2010a,Kaminski2012}. More recently new spin foam models were proposed in terms of holomorphic functions on spinors \cite{Dupuis:2011wy,Dupuis:2011dh,Dupuis:2011fz}, and the flux representation \cite{Baratin:2011hp,Baratin2011}.

There are many equivalent formulations of spin foam models that emphasise different aspects. The form most studied is a weighted state sum. The structure of these weights has been the subject of extensive research and is by now very well understood. In this formulation the weights can be interpreted in terms of geometric quantisation. The limit of large quantum numbers is given by geometric asymptotics.

The asymptotics of the BC weights were studied analytically \cite{Barrett:1998gs,Barrett:2002ur,Freidel:2002mj} as well as numerically \cite{Baez:2002rx,Christensen2010,Christensen2009,Christensen2002}. It was found that in this case the asymptotics contain a geometric sector but are dominated by solutions of a topological theory. The asymptotics of the weights of the EPRL/FK type state sum were computed in the sequence of papers \cite{Barrett:2009cj,Barrett:2009gg,Barrett:2009mw,Barrett:2009as,Barrett:2010ex,Conrady:2008mk,Conrady:2009px}. There it was found that, for appropriate boundary data, the weights are peaked on geometric data in the limit of large quantum numbers, and that the amplitude is given by the discrete form of the Einstein Hilbert action, the Regge action. In \cite{Barrett:2009as} it was shown that the topological theory that arose in the BC model is still present, and the boundaries that correspond to it are not suppressed. Boundary data that is neither geometric nor topological is exponentially suppressed in the asymptotic regime.

Based on these results the asymptotic behaviour of several observables attached to a single weight was obtained in the BC model \cite{Rovelli2006,Bianchi2006,Alesci2007,Alesci2008}, as well as in the new models
\cite{Alesci2009a, Bianchi2009}. It is, however, unlikely that these type of calculations truly give information on any kind of general relativity limit, see for example the discussion in \cite{Dittrich:2008pw,Hellmann:2011jn,Dittrich:2012qb,Dittrich:2012jq}. The behaviour of the entire state sum has remained largely unstudied, the exception being the results of \cite{Bonzom:2009hw}, which indicate that some of the desirable geometric properties are lost once the entire state sum is considered, in particular the geometries that occur semi-classically are flat.

In this paper we will use the formulation of spin foams in terms of holonomies established in \cite{Bahr:2010bs,Bahr:2011aa,Bahr2013,Dittrich2013}. This formulation, called holonomy spin foam models casts them as generalisations of lattice gauge theory. It is in the tradition of the formulations that arose in the context of the group field theory formulation \cite{DePietri2000,Reisenberger:2000zc} and in particular those of Oeckl and Pfeiffer \cite{Oeckl:2000hs,Pfeiffer:2001ig,Pfeiffer2002,Oeckl2003,Zapata:2004xq,Oeckl:2005rh}. As we will see in this paper this formulation is perfectly suited to study the geometricity of spin foam models at the level of the entire partition function.

The results and methods of this paper were first announced in \cite{HellKami}. Very recently Han independently found the same results using different methods \cite{Han2013a,Han2013b}.

\subsection{Spin foam models}

The prototypical example for spin foam models is the lattice quantisation of Horowitz' background field (BF) theory \cite{Horowitz:1989ng} as discussed in \cite{Baez:1997zt,Baez:1999sr}. The field content of BF theory in d space-time dimensions, with gauge group $G$ and Lie algebra $\lalg{g}$, is a $\lalg{g}$ valued d-2 form $B$ and a $\lalg{g}$-connection $\omega$ with curvature $F[\omega]$. Its action is given by

\[S_{BF}(B,\omega) = \int \tr B \wedge F(\omega).\]

The field $B$ acts as a Lagrange multiplier, the equations of motion are simply that $\omega$ is flat, and the field $B$ is covariantly constant. Discretizing this theory on a lattice naively leads to the spin foam model for $BF$ theory. For $\mathrm{G}$ equal to $\SU(2)$ and $d = 3$ we obtain the Ponzano-Regge model, the general model was given by Ooguri in \cite{Ooguri1992a}.

In order to write out this naive discretisation we introduce a number of structures. Consider a triangulated 4-manifold $M$. Now take the dual of this triangulation, that is, a vertex for every 4-simplex, an edge for every 3-simplex, a face for every 2-simplex and so on. This collection of faces, edges and vertices forms the 2-complex on which we will discretise our theory. We call the complex $\CC$, and the sets of vertices, edges and faces $\CC_v$, $\CC_e$ and $\CC_f$ respectively.

We will need a set of fiducial orientations and base points for every face of the 2-complex. We encode this by writing every face as an ordered list of the edges and vertices it contains, in the order determined by the orientation, starting and finishing at the base vertex. That is, every $f \in \CC_f$ is a list \[f = (v,e,v',e', \dots, e'',v).\] Adjacency amongst faces, edges and vertices is then written as $e,v \in f$, and analogously $v \in e$, and we write $(a,b,c) \subset f$ if $f$ contains $(a,b,c)$ as an uninterrupted list.

We can now give the naive discretisation of $BF$ theory. Our starting point is the formal path integral \begin{equation}\int[dB][d\omega] \exp(iS_{BF}) = \int[d\omega] \delta(F(\omega)).\end{equation}
We discretise the connection along the half edges $e \in \CC_e$, by taking $g_{ev} \in G$\footnote{We summarise our conventions for group elements, Lie algebra elements and coherent states in appendix~\ref{sec:conv}} to be the parallel transport from $v$ to $e$. We then define the face holonomy \[g_f = \prod_{(a,b) \subset f} g_{ab},\] where we use the convention $g_{ab} = g_{ba}^{-1}$. The path integral is then discretised as \begin{equation}\ZZ(\CC) = \int \prod_{v \in e} [d g_{ev}] \prod_f \delta(g_f).\label{eq:BF-prod_of_delta}\end{equation}

This product of delta functions forces the holonomy around every contractible loop on $\CC$ to be the identity. For finite groups it is well defined and gives, up to some normalisations, the number of group homomorphisms from the fundamental group of $M$ to $G$. For continuous groups the product of delta functions is a priori ill defined \cite{Bonzom:2010zh,Bonzom:2011br,Bonzom:2012mb,Bonzom:2010ar}. Surprisingly, Bahr showed that it can be turned into a genuine measure on the space of flat connections on $M$ \cite{Bahr:toappear}.

Note that if we replace the $\delta$ with regular plaquette weights $\omega$, equation \eqref{eq:BF-prod_of_delta} has exactly the form of the partition function of an ordinary lattice gauge theory. For example with $\omega$ chosen as heat kernels, we would obtain the standard heat kernel version of lattice Yang-Mills theory \cite{Oeckl:2005rh}. The spin foam models we want to consider arise as a generalisation of this form obtained by modifying the theory on the edges, while keeping the face weight distributional. This is motivated by the fact that the $BF$ action closely resembles the Palatini form of the action for general relativity. In fact, specialising to $d=4$, $G=\SO(4)$ or $G=\SO(3,1)$, we want to restrict $B$ to be of the form $\hodge e\wedge e$ for some vierbein $e$, as \[S_{P}(e,\omega) = S_{BF}(\hodge e\wedge e, \omega)\;.\]

The various ways to implement this restriction in the partition functions are discussed at length in the literature that defines the models we study. For the classical case we refer the reader to the original work of Capovilla et.al. \cite{Capovilla:1989ac,Capovilla:1991qb} and the recent work by Krasnov \cite{Krasnov:2009iy,Delfino2012,Krasnov2012a,Krasnov2011a,Krasnov2011b,Krasnov2010a}. As we discussed in \cite{Bahr2013,Dittrich2013} there is a wide class of models for which trying to implement this restriction at the discrete level \eqref{eq:BF-prod_of_delta} is parametrized by a single distribution $E$ on the group manifold. These models can then be most naturally written by using the twisted face holonomy \begin{equation}\tilde{g}_f = \prod_{(a,b,c) \subset f, b \in \CC_e} g_{ab}g_{bf}g_{bc}\;.\label{eq:twisted-face-hol}\end{equation} More elliptically this is just \[\tilde{g}_f = g_{ve}g_{ef}g_{ev'}g_{v'e'}\dots g_{v''e''}g_{e''f}g_{e''v}\;.\]
The models then are of the form \begin{equation}\ZZ(\CC) = \int \prod_{v \in e} [d g_{ev}] \prod_{e \in f} [d g_{ef}] E(g_{ef}) \prod_f \omega(\tilde{g}_f)\;.\end{equation}

This is the form of the partition function we will study in this paper. Note that for $E = \delta$ the additional group elements drop out and the partition function reduces to \eqref{eq:BF-prod_of_delta}. The fact that the $\omega$ at the face now operates on $\tilde{g}_f$ instead of on $g_f$ means that $g_f$ can now potentially vary away from the identity, even if $\omega = \delta$.

In this formulation it is easiest to include boundaries through the universal boundary Hilbert space of \cite{Dittrich2013}. This is achieved by specifying a boundary graph $\Gamma$ in $\CC$, and dropping the $g_{ev}$ integrations for the edges belonging to $\Gamma$, which we denote by $g^\Gamma_{ev}$:

 \begin{equation}\ZZ(\CC,\Gamma)[g^\Gamma_{ev}] = \int \prod_{\substack{v \in e\\ e \notin \Gamma_e}} [d g_{ev}] \prod_{e \in f} [d g_{ef}] E(g_{ef}) \prod_f \omega(\tilde{g}_f).\label{eq:Hol-SFM-partfunc}\end{equation}

In this paper, we will study the behaviour of this partition function in the limit of large spins in the boundary Hilbert space.


\subsection{Summary and results}

The main result of this paper is a sufficient condition for the partition function to exist as a distribution, and a necessary condition for the boundary data to be not suppressed exponentially.

Both of these conditions are formulated in terms of a set of bivectors and group elements on the interior of the 2-complex. These satisfy a set of equations and continue the boundary variables into the interior. The sufficient condition for the partition function to exist is that the bivectors on the interior can not get arbitrarily large without the bivectors on the boundary growing likewise. A necessary condition for the boundary data to not be suppressed is that it can be continued into a solution on the interior.

The solutions to the interior equations can be classified vertex by vertex using the classification theorems of \cite{Barrett:2009gg,Barrett:2010ex,Barrett:2009mw,Barrett:2009as,Hellmann:2010nf} into degenerate, classical $\SU(2)$ BF and geometric solutions, where the latter come in two orientations. We capture this formally, by defining a function $s(v)$ for each solution, $s(v): \CC_v \rightarrow \{\text{deg,BF,}+,-\}$ that takes the value deg for fully degenerate vertices, BF for the $\SU(2)$ solutions and $+$ and $-$ for the positively and negatively oriented solutions respectively.

Beyond the geometricity conditions the solutions on the interior also satisfy, face by face, the equation 
\begin{equation}\label{eq:AccCurvConst}
\gamma \Theta_f = 0 \mod 2 \pi\ ,
\end{equation}
where $\gamma$ is the Immirzi parameter, and $\Theta_f$ is an angle derived from the geometry of the vertices $v$ around the face $f$. If $s(v) = +$ or $s(v) = -$, that is, the interior solution is entirely geometric and consistently oriented around the face in question, $\Theta_f$ is the deficit angle of the holonomy around the face. 

Thus we show that a boundary geometry that can only be continued into the interior with a geometry that contains curvature not satisfying \eqref{eq:AccCurvConst}, is exponentially suppressed unless $\gamma = 0$. This issue arises due to the fact that only some of the equations encoding the geometry in the bivector and holonomy language are changed upon the introduction of $\gamma$. We thus consider this equation an accidental curvature constraint.

This can be seen as a refinement of the flatness problem first pointed out by Bonzom in \cite{Bonzom:2009hw,Bonzom:2009wm}. This refined result, and the proof strategy were first announced in \cite{HellKami}, then heuristically discussed for the Lorentzian case by Perini \cite{Perini2012}, and then derived exactly for the Lorentzian case by different means by Han \cite{Han2013a,Han2013b}, where furthermore explicit bounds for the suppression away from these solutions are given.

Note that our partition function is a convolution of distributions. The main mathematical tool developed to study such a product of distributions is their wave front set. This can be seen as the subset of the cotangent bundle over the space on which the distribution is defined for which the distribution is not suppressed in the limit of large covectors. We introduce this notion in detail in section \ref{sec:WaveFrontSets}. Our first aim is to collect results from the literature and adapt them to our setting of distributions on group manifolds. In section \ref{sec:CompAndMult} we discuss the underlying theorems that allow us to state the sufficient and neccessary condition for existence and non-suppressed behaviour, which we then combine into lemma \ref{lem:extension and integration}. Section \ref{sec:GeneralPartFunc} then applies this to the case of our partition function, expressing the wave front set of $\ZZ$ in terms of the of $\omega$ and $E$. Section \ref{sec:Analysis on Group manifolds} then reviews and adapts the technical results needed to derive the wave front set of $\omega$. This concerns mostly distributions with arguments that are products of group elements. The wave front set of $\omega$ provides us with the parallel transport and orientation equations for our interior data. We then apply this to the case of the partition function in section \ref{sec:FaceDist}.

Thus we have the wave front set of the partition function in terms of the wave front set of the distribution $E$. Up to this point all spin foam models that can be cast into the form \eqref{eq:Hol-SFM-partfunc} with typical face weights coincide. The $E$ function encodes the choice of simplicity constraints, and the next two sections deal with deriving the properties of the various simplicity constraints in use.

In the cases of BF theory and the Barrett-Crane model, discussed in section \ref{sec:BFandBC} the $E$ function is simply a delta function making this analysis straightforward. We first give the asymptotic equations of BF theory and see that they match exactly the classical equations of BF. In section \ref{sec:BC} we derive the equations for the Barrett-Crane model.

For the EPRL and the Freidel-Krasnov models the $E$ function is much more complicated. The derivation of their wave front set is the subject of section \ref{sec:EPRL-FK E func}. For the EPRL $E$ function we obtain the wave front set by analysing a set of differential equations that are solved by the EPRL function. In section \ref{sec:Prop Actions and diff equations} we review and develop how such differential equations restrict the wave front set of the distributions that solve them, in section \ref{sec:EPRL function} we derive the wave front set of the EPRL function, with much of the technical discussion given in appendix \ref{sec-E-der}. For Freidel-Krasnov we require results on the asymptotics of distributions defined in terms of coherent states. The main technical result is lemma \ref{lem-FK} proven in section \ref{sec:coherent_state_dist_kernel}. This is used in sections \ref{sec:The FK model gamma larger 1} and \ref{sec:FK bar} to derive the wave front set of the FK $E$ function.

In section \ref{sec:EPRL-FK partition function wave front set} we combine the results on the $E$ functions with the results of the previous section to give a complete set of equations for the interior set of bivectors and group elements.

In order to understand the geometric meaning of these equations we next turn towards describing geometric configurations in these variables in section \ref{sec:Geometry}. We first describe what type of bivectors and group elements can be constructed naturally from a discrete geometry in section \ref{sec:Hol and Biv from Geo}, and note that the equations they satisfy mirror those obtained from the wave front set. We then show how these equations change if we introduce a twisting parameter $\gamma$ that turns simple bivectors into twisted simple bivectors.

In section \ref{sec:Geom from Hol and Biv} we then discuss how to invert the above construction by deriving the geometry from the equations. To do so we first discuss the symmetries of the equations given and then discuss how to apply the reconstruction theorems from the literature to classify the solution space according the the $s(v)$ introduced above.

In section \ref{sec:Geom of PartFunc} we then combine the results of section \ref{sec:Geometry} and section \ref{sec:EPRL-FK partition function wave front set} to give a description of the space of solutions to the interior equations derived in section \ref{sec:EPRL-FK partition function wave front set} in terms of their geometric content. We identify $\Theta_f$ as the deficit angle of the geometry on the interior, and discuss in detail where there are mismatches between the geometricity equations and the wave front set equations. The only case where there is no mismatch is that of $\gamma = 0$ EPRL and FK. For $\gamma \neq 0$ we have the accidental curvature constraint above, and for the $\gamma = \infty$ models, $\overline{\text{FK}}$ and Barrett-Crane, we obtain exact flatness. For Barrett and Crane we furthermore have that the boundary geometry of neighbouring 4-simplices can differ.

In section \ref{sec:FurtherProperties} we then complete the above analysis and give an example. In section \ref{sec:face-wf} we show that the results hold for a very wide array of face weights discussed in the literature. In section \ref{sec:cond-fin} we show that the sufficient condition for existence can indeed be interpreted in terms of the interior becoming large while the boundary stays finite, as claimed in the beginning of this section. In \ref{sec:large j limit} we show that the wave front set analysis we present here completely determines the large spin limit. Finally in section \ref{sec:33move} we discuss our results concretely in terms of the 2-complex that occurs in the 3-3 Pachner move. This satisfies our necessary condition for existence and provides an explicit example where most Regge manifolds are suppressed in current spin foam models.

We conclude the paper with discussing the ramifications of our results in section \ref{sec:Discussion}, discussing in particular how the Regge equations of motion can occur in our analysis, how much of our results can survive regularisation and, potentially, renormalisation, and finally, what the presence of the accidental curvature constraints implies for the validity and interpretation of the models.

The appendices contain the derivation of the EPRL and FK wave front sets in appendix \ref{sec:WFS-Lemmas}, the discussion of how the boundary geometry should be interpreted for different boundary Hilbert spaces in appendix \ref{sec:ProjectedSpinNetworkBoundary}, and a detailed discussion of the classical equations of discrete BF theory that also arise in the wave front set analysis in appendix 
\ref{sec:BF}. Finally appendix \ref{sec:conv} contains a brief summary of the conventions  for group elements, Lie algebra elements and coherent states used throughout this paper.



\section{Wave front set calculus}\label{sec:WaveFrontSets}

In order to extract the geometric content of the partition function $\ZZ(\CC)$ we introduce a new tool into the study of spin foam models, the wave front set of a distribution \cite{lars,grigis1994microlocal}. The wave front set is a subspace of the cotangent bundle over the space on which the distribution is defined. Interpreting the distribution as a (generalized) wave function, it can be understood intuitively as the subspace of phase space on which the distribution is peaked in the limit of larger convectors.

We will now give the precise definition. Let $M$ be a smooth compact manifold and ${\calD}(M)$ the distributions over $M$. We denote $\{0\}$ the zero section of the cotangent bundle $T^*M$.
The wave front set $\WF(A) \subset T^*M$ of $A \in \calD(M)$ is defined as the complement of the set
of
elements $\{(x,p)\in T^*M\setminus\{0\}\}$ such that there exists a local coordinate
patch $U\times V$ containing $(x,p)$ with
\begin{equation}
 \forall \phi\in C^\infty_0(U),\ n\in {\mathbb Z}_+\colon\quad
\sup_{\tilde{p}\in V,\ \lambda>0}
\lambda^n \left|\int_U
e^{i\lambda \tilde p\tilde{x}} \phi(\tilde{x})A(\tilde{x})\dd\tilde{x}\right|<\infty
\end{equation}

In other words, the complement of the wave front set are those phase space points at which the limit $\lambda \rightarrow \infty$ falls of faster than any power. The wave front set finds those phase space points that are not suppressed in the limit of large momenta.

We always have $\{0\}\subset \WF(A)$, this is a minor deviation from the conventions in
\cite{lars,grigis1994microlocal}, which always exclude $\{0\}$, that simplifies our bookkeeping.
$\WF(A)$ is a geometric cone in $T^*M$. The singular support of a distribution $A$, 
\begin{equation}
 {\rm sing}\ A=\pi\left(\WF(A)\setminus\{0\}\right)\ ,
\end{equation}
is the projection of nonzero part of $\WF(A)$ into $M$.
Note that the distribution doesn't need to diverge at the singular support. It must, however,
fail to be smooth. The singular support consists exactly of the points where the distribution is not a smooth function.

For a group manifold $G$ (or a product of such) we have a nice coordinate system on $T^*G$. Every
point can be describe by left invariant covector and the group element. We denote such a pair by
$(g,p)\in (G,\mathfrak{g}^*)$.

\subsection{Compositions and multiplications}\label{sec:CompAndMult}

We want to derive the wave front set of the spin foam amplitude \eqref{eq:Hol-SFM-partfunc}, understood as a generalised state in the universal boundary Hilbert space. The partition function is formed by multiplying distributions on group manifolds together, and then integrating in the interior. This is of course not well defined a priori. The spin foam integrand \begin{equation}
\Zi = \prod_{e \in f} E(g_{ef}) \prod_f \omega(\tilde{g}_f)\label{eq:spinfoamintegrand}\end{equation} is the product of distributions on the same base manifold. In order to study when this can actually be defined unambiguously, we will cast their multiplication as restricting their exterior product to the diagonal subset of the direct product of their base manifolds.

That is, given a set of distributions $A^1(m),\dots,A^n(m) \in D(M)$, we consider their exterior product $\prod_{i=1\dots n} A^i(m^i) \in D(M^n)$ which is clearly well defined and then study when this product distribution can be restricted to the diagonal $m^1 = m^i = m$. If possible this then gives the direct product of the distributions. The reason for this strategy is that we can give a criterium for the possibility of this restriction in terms of the wave front set of the exterior product distribution. After restricting in this way we can integrate the product distribution on the interior, and obtain the sought after distribution.

Crucially the wave front sets behave naturally under all these operations, thus this will allow us to give the wave front set of the integrated product of the $A^i$ in terms of the wave front sets of the $A^i$. Applying this to the spin foam partition function we will obtain the wave front set of the spin foam in terms of the wave front sets of $\omega$ and $E$.

In order to disentangle the combinatorial aspects of working on a 2-complex from the juggling of
base manifolds and wave front sets, we will first give the discussion abstractly in terms of
distributions $A$. First, anticipating the structure of the spin foam integrand, note that some of
the distributions $A^i(m)$ might be constant on part of the space $M$. for example, if $M= M_1
\times M_2$, then it might be that we have $A^i(m) = A^i(m_1,m_2) = \tilde{A}^i(m_1)$. In this case
the wave front set of ${A}^i$ is given naturally in terms of that of $\tilde{A}^i$ by the following
property (this is a special case of property \ref{pro-prod} decsribed below):

\begin{pro}[Extension \cite{lars}]\label{pro-ext}
Let $\tilde{A}\in D(M_1)$ then through the projection $r:M_1\times M_2\rightarrow M_1$ we
can define the distribution $ A=r_*\tilde{A}$ by
\begin{equation}
 \int \dd m_1\dd m_2\ \tilde{A}(m_1,m_2) f(m_1,m_2)=\int
\dd m_1\dd m_2\ A(m_1)f(m_1,m_2)
\end{equation}
then
\begin{equation}
 \WF(\tilde{A})=\WF(A)\times \{0\}
\end{equation}
By an abuse of notation we will often denote both distributions by the same letter if it is obvious on which manifold they act\footnote{Generally $M$ might not be of the form $M_1 \times M_2$ but might be a fiber bundle over $M_1$, with projection onto the base $r: M\rightarrow M_1$, then the wavefront set of $\WF(\tilde{A})$ is the pullback of $\WF(A)$.}.
\end{pro}

This allows us to consider all $A^i$ to act on the same space even if the overlap between their
base manifolds is only partial to begin with. Thus, after extension, we can easily give the wave
front set of the exterior product (this is easily derived from the definition for appropriately
chosen coordinates and subsets):

\begin{pro}[Product \cite{lars}]\label{pro-prod}
The wave front set of the exterior product of distributions has the form
\begin{equation}
 \WF(A^1\times\cdots \times A^n)=\WF(A^1)\times\cdots\times \WF(A^n).
\end{equation}
\end{pro}

Note that $\WF(A^1)\times\cdots\times \WF(A^n) \subset (T^*M)^n$ lives in the direct product of the cotangent bundle of $M$.

We can now turn to the issue of restricting this exterior product to the diagonal. We will introduce the base diagonal subspace ${\rm Diag} (T^*M)^n$ of $(T^*M)^n$, given by the elements of $(T^*M)^n$ of the form $(m,m, \ldots,m;p^1,p^2,\dots,p^n)=(m;p^1,p^2,\dots,p^n)$. Similarly $\mathrm{Diag}(\WF(A^1)\times\cdots\times \WF(A^n))$ are the elements of $\WF(A^1)\times\cdots\times \WF(A^n)$ of this form. Generally the diagonal subspace of a subspace $W' \subset (T^*M)^n$, $\mathrm{Diag} (W')$ is the intersection $\mathrm{Diag} (W') = W' \cap {\rm Diag} (T^*M)^n$. We then introduce the restriction operator ${\mathcal D}$,
\begin{equation}
 {\mathcal D}\colon {\rm Diag\ }\left(T^*M\times\cdots\times T^*M\right)
\rightarrow T^*M
\end{equation}
which is defined as
\begin{equation}
{\mathcal D}(m;p_1,\ldots p_n)=(m,p),\quad p=\sum_i p_i\ .
\end{equation}

With this notation in hand we can now give a criterion for when the restriction to the diagonal is
well defined (Beals theorem see \cite{lars}):

\begin{pro}[Restriction \cite{lars}]\label{pro-res}
The restriction of a distribution $A$ on the product $M\times\cdots\times M$ to the diagonal
is well defined if
\begin{equation}
 {\mathcal D}^{-1}(\{0\})\cap\WF(A)=\{0\}.
\end{equation}
In other words, for all $\ww \in {\rm Diag \ } (\WF(A))$ we have that $D(\ww) = 0$ implies that $\ww = 0$.

The restriction of the distribution $A$ to the diagonal then has the wave front set a subset of
\begin{equation}
 {\mathcal D}({\rm Diag }(\WF(A)))
\end{equation}
\end{pro}

Having the wave front set of the restriction in hand, we can note that the wave front set also
behaves naturally under integration. Assume the base manifold $M$ splits into interior and exterior
$M_i \times M_e$. Let us further introduce $r_{ex}$ as the projection from $T^*(M_i \times M_e)$
onto $T^*M_e$, and $r_{in}$ as the projection onto  $T^*M_i$ We then have the following property of
wave front sets under integration (that naturally follows from the definition):

\begin{pro}[Integration \cite{lars}]\label{pro-int}
The wave front set of the distribution $A\in {\mathcal D}(M_i\times M_e)$ integrated over $M_i$
(compact) satisfies
\begin{equation}
 \WF\left(\int A\right)\subset r_{ex}\left(\WF(A)\cap T^*M_e\times \{0\}\right).
\end{equation}
In other words, a necessary though not sufficient condition for $\ww_e \in T^*M_e$ to be in $\WF\left(\int A\right)$, is that there is a $\ww \in \WF(A)$ such that $r_{ex} \ww = \ww_e$ and $r_{in} \ww = 0$.
Moreover, if there is exactly one such $\ww$, then the condition is also sufficient and $\ww_e = r_{ex} \ww$ is in $\WF(\int A)$\footnote{As before this property extends to the case of fibre bundles where the topology of $M$ is not trivial.}.
\end{pro}

We can combine the above properties. To do so it is convenient to introduce the notion of the interior closed subspace. For $W' \subset (T^*M_e \times T^*M_i)$ the interior closed subspace is defined as
\begin{equation}
W'_{\rm icl} = W' \cap \Diag (T^*M_e \times \DD^{-1}(\{0\}_i)).
\end{equation}

That is, it consists of those $(m_i,m_e;p_i^1,p_e^1,\dots,p_i^n,p_e^n)$ that satisfy $\sum_{k = 1\dots n} p_i^k = 0$.   This explains the name interior closed subspace, the interior momenta have to satisfy closure. We further introduce the operator $\partial = r_{ex} \circ \DD$ from $\Diag(T^*M)^n$ to $T^*M_{e}$ We can now state the following Lemma:

\begin{lem}\label{lem:extension and integration}
Given a base manifold $M=M_e \times M_i$ and a set of distributions $A^1,\dots,A^n$ on $M$, then

\begin{equation}
A = \int_{M_i} \dd m_i \prod_{k=1\dots n} A^k
\end{equation}
Is a well defined distribution on $M_e$ if
\begin{equation}
{\partial}^{-1}(\{0\}_e)\cap{(\WF(A^1)\times\dots\times \WF(A^k))_{\rm icl}}=\{0\}.
\end{equation}
or, in other words, for any $\ww \in (\WF(A^1)\times\dots\times \WF(A^k))_{\rm icl}$, $\partial \ww = 0$ implies $\ww = 0$.
Further we have that
\[\WF(A) \subset \partial \left(\WF(A^1)\times\dots\times \WF(A^k)\right)_{\rm icl}\]
\end{lem}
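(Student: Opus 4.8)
The plan is to prove the Lemma by composing Properties \ref{pro-prod}, \ref{pro-res} and \ref{pro-int} in sequence and then checking, by elementary bookkeeping, that the existence condition and the wave front set bound produced by this composition coincide with the compact form stated in terms of $\partial$ and the interior closed subspace. Since all the $A^k$ are already given on the common manifold $M=M_e\times M_i$, Property \ref{pro-ext} is not needed here; it is only what justifies reducing to this common-$M$ setting in the applications. So I would work entirely with Product, Restriction and Integration.

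First I would form the exterior product $B_0=A^1\times\cdots\times A^n\in\calD(M^n)$, whose wave front set is $W:=\WF(A^1)\times\cdots\times\WF(A^n)$ by Property \ref{pro-prod}. The product $\prod_k A^k$ on a single copy of $M$ is the restriction of $B_0$ to the total diagonal $m^1=\cdots=m^n$. By the restriction criterion (Property \ref{pro-res}) this restriction $B:=\prod_k A^k$ exists as a distribution on $M$ provided $\DD^{-1}(\{0\})\cap\Diag(W)=\{0\}$, and in that case $\WF(B)\subset\DD(\Diag(W))$. Writing $p^k=(p^k_e,p^k_i)$ under the splitting $T^*M=T^*M_e\times T^*M_i$, a diagonal element $(m;p^1,\dots,p^n)$ lies in $\DD^{-1}(\{0\})$ exactly when $\sum_k p^k=0$, i.e. simultaneously $\sum_k p^k_e=0$ and $\sum_k p^k_i=0$. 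The second step integrates $B$ over the compact interior $M_i$: since $M_i$ is compact, pairing $B$ with $\phi\otimes 1$ for $\phi\in C^\infty_0(M_e)$ is legitimate, so $A=\int_{M_i}B$ is automatically a well-defined distribution on $M_e$ as soon as $B$ exists, and by Property \ref{pro-int} its wave front set satisfies $\WF(A)\subset r_{ex}\bigl(\WF(B)\cap(T^*M_e\times\{0\}_i)\bigr)$.

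The only remaining work is to identify the two set-theoretic expressions. For the existence condition I would note that the interior closed subspace imposes $\sum_k p^k_i=0$, while $\partial=r_{ex}\circ\DD$, so that $\partial^{-1}(\{0\}_e)\cap W_{\rm icl}$ imposes in addition $\sum_k p^k_e=0$; hence it coincides exactly with $\DD^{-1}(\{0\})\cap\Diag(W)$, and the stated hypothesis is precisely the restriction condition required above. For the wave front bound I would establish the elementary identity $\DD(\Diag(W))\cap(T^*M_e\times\{0\}_i)=\DD(W_{\rm icl})$ — the intersection with $\{0\}_i$ selects exactly the diagonal elements with vanishing interior momentum sum, which is the defining constraint of $W_{\rm icl}$ — and then apply $r_{ex}$ to both sides to obtain $r_{ex}\bigl(\WF(B)\cap(T^*M_e\times\{0\}_i)\bigr)\subset\partial(W_{\rm icl})$, which is the claimed bound.

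I expect the main obstacle to be purely organisational rather than analytic: the danger lies in keeping the splitting $p^k=(p^k_e,p^k_i)$, the diagonal embedding $\Diag$, the summation map $\DD$ and the two projections $r_{ex},r_{in}$ mutually consistent, so that the hypotheses of Properties \ref{pro-res} and \ref{pro-int} line up correctly and the two reformulations visibly agree. Once the identity $\DD(\Diag(W))\cap(T^*M_e\times\{0\}_i)=\DD(W_{\rm icl})$ and the matching of the two conditions are checked, the Lemma follows immediately from the three Properties already invoked, with no additional estimates needed.
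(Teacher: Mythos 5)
Your proposal is correct and follows essentially the same route as the paper, which presents this Lemma precisely as the composition of Properties \ref{pro-prod}, \ref{pro-res} and \ref{pro-int}, with the interior closed subspace and the operator $\partial = r_{ex}\circ\DD$ introduced exactly so that the restriction condition and the integrated wave front bound take the stated compact form. Your bookkeeping identities — the equivalence of $\partial^{-1}(\{0\}_e)\cap W_{\rm icl}=\{0\}$ with the diagonal-restriction criterion, and $\DD(\Diag(W))\cap(T^*M_e\times\{0\}_i)=\DD(W_{\rm icl})$ — are exactly the verifications the paper leaves implicit.
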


\subsection{Wave front set of the partition function}\label{sec:GeneralPartFunc}

Having all these properties in hand we can now turn towards studying the spin foams partition function directly. To do so in a systematic way we regard the $\omega(g_{f})$ and $E(g_{ef})$ as distributions on all
variables (they are constant in the direction of most of them). That is, we treat them as
distributions on the space
\begin{equation}
\Spin(4)^{\CC_{ev}}\times\Spin(4)^{\CC_{ef}}
\end{equation}
on which the whole integrand is defined. This is our base space $M$. It splits into interior and exterior part as $M_e= \Spin(4)^{\Gamma_{ev}}$ and $M_i = \Spin(4)^{\CC_{ev}/\Gamma_{ev}}\times\Spin(4)^{\CC_{ef}}$.

We will write our extended distributions as $\omega^{(f')}(g_{ev},g_{ef}) = \omega(g_{f'})$ and {similarly} $E^{(e'f')}(g_{ev},g_{ef}) = E(g_{e'f'})$. This is the same extension as discussed in property \ref{pro-ext}.

The exterior product of all the distribution is thus defined on the space
\begin{equation}
(T^*M)^n = \left(\Spin(4)^{\CC_{ev}}\times\Spin(4)^{\CC_{ef}}\right)^{\CC_{f}\cup \CC_{ef}}
\end{equation}

We use the coordinates
\begin{equation}
 g^{(f')}_{ef},g^{(f')}_{ev}, g^{(e'f')}_{ef}, g^{(e'f')}_{ev}
\end{equation}
for this space, where the $g^{(f')}_{ef},g^{(f')}_{ev}$ are associated to the copy of the base space belonging to the distribution $\omega^{(f')}$ that occurs on the face $f'$ in the spin foam partition function and $g^{(e'f')}_{ef}, g^{(e'f')}_{ev},$ are associated to
the distribution $E^{(e'f')}$. Note that we do not assume any a priori connection between the upper label and the lower one.
The corresponding variables in the cotangent bundle are denoted by
\begin{equation}
 p_{ef}^{(f')},p_{ev}^{(f')},p_{ef}^{(e'f')},p_{ev}^{(e'f')}
\end{equation}

By property \ref{pro-prod}, the exterior product of the distributions in the spin foam amplitude, that is, the product of the distributions $\omega^{(f)}$ and $E^{(ef)}$ regarded as living on distinct bases spaces, has the wave front set
\begin{equation}
\begin{split}
W&=\bigotimes_{f} \WF(\omega^{(f)})\otimes \bigotimes_{ef} \WF(E^{(ef)})\\& \subset \left(
\Spin(4)^{\CC_{ev}}\times\Spin(4)^{\CC_{ef}}\times\spin(4)^{\CC_{ev}}\times\spin(4)^{\CC_{ef}}
\right)^{\CC_{ef} \cup \CC_{ev}}
\end{split}
\end{equation}

From property \ref{pro-ext} we immediately have that the elements of the wavefront set $W$ satisfy

\begin{align}\label{pro-1}
 &p_{ef}^{(f')}=0, \text{ if } f\not=f'
&p_{ev}^{(f)}=0 \text{ if } \{e,v\}\notin f \nonumber\\
 &p_{ef}^{(e'f')}=0, \text{ if } e\not=e' \text{ or } f\not=f'
&p_{ev}^{(e'f')}=0
\end{align}

As for the abstract discussion before, we now restrict this product to the diagonal, $\Diag (W)$ in order to obtain the spin foam integrand. The diagonal subspace of $(T^*M)^n$ can be paramterized by $g^{(e'v')}_{ev} = g^{(e'f')}_{ev} = g_{ev}$ and $g^{(e'v')}_{ef} = g^{(e'f')}_{ef} = g_{ef}$ for all $e'$, $f'$, $v'$. Using \eqref{pro-1}, we see that ${\rm Diag}(W)$ exists in a subspace of the whole tangent bundle that can be parametrized by the coordinates
\begin{equation}\label{eq:variables}
 g_{ef}=g^{(f')}_{ef}=g^{(e'f')}_{ef},\quad
 g_{ev}=g^{(f)}_{ev}=g^{(e'f')}_{ev},\quad
 p_{ef}^{(f)},p_{ev}^{(f)},p_{ef} = p_{ef}^{(ef)}
\end{equation}
The operator ${\mathcal D}$,
\begin{equation}
 {\mathcal D}\colon \Diag (W)
\rightarrow T^*\Spin(4)^{\CC_{ev}}\otimes T^*\Spin(4)^{\CC_{ef}}
\end{equation}
then acts as follows on these variables
\begin{align}\label{eq:DDinVariables}
&p'_{ev}=\sum_{f\ni(ev)} p^{(f)}_{ev}
&p'_{ef}=p^{(f)}_{ef}+p_{ef}\nn\\
&g'_{ev}=g_{ev},\
&g'_{ef}=g_{ef}
\end{align}

The properties of the integrand now follow from property \ref{pro-res}. It is well defined if \begin{equation}\DD^{-1}\{0\} \cap \Diag (W) = \{0\},\end{equation} and in that case its wave front set is given by \begin{equation}\WF(\ZZ_{\mbox{int}}) \subset \DD\Diag(W). \end{equation} In other words, using the variables \eqref{eq:variables} and the explicit action of $\DD$ given in \eqref{eq:DDinVariables} it is given by those $(g_{ev}, g_{ef},p'_{ev},p'_{ef})$, such that there exist $p^{(f)}_{ev},  p^{(f)}_{ef}, p_{ef}$ satisfying

\begin{itemize}
\item $p'_{ev}=\sum_{f} p_{ev}^{(f)}$,
\item $p'_{ef}=p^{(f)}_{ef}+p_{ef}$
\item forall $f$, $(g_{ev},g_{ef},g_{ev'},\ldots, p_{ev}^{(f)}, p^{(f)}_{ef},
p_{ev'}^{(f)},\ldots)\in \WF(\omega(\tilde{g}_f))$
\item forall $(ef)$, $(g_{ef},p_{ef})\in \WF(E(g_{ef})).\quad$
\end{itemize}

To obtain the wave front set of the partition function we then turn to Lemma \ref{lem:extension and integration}. To give the concrete statement consider first the form of $W_{\rm icl}$. In terms of our variables this Is the space satisfying

\begin{itemize}
\item $\sum_{f \supset e} p_{ev}^{(f)}=0$, for $(ev) \notin \Gamma_{ev}$
\item $p^{(f)}_{ef} = - p_{ef}$
\item forall $f$, $(g_{ev},g_{ef},g_{ev'},\ldots, p_{ev}^{(f)}, p^{(f)}_{ef},
p_{ev'}^{(f)},\ldots)\in \WF(\omega(\tilde{g}_f))$
\item forall $(ef)$, $(g_{ef}, p_{ef})\in \WF(E(g_{ef})).\quad$
\end{itemize}

This space is essential for what is to follow and thus we give it a new name, ${\WW}(\CC,\Gamma)$. The operator $\partial_{\Gamma}$ now explicitly acts as

\begin{equation}
(g_{ev},g_{ef}, p_{ev}^{(f)}, p^{(f)}_{ef}, p_{ef}) \in {\WW}(\CC,\Gamma) \rightarrow (g_{ev},p_{ev} = \sum_{f \supset e}  p_{ev}^{(f)}) \in T^*\Spin(4)^{\Gamma_{ev}}.
\end{equation}

Then Lemma \ref{lem:extension and integration} states that $\ZZ$ is well defined if
\begin{equation}\label{eq-exists}
\partial_{\Gamma}^{-1}\{0\} \cap \WW(\CC,\Gamma) = \{0\}
\end{equation}
and the wave front set is restricted by
\begin{equation} \WF(\ZZ(\CC,\Gamma)) \subset \partial \WW(\CC,\Gamma)\end{equation}

\subsection{Analysis on group manifolds}\label{sec:Analysis on Group manifolds}

In order to obtain wave front sets of the face amplitudes $\omega$ we need additional properties.
These two can be easily obtain from naturality of wave front set and from the following
useful property, a typical example in books on microlocal analysis {\cite{lars}:}

\begin{pro}[Delta] \label{pro-delta}Let $N\subset M$ be a smooth submanifold. Let
$\delta_N$ be a delta
function of $N$ (with respect to some smooth measure) then
\begin{equation}
 \WF(\delta_N)=\{(x,p)\colon x\in N,\forall_{p^*\in TN} (p,p^*)=0\}\cup \{0\},
\end{equation}
where $(p,p^*)$ is the natural pairing of the vector $p^*$ and the covector $p$.
\end{pro}

By the fact that the wave front set is a geometric cone we also have

\begin{pro}[Parallel transport]\label{pro-parallel} Let $A\in \calD(G)$ where $G$ is a Lie group. Define $\tilde{A}\in\calD(G\times\cdots \times G)$
\begin{equation}
 \tilde{A}(g_1,\ldots, g_n)=A(g_1\cdots g_n)
\end{equation}
then we can show that
\begin{align}
 \WF(\tilde{A})=\{& (g_1,p_1,\ldots,g_n,p_n)\colon
 (g_1\cdots g_n,p_n)\in \WF(A),\nn\\& \forall_{i < n} \;
p_i= g_{i+1} \act p_{i+1}\}
\end{align}
\end{pro}

and also

\begin{pro}[Inverse]\label{pro-inverse} For $A\in \calD(G)$ as above, let
\begin{equation}
 \tilde{A}(g)=A(g^{-1})
\end{equation}
then
\begin{equation}
 \WF(\tilde{A})=\{(g,p)\colon (g^{-1},-g\act p)\in \WF(A)\}
\end{equation}
\end{pro}

These three properties combine naturally to give us the wave front set of the face distribution
$\delta(gg'g''\dots g''')$. Let us first note that the delta function is invariant under the
inverse. So the wave front set of  $\delta(gg'g''\dots g''')$ is the same as  $\delta(g'''^{-1}\dots
g^{-1})$. This operation is the same as switching the fiducial face orientation in our partition
function. To avoid notational overload we will first discuss this wave front set separately. It is
convenient to think of the group elements as parallel transport between fiducial locations $1, 2, 3,
\dots , n, n+1 = 1$. We can then label the group elements going from location $1$ to location $2$ as
 $g_{21}$. The delta function then is of the form $\delta(g_{1n}g_{nn-1}\dots g_{21})$. We then also
have the inverses as $g_{12} = g_{21}^{-1}$. We can see that taking the inverse corresponds
to switching the order of locations in the delta function and we have $\delta(g_{1n}g_{nn-1}\dots
g_{21}) = \delta(g_{12}g_{23}\dots g_{n1})$. Note, however, that the equation in property
\ref{pro-parallel} is always from right to left. That is, we have $p_{32} = g_{21} \act p_{21}$ but
$p_{12} = g_{23} \act p_{23}$. The property \ref{pro-inverse} relates the two by giving us $p_{12} = -
g_{12} p_{21}$. From this we immediately have that $p_{12} = - p_{32}$.  In all the transport
equations, these two lie algebra elements occur only at location $2$, however, they differ in the
orientation in which they appear. 
Let us introduce $p^{2}_{13} = -
p^{2}_{31} = p_{12}$. That is, the upper index gives the location, the lower two indices the
order to which the bivector is associated ($1$ before $3$, or $3$ before $1$ respectively).
 The parallel transport equations now read more naturally as  $p^{2}_{13} = g_{21} \act
p^{1}_{02}$ and $p^{2}_{31} = g_{23} \act p^{3}_{42}$, along with
$p^1_{02} = - p^1_{20}$.

The complete statement then is that $\WF(\delta \prod_i g_{i+1\,i})$ is the set of $g_{i+1\,i}$, and $p^{i}_{i-1,i+1}$ satisfying $\prod_i g_{i+1\,i} = \id$ and $p^{i+1}_{i,i+2} = g_{i+1\,i} \act p^i_{i-1,i+1}$ with the $p^{i}_{i-1,i+1}$ associated to the tangent space of $g_{i+1\,i}$. This notation now is covariant under changing variables to $g_{i\,i+1} = g_{i+1\,i}^{-1}$, with the $p^i_{i+1, i-1}$ now associated to the appropriate tangent space variables.

\subsection{Face distribution}\label{sec:FaceDist}

We can now apply the above to the notationally more complicated case of the spin foam partition function. We assume that $\omega$ has the same
typical form used in all spin foam models and then by \ref{sec:face-wf} it has the same wave front set as the delta function, $\WF(\omega)=\{0,\mathfrak{g}^*\}\cup\{0\}$, and our discussion from above applies directly.

We have the group elements $g_{ev}$, $g_{ve}$ and $g_{ef}$. Whereas the role of the first two as parallel transport between middle of edge and vertex is clear, also notationally, $g_{ef}$ transports from the side of the edge $e$ preceding in the fiducial order $f$ to the the side of the edge succeeding. In order to stay notationally consistent with the previous papers on holonomy spin foams we will keep the notation $g_{ef}$ though.

Consider a face $f$ containing the sequence $(\dots e,v,e',v'\dots)$. The associated product in the face amplitude is $g_{ef}g_{ev}g_{ve'}g_{e'f}g_{e'v'}$. The Lie algebra elements $p^{(f)}_{ef}, p^{(f)}_{ev}, p^{(f)}_{ve'}, p^{(f)}_{e'f},p^{(f)}_{e'v'}$ are on the right of their respective group elements, that is, $p^{(f)}_{ev}$ lives at the vertex $v$,  it is transported by $g_{ev}$ to $p^{(f)}_{ef}$, which lives at the middle of the edge, but before $g_{ef}$, and so on. The transport equations are then

\begin{align}
p^{(f)}_{ef} =  g_{ev} \act p^{(f)}_{ev} ,&\quad p^{(f)}_{ev} =  g_{ve'} \act p^{(f)}_{ve'},\nn\\
p^{(f)}_{ve'} = g_{e'f} \act p^{(f)}_{e'f} ,&\quad p^{(f)}_{e'f} =  {g_{e'v'}} \act p^{(f)}_{e'v'}.
\end{align}

This suggests, as in the abstract discussion above, to change notation to respect the "location" of the Lie algebra elements. We accomplish this by writing $p^{(f)}_{ev} = p^v_{ee'}$, where the $ee'$ plays the role of both indicating the particular face, and the orientation in it. We want to keep $p^{(f)}_{ef}$, and always regard it as living before the group element $g_{ef}$ in the fiducial orientation of $f$. We denote $\tilde{p}^{(f)}_{ef}$ the Lie algebra element to the left of $g_{ef}$. Rewritten this way our transport equations become

\begin{align}
p^{(f)}_{ef} =  g_{ev} \act p^{v}_{ee'} ,&\quad p^{v}_{ee'} =  g_{ve'} \act \tilde{p}^{({f})}_{e'{f}},\nn\\
\tilde{p}^{({f})}_{e'{f}} = g_{e'f} \act p^{(f)}_{e'f} ,&\quad p^{(f)}_{e'f} = {g_{e'v'}} \act p^{v'}_{e'e''}.
\end{align}

\begin{figure}[htp]
 \centering
 \includegraphics[scale=.90]{}
 \caption{The covectors at edge $e'$ in a face $f=(\dots ,e, v,e',v',e'',\dots)$.}
 \label{fig-data0}
\end{figure}

We can make these equations (almost) independent of the fiducial face orientation by making the position of the $p^{(f)}_{ef}$ and $\tilde{p}^{(f)}_{ef}$, notationally explicit, and introducing the $p^v_{e'e} = -p^v_{ee'}$ associated to the opposite orientation, in analogy to the left and right arrows above. To do so we introduce $p^{e'}_{fv'} = p^{(f)}_{e'f}$ and $p^{e'}_{fv} = - \tilde{p}^{(f)}_{e'f}$. In this way the parallel transport equations simplify to the following three equations:

\begin{align}
p^{v}_{e'e} =  g_{ve} \act p^{e}_{fv} ,&\quad p^{e}_{fv} = - g_{ef} \act p^{e}_{fv'} \nn\\
p^v_{e'e} = -p^v_{ee'}.&
\end{align}

The whole set of covectors and group elements, along with the equations is given in figure \ref{fig-data0}. What is subtle in this picture is the identification with the cotangent variables so we summarize this again explicitly here. For a sequence $(v,e',v',e'')$ in a face $f$, as in figure \ref{fig-data0}, we have

\begin{align}
p^{(f)}_{e'v'} = p^{v'}_{e'e''},&\quad  p^{(f)}_{ve'} = p^{e'}_{fv}\nn\\
\text{and} & \quad p^{(f)}_{e'f} = p^{e'}_{fv'}\ .
\end{align}

Keeping these identifications in mind, the set of conditions for ${\mathcal W}(\CC,\Gamma)$ can now be written as follows
\begin{equation}
 g_{ef},g_{ev}, p^v_{ee'}, p^e_{fv}
\end{equation}
satisfy
\begin{itemize}
\item (transport equations)
\begin{align}
p^{v}_{e'e} =  g_{ve} \act p^{e}_{fv} ,&\quad p^{e}_{fv} = - g_{ef} \act p^{e}_{fv'} \nn\\
p^v_{e'e} = -p^v_{ee'}.&
\end{align}
\item (simplicity constraints) $(g_{ef},-p^{e}_{fv})\in \WF(E)$ for $(ev) \subset f$,
\item (holonomy) $\tilde{g}_f=\id$ unless any $p^{(f)}_{\cdot}=0$. In the case $p^{(f)}_{\cdot}=0$,
$g_{ef}$ and $g_{ev}$ are not restricted. Both $\tilde{g}_f$ and $g_f$ are free, as well.
\item (closure) for all $e\notin\Gamma_e$, and $v \in e$ we have $\sum_{f\ni e} p^{e}_{fv}=0$.
\end{itemize}
If $p^{(f)}_{\cdot}=0$, the conditions from $\WF(E)$ are also vacuous.

\section{BF theory and the Barrett-Crane model}\label{sec:BFandBC}

The remaining distribution left to analyse is the E-function. In the case of BF theory and the Barrett-Crane model the E-function is just a delta function again, and we need no further analysis to describe their wave front sets.

\subsection{BF theory}
As a consistency check we apply our method to BF theory. The $E$ function in this case is just
\begin{equation}
 E(g)=\delta(g)
\end{equation}
thus its wave front set is $\{\id,\mathfrak{g}^*\}\cup\{0\}$.

\subsubsection{Amplitude}

The simplicity condition in ${\mathcal W}_{\text{BF}}(\CC,\Gamma)$ is thus just
\begin{equation}
  g_{ef}=1,\quad p^{e}_{fv}=-p^{e}_{fv'}
\end{equation}
Thus the equations simplify to $g_f=1$, closure, and the transport equations. We can eliminate
$p^{e}_{fv}$ and obtain the following criterion that determines points of ${\mathcal
W}(\CC,\Gamma)$
\begin{itemize}
 \item $g_f=\id$, $g_{ef}=\id$
\item $p^{v}_{ee'}$ are pararell transported by $g_{ve} g_{ev'}$,
\item $\forall_{e\notin\Gamma}$ and $v \in e$, $\sum_f p^{e}_{fv}=0$
\end{itemize}
Following the same ideas as \cite{Barrett:2008wh} for the special case of the three dimensional $\su(2)$ BF theory (the Ponzano-Regge model), we can describe the image of the map $\partial$ as follows:
\begin{itemize}
 \item In the $2$-complex $\CC$ we have a well defined notion of contractible loop. For every such  {contractible loop in the boundary} $l=\{vev'e'\cdots\}$ we must have, \begin{align}
 g_l=g_{ve}g_{ev'}g_{v'e'}\cdots=1\ .
\end{align}
Such a configuration on the boundary can always be extended into the interior.
\item For $\Gamma$ we can introduce the same boundary operator in local homology as \cite{Barrett:2008wh}:
\begin{equation}
 d_1\colon C(\Gamma_e)\rightarrow C(\Gamma_v),\quad d_1(\mathcal{E})(v)=\sum_{e\ni v} \pm g_{ve}\rhd \mathcal{E}(e)
\end{equation}
where $C(\Gamma_e)$ and $C(\Gamma_v)$ are linear vector spaces generated by sets of $\su(2)$ labelled by $\Gamma_e$ and $\Gamma_v$, and $\mathcal{E} \in C(\Gamma_e)$.
The sign depends on the position of the vertex in relation to the orientation of the edge. We have natural map of local homology groups
\begin{equation}
 {\rm [incl]}: H_1(\Gamma,\su(2))=\ker d_1\rightarrow H_1(\CC,\su(2))
\end{equation}
The condition on bivectors is that the boundary $\pm p_{ev}$ (sign depending again on the orientation of the edge) should belong to ${\rm [incl]}^{-1}(0) \subset \ker d_1 \subset C(\Gamma_e)$ (see section \ref{sec:BF}).
\end{itemize}
These are indeed the equations of discretised BF theory. Thus we really recover the classical BF equations of motion with this method, validating that it is, at least in this case, a semiclassical limit. This can be seen as a generalisation of the asymptotics of the partition function of the Ponzano-Regge model derived in \cite{Dowdall:2009eg}.

Of course this wave front set is valid only if the conditions of Lemma \ref{lem:extension and integration} are satisfied
\footnote{i.e. $H_2(\CC,\su(2))=\{0\}$ for all flat parallel transports agreeing with boundary holonomy, see section \ref{sec:BF}.}. This excludes for example all foams with internal bubbles. This is not surprising as these foams can be infinite without regularization. These conditions are exactly satisfied by the no-tardis triangulations of \cite{Barrett:2008wh}.

\subsection{The BC model}\label{sec:BC}

The E-function of the Barrett-Crane model is also a delta function. To describe its support we need to go a bit deeper into the structure of the gauge group $\Spin(4)$ and its Lie algebra $\spin(4)$. Recall that $\spin(4)=\su(2)\oplus\su(2)$. We thus can write
\begin{equation}
 p=(\wec{p}^+,\wec{p}^-)
\end{equation}
where $\wec{p}^\pm$ can be identify with either combination of Pauli matrices or with three dimensional
vectors. In this form these Lie algebra elements generate group elements $g = (g^+,g^-)$, and the E-function can be written as 
\begin{equation}
 E(g)=\delta(g^+(g^-)^{-1}).
\end{equation}
The wave front set is described by
\begin{equation}
 g^+=g^-,\quad p=(\wec{n},-\wec{n}),\ n\in \su(2)
\end{equation}
So $g\act p\not= p$ in general. To describe this wave front set more geometrically let us introduce the 4-vector $N^0 = (1,0,0,0)$ and the Hodge dual $\hodge (\wec{p}^+,\wec{p}^-) = (\wec{p}^+,-\wec{p}^-)$ on the Lie algebra. Interpreting the Lie algebra elements $p$ as bivectors on $\R^4$ we can contract them with vectors. The condition $\exists n$ s.t. $p=(\wec{n},-\wec{n}),\ \wec{n}\in \su(2)$ can be written geometrically as $N^0 \cdot \hodge p = 0$. For more details on this bivectorology see e.g. \cite{Barrett:1997gw,Freidel:2007py,Barrett:2009gg,Hellmann:2010nf}.

\subsubsection{Amplitude}

The condition for ${\WW}(\CC,\Gamma)$ coming from $(g_{ef},-p^{e}_{fv})\in \WF(E)$ is
\begin{equation}
 g_{ef}^+=g_{ef}^-,\ N^0 \hodge p^e_{fv} = 0\ .
\end{equation}
As opposed to the case of BF theory and also the EPRL and FK model that we will analyse next, the Barrett-Crane model has $p^{e}_{fv}\not=-p^{e}_{fv'}$ because $g_{ef}$ does not in general stabilize
$p^e_{fv}$.
For every internal edge
\begin{equation}
 \sum_f p^e_{fv}=0.
\end{equation}
Moreover the $p^e_{fv}$ and $p^v_{ee'}$ satisfy the transport conditions and are simple bivectors. Anticipating the full geometric discussion that will follow in section \ref{sec:Geometry}, we can see that the $p^v_{ee'}$ satisfy the necessary conditions for reconstructing a 4-simplex geometry at the vertex $v$. The $p^e_{fv}$ then encode the boundary geometry. The fact that $p^e_{fv} \neq - p^e_{fv'}$ and in particular that there are distinct rotations $g_{ef}$ acting on the different faces, means that the boundary geometry at neighbouring 4-simplices does not necessarily agree. This is known as the ultra-locality problem of the Barrett-Crane model.

\section{The EPRL and FK model}\label{sec:EPRL-FK E func}

While for BF theory and the Barrett-Crane model the only distributions involved were $\delta$ functions, the EPRL and FK model have much more complicated distributions. In order to analyse them we will need to dvelve deeper into the structure of wave front sets.

\subsection{Actions and differential equations}\label{sec:Prop Actions and diff equations}

In order to derive the wave front set of EPRL/FK type simplicity functions we need more properties of the
wave front sets. One of the most important facts about microlocal analysis, see \cite{lars}, is the
following:

\begin{pro}[Invariance \cite{lars}]
Let $C$ be a pseudodifferential operator on $M$ and $c$ its principal
symbol (a homogeneous function on $T^*M$). If for $A\in\calD(M)$
\begin{equation}
 CA=0\text{ or } CA \text{ is smooth}
\end{equation}
then
\begin{equation}
 \WF(A)\subset \{c=0\}\cup \{0\}
\end{equation}
where $\{c=0\}$ is the subset of $T^* M\setminus \{0\}$ on which principal
symbol vanishes.
Wherever on $\{c=0\}$ the exterior derivative of $c$, $dc$, does not vanish ( $dc\not=0$), the set $\{c=0\}$ it is also invariant under the
hamiltonian flow generated by $c$.
\end{pro}

Thus we can deduce

\begin{pro}[Group invariance]
Let $G$ acts smoothly on $M$, then we have also a natural action
(symplectic) on $T^*M$. If $A\in\calD(M)$ is invariant under $G$ then
$\WF(A)$ is also invariant. If $G$ is the Lie group and the action is generated
by the vector fields $L\in {\mathfrak g}$ then
\begin{equation}
 \WF(A)\subset \{(x,p)\in T^*M\colon \forall_{L\in{\mathfrak g}}\ (p,L)=0\}
\end{equation}
where $(p,L)$ is the natural pairing of vectors and covectors.
\end{pro}

\subsection{The EPRL function}\label{sec:EPRL function}

We now turn towards the EPRL E-function. In order to apply the above properties of wave front sets we need to characterize the E-function in terms of its symmetries and through differential operators. As explained in more details in section \ref{sec-E-der}, the wave front set of the EPRL E-function is almost uniquely determined by the following properties:
\begin{enumerate}
\item $E$ is invariant under adjoint $SU(2)$ action
\item $E$ is the solution of $CE=0$
where
\begin{equation}
  C=|1-\gamma|\left(\sqrt{C_++\frac{1}{4}}-\frac{1}{2}\right)-(1+\gamma)\left(\sqrt{C_-+\frac{1}{4}}-\frac{1}{2}\right)
\end{equation}
where $C_\pm$ are the Casimirs of $\SU(2)_\pm$.
{This is the differential equation imposing the condition $|1-\gamma|j^+=(1+\gamma)j^-$ because
\begin{equation}
 \sqrt{C_\pm+\frac{1}{4}}-\frac{1}{2},
\end{equation}
return the value $j^\pm$ on irreducible representation.}
\item $E$ is also the solution of $\tilde{C}E=0$ where
\begin{equation}
 \tilde{C}=2\sqrt{C_++\frac{1}{4}}-(1+\gamma)\sqrt{L^2+\frac{1}{4}}+\frac{1}{2}(\gamma-1)
\end{equation}
where $L^2=(L_i^++L^-_i)^2$.
{This is the equation imposing the condition $2j^+=(1+\gamma)k$.}
\end{enumerate}

In order to give the EPRL wave front set geometrically we introduce the map $T_{\gamma} = 1+\gamma \hodge$ on the Lie algebra. This has the property that $T_\gamma T_{-\gamma} = 1-\gamma^2$. We are always away from $\gamma = 1$.

Theorem \ref{theo-eprl-unique}, proven in section \ref{sec-E-der}, then states that

\begin{theo}\label{theo-WF1}
The wave front set of a distribution satisfying the conditions above is the sum of sets of the type $W_\alpha$
($\alpha\in[0,\pi]$) defined by
\begin{align}
 W_\alpha=\left\{\left(e^{\alpha(\wec{n},\wec{0})+\xi\hodge T_{-\gamma}(\wec{n},\wec{n})}, T_\gamma(\wec{n},\wec{n})\right)\colon
\xi\in {\mathbb R},\wec{n}\in\su(2)\right\}
\end{align}
and the zero section $\{0\}$.
\end{theo}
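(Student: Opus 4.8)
The plan is to read off three pointwise constraints on $\WF(E)$ directly from the three defining properties, and to show that together they confine $(g,p)$ to $\bigcup_{\alpha\in[0,\pi]}W_\alpha$ together with the zero section. Throughout I write $p=(\wec p^+,\wec p^-)\in\su(2)^*\oplus\su(2)^*$ and $g=(g^+,g^-)\in\SU(2)\times\SU(2)$. First I would compute the principal symbols of $C$ and $\tilde C$ via the functional calculus of the Casimirs: since $C_\pm$ has principal symbol $|\wec p^\pm|^2$, the operator $\sqrt{C_\pm+\tfrac14}-\tfrac12$ is first order with symbol $|\wec p^\pm|$, and $\sqrt{L^2+\tfrac14}$ has symbol $|\wec p^++\wec p^-|$ because $L^2$ is the diagonal Casimir. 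Hence
\begin{equation}
c=|1-\gamma|\,|\wec p^+|-(1+\gamma)\,|\wec p^-|,\qquad
\tilde c=2\,|\wec p^+|-(1+\gamma)\,|\wec p^++\wec p^-|,
\end{equation}
both functions of $p$ alone. Applying the Invariance property to $CE=0$ and to $\tilde CE=0$ then gives $\WF(E)\subset(\{c=0\}\cap\{\tilde c=0\})\cup\{0\}$.

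Next I solve $c=\tilde c=0$ for $p\neq 0$. Setting $a=|\wec p^+|$, $b=|\wec p^-|$ and letting $\theta$ be the angle between the three--vectors $\wec p^\pm$, the first equation gives $b=\tfrac{|1-\gamma|}{1+\gamma}a$, which substituted into $(1+\gamma)^2(a^2+b^2+2ab\cos\theta)=4a^2$ forces $(1+\gamma)|1-\gamma|\cos\theta=1-\gamma^2$, i.e. $\cos\theta=+1$ for $\gamma<1$ and $\cos\theta=-1$ for $\gamma>1$ (we are away from $\gamma=1$). In both cases $\wec p^\pm$ are collinear with length ratio $(1+\gamma):|1-\gamma|$, which is exactly $p=T_\gamma(\wec n,\wec n)=\big((1+\gamma)\wec n,(1-\gamma)\wec n\big)$ for some $\wec n\in\su(2)$; the single symbol $T_\gamma$ absorbs both the parallel and antiparallel cases uniformly.

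The group element is then pinned down by the Group invariance property. Condition (1) gives $\WF(E)\subset\{(p,L)=0\}$, and pairing the diagonal $\su(2)$ fundamental vector fields with $p$ in the left trivialization yields $g^+\act\wec p^++g^-\act\wec p^-=\wec p^++\wec p^-$. Inserting $p=T_\gamma(\wec n,\wec n)$ and writing $\wec u=g^+\act\wec n$, $\wec v=g^-\act\wec n$ (both of length $|\wec n|$) turns this into $(1+\gamma)\wec u+(1-\gamma)\wec v=2\wec n$; eliminating $\wec v$ and imposing $|\wec v|=|\wec n|$ collapses to $\wec n\cdot\wec u=|\wec n|^2$, so by the Cauchy--Schwarz equality case $\wec u=\wec v=\wec n$. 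Thus both $g^+$ and $g^-$ fix $\wec n$, i.e. they are rotations about the axis $\wec n$. Since $(\wec n,\wec 0)$ and $\hodge T_{-\gamma}(\wec n,\wec n)=\big((1-\gamma)\wec n,-(1+\gamma)\wec n\big)$ span the two independent rotation angles of $g^+$ and $g^-$ about $\wec n$ (the change of basis is invertible for $\gamma\neq-1$), varying $(\alpha,\xi)$ with $\alpha\in[0,\pi]$ and $\wec n$ over $\su(2)$ reproduces precisely this locus. Together with the zero section this is $\bigcup_{\alpha\in[0,\pi]}W_\alpha\cup\{0\}$, as claimed.

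The main obstacle I expect is technical rather than structural. The operators $\sqrt{C_\pm+\tfrac14}$ and $\sqrt{L^2+\tfrac14}$ are classical pseudodifferential operators only away from $\wec p^\pm=0$ and $\wec p^++\wec p^-=0$, where the symbols $|\wec p^\pm|$ fail to be smooth, so the symbol computation and the application of the Invariance property must be justified carefully on these degenerate strata (the orbits with $\wec n=0$ simply collapse into the zero section). I would treat these square roots through the spectral calculus of the Laplacian and check that the non-smooth cone contributes nothing beyond $\{0\}$. A secondary point worth flagging is that the three properties only produce the containment $\WF(E)\subset\bigcup_\alpha W_\alpha\cup\{0\}$; this is exactly the sense in which the symmetries \emph{almost} determine the wave front set, and establishing that the bound is sharp (which $\alpha$ are actually realized) is deferred to the explicit analysis of the EPRL integral in section~\ref{sec-E-der}.
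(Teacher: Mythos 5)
Your computation of the principal symbols, your solution of $c=\tilde c=0$ (an angle-based variant of the paper's Lemma~\ref{lm-1}), and your Cauchy--Schwarz argument forcing $g^\pm$ to stabilise $\wec{n}$ (a variant of the triangle-inequality argument of Lemma~\ref{lm-2}) all match the paper's proof and are correct. The gap is in what you do --- and do not do --- afterwards. The theorem does not assert the containment $\WF(E)\subset\bigcup_{\alpha}W_\alpha\cup\{0\}$; it asserts that $\WF(E)$ \emph{is a union of complete sets} $W_\alpha$ (plus $\{0\}$): if one point of a given $W_\alpha$ lies in the wave front set, then all of $W_\alpha$ does. Your three pointwise constraints cannot produce this all-or-nothing orbit structure, and your closing remark that ``the three properties only produce the containment'' mischaracterises the statement being proved. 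The missing structure comes precisely from the parts of the hypotheses you never invoke: that $\WF(E)$ is \emph{invariant under the Hamiltonian flows} of $c$ and $\tilde c$ (the second half of the Invariance property) and invariant under the adjoint $\SU(2)$ action as a set (beyond the annihilation condition you use). The paper's proof computes the Hamiltonian vector field of $c$ on the locus you found: it is proportional to $(\gamma-1)L^+_{\hat{\wec{n}}}+(1+\gamma)L^-_{\hat{\wec{n}}}$, i.e.\ it points exactly along the $\xi$-direction inside $W_\alpha$, and the flow of $\tilde c$ is proportional to it on that locus. Flow invariance therefore forces $\WF(E)$ to contain entire $\xi$-lines through each of its points; adjoint $\SU(2)$ invariance and conicality then sweep out all axes $\wec{n}$ and all scales, so $\WF(E)$ must consist of whole $W_\alpha$'s.

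This stronger statement is not a technicality: it is exactly what section~\ref{sec:EPRL function} consumes. There, the dichotomy $\WF(E^\gamma_{\text{EPRL}})=W_0\cup\{0\}$ or $\WF(E^\gamma_{\text{EPRL}})=\{0\}$ follows because each complete $W_\alpha$ either projects into the singular-support candidate set $S$ or must be absent, and then non-smoothness of $E^\gamma_{\text{EPRL}}$ selects the first alternative \emph{as an equality}. With your containment alone one could at best conclude $\{0\}\subsetneq\WF(E^\gamma_{\text{EPRL}})\subset W_0\cup\{0\}$, never that all of $W_0$ is present, and the $\alpha$-by-$\alpha$ exclusion would be unavailable; also note that the ``which $\alpha$ are realized'' step happens in section~\ref{sec:EPRL function}, not in the appendix proof you defer to --- the appendix is where the orbit structure itself must be established. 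Two smaller points: your invertibility argument for the $(\alpha,\xi)$ parametrization treats the rotation angles as real numbers, whereas they are defined modulo $4\pi$, so confining $\alpha$ to $[0,\pi]$ needs the identifications $W^\pm_{\alpha+2\pi}=W^\pm_{\alpha}$ and $W^\pm_\alpha=W^\mp_{-\alpha}$ recorded after Theorem~\ref{theo-eprl-unique} (for rational $\gamma$ distinct values of $\alpha$ can even label the same set); and your worry about the symbols failing to be smooth at $\wec{p}^\pm=0$ is legitimate but equally present in the paper's argument, so it is not where your proposal diverges from it.
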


In the case $\gamma<1$ we know the the E function can be written as \footnote{Our convention here differs from that of \cite{Bahr2013,Dittrich2013} in that we absorb the dimension of the representation into the edge weight $d_e$, this will simplify the notation for the derivation of the wave front set.}\begin{equation}\label{formula-EPRL}
 E^\gamma_{\text{EPRL}}(g)=\sum_{k\in {\mathbb N}k_0} d_e(k)
\int_{S^2} \dd\wec{n}\ \langle n|g^+|n\rangle^{(1+\gamma)k}\langle
n|g^-|n\rangle^{(1-\gamma)k}
\end{equation}
where $k_0$ is the minimal $k$ such that both $j^\pm=\frac{1}{2}(1\pm\gamma)k$ are half-integer or integer.

Because $d_e(k)$ has at most polynomially growth, this is regular outside the set
where there exists an $\wec{n}$, such that
\begin{equation}
 \langle n|g^+|n\rangle^{(1+\gamma)k}\langle
n|g^-|n\rangle^{(1-\gamma)k}=1 \ .
\end{equation}
If this does not exist the summands fall off exponentially and the sum is a smooth function.
Since $2j^\pm_0$ are integer and $g^\pm$ stabilise $n$, the condition
\begin{equation}
 2j^+_0\ln\langle n|g^+|n\rangle+2j^-_0\ln\langle -n|g^-|-n\rangle=0\text{ mod } 2\pi
\end{equation}
can be rewritten as 
\begin{equation}\label{eq:theta-eprl}
 2j^+_0\frac{\theta^+}{2}+2j^-_0\frac{\theta^-}{2}=2\pi m,\quad m\in {\mathbb N}
\end{equation}
where $g^\pm=e^{\theta^\pm \wec{n}}$.  Notice that $2j^\pm_0$ are integer, $2j^+_0+2j^-_0=2k_0$ and $k_0$ is minimal thus $2j^\pm_0$ have no common divisors. We can always write $m=a_+\ 2j^++a_-\ 2j^-$ for some $a_\pm\in{\mathbb Z}$, then
\begin{equation}
 2j^+_0(\theta^+-4\pi a_+)+2j^-_0(\theta^--4\pi a_-)=0
\end{equation}
Thus, because angles are only defined modulo $4\pi$ 
it is enough to solve the equation \eqref{eq:theta-eprl} with $m=0$:
\begin{equation}
 ({\theta^+},{\theta^-})=\xi((1-\gamma),-(1+\gamma))\ .
\end{equation}
The singular support of the distribution must belong to
\begin{equation}S=\left\{(g^+,g^-)\colon \exists \wec{n}, {\xi}, g^\pm=e^{\pm
{\xi}(1\mp\gamma)\wec{n}}\right\}\ ,
\end{equation}
We know that 
\begin{equation}
 \pi(\WF(E^\gamma_{\text{EPRL}})\setminus\{0\})={\rm sing}\ E^\gamma_{\text{EPRL}}\subset S
\end{equation}
Combining this with theorem \ref{theo-WF1} and the fact that only for $\alpha=0$, $\pi(W_\alpha)\subset S$ we obtain
\begin{equation}
 \WF(E^\gamma_{\text{EPRL}})=W_0\cup\{0\}, \text{ or } \WF(E^\gamma_{\text{EPRL}})=\{0\}
\end{equation}
Given a normalised vector in $v_k\in V_{j^+,j^-}$ such that $L^2v_k=k(k+1)v_k$ we have a function
\begin{equation}
 \psi_k(g)= \langle v_k|D_{j^+,j^-}(g) |v_k\rangle
\end{equation}
From \cite{Bahr2013} we know that
\begin{equation}
\la \overline{\psi}_k|E^\gamma_{\text{EPRL}} \ra = {\frac{d_e(k)}{(2j^++1)^2(2j^-+1)^2}}\ .
\end{equation} 
If $d_e(k)^{-1}$ also grows at most polynomially, then by the condition given in section \ref{sec:smooth}, $E^\gamma_{\text{EPRL}}$ is not smooth.
Thus the only possibility is
\begin{equation}
 \WF(E^\gamma_{\text{EPRL}})=\left\{\left(e^{ \xi \hodge T_{-\gamma}\hat{p}}, T_{\gamma}p\right)\colon p=(\wec{n},\wec{n}),\ \wec{n}\in su(2)\right\}\cup
\{0\}.
\end{equation}
We introduce the notation for normalized bivectors
\begin{equation}
 \hat{p}=\frac{p}{|p|}
\end{equation}
Using the same more geometric notation as before,
\begin{equation}
 \WF(E^\gamma_{\text{EPRL}})=\left\{\left(e^{ \xi \hodge 
R_{\gamma}\hat{p}}, p\right)\colon \  N^0 \cdot T_{-\gamma}p=0 \right\}\cup
\{0\},
\end{equation}
{where in the last line we introduced the shortcut
\begin{equation}\label{eq:Rgamma}
 R_\gamma p=\frac{1}{\sqrt{1+\gamma^2}} T_{-\gamma}T_{\gamma}^{-1} p\ .
\end{equation}
Note that for normalized bivectors satisfying simplicity $N^0\cdot p=0$ we have 
\begin{equation}
 |T_\gamma p|=\sqrt{1+\gamma^2}|p|
\end{equation}
as can be seen by considering the action on the left and right part of the bivector separately.

A similar, slightly more complicated formula as \ref{formula-EPRL} is also valid for $\gamma>1$, and the analogous argument holds in that case.}

As opposed to the case of the Barrett-Crane model the wave front set contains only a subspace of all covectors annihilating the tangent space of the critical manifold. This
subset consists of those annihilating covectors that are stabilized by the group element {
\begin{equation}
 (g,p)\in \WF(E^\gamma_{\text{EPRL}})\setminus\{0\}\Leftrightarrow 
\left\{ \begin{array}{c}
         g\in {\rm sing}\ E^\gamma_{\text{EPRL}}\\
         g\act p=p\\
         p\perp T{\rm sing}\ E^\gamma_{\text{EPRL}}
        \end{array}
\right\}
\end{equation}
where} ${\rm sing}\ E^\gamma_{\text{EPRL}}$ is singular support of the distribution $E^\gamma_{\text{EPRL}}$
{ and $T{\rm sing}\ E^\gamma_{\text{EPRL}}$ is the tangent space to ${\rm sing}\ E^\gamma_{\text{EPRL}}$.}

\subsection{The FK model $\gamma>1$}\label{sec:The FK model gamma larger 1}

For $\gamma < 1$ the EPRL and the FK model coincide exactly. For $\gamma>1$ the FK E-function is given by
\begin{equation}
E_{FK}^\gamma(g)=\sum_{k\in{\mathbb N}k_0} d_e(k)\int \dd g'\ \langle
n|{g'}^{-1}g^+g'|n\rangle^{(1+\gamma)k}\langle -n|{g'}^{-1}g^-
g'|-n\rangle^{(\gamma-1)k}. 
\end{equation}

To understand the FK model for $\gamma > 1$ we need to first study the distribution \begin{equation}
E_n(g)=\sum_k d_e(k) \langle
n|g^+|n\rangle^{(1+\gamma)k}\langle -n|g^-|-n\rangle^{(\gamma-1)k}.
\end{equation}
Where $d_e(k)$ is an edge weight of at most polynomial growth. We will again write $j^\pm = \frac12|1\pm\gamma|k$. To understand this distribution using our properties from above we can write it using
coherent state kernel distribution \ref{sec:coherent_state_dist_kernel}
\begin{equation}
K_n(x,g)=\sum_{k\in {\mathbb N}k_0} e^{-i{\frac{k}{k_0}}x}\langle
n|g^+|n\rangle^{2j^+}\langle -n|g^-|-n\rangle^{2j^-}
\end{equation}
as 
\begin{equation}
 E_n(g)=\int\dd x\ d_e\left(i\frac{\partial}{\partial x}\right) K_n(x,g)
\end{equation}
We will prove in Lemma \ref{lem-FK} that
\begin{equation}
\begin{split}
 \WF(K_n){ =}\Big\{&
\left({2j_0^+}\ln\langle n|g^+|n\rangle+{2j_0^-}\ln\langle -n|g^-|-n\rangle,
g^+,g^-; -\xi,\wec{p}^+,\wec{p}^-\right)\colon \\
&\xi>0,\ \wec{n}=g^\pm\akt \wec{n},\
\wec{p}^\pm=\pm{ 2j_0^\pm}\xi \wec{n} \Big\}\cup \{0\}
\end{split}
\end{equation}
{ where $j^\pm_0=\frac{|1\pm\gamma|}{2}k_0$. }
Using this wave front set, together with the ellipticity property \ref{sec:ellipticity} and integration formulas  \ref{pro-int}  (with $r_{ex}$ injective) we can show that
\begin{equation}
\begin{split}
 \WF(E_n)= \{&(g,p)\colon { 2j_0^+}\ln\langle n|g^+|n\rangle+{ 2j_0^-}\ln\langle -n|g^-|-n\rangle=0 \mod 2\pi,\\
&\wec{p}^\pm=\pm{ 2j_0^\pm} \wec{n},\ \wec{n}=g^\pm \wec{n}\}
 \cup \{0\}\ .
\end{split}
\end{equation}
Since $2j^\pm_0$ are integer and $g^\pm$ stabilize $n$, the condition
\begin{equation}
 2j^+_0\ln\langle n|g^+|n\rangle+2j^-_0\ln\langle -n|g^-|-n\rangle=0\text{ mod } 2\pi
\end{equation}
can be rewritten as in \eqref{eq:theta-eprl} as
\begin{equation}
 2j^+_0\frac{\theta^+}{2}+2j^-_0\frac{\theta^-}{2}=2\pi m,\quad m\in {\mathbb N}
\end{equation}
where $g^\pm=e^{\theta^\pm \wec{n}}$. As in \eqref{eq:theta-eprl} we have
\begin{equation}
 ({\theta^+},{\theta^-})=\xi((1-\gamma),-(1+\gamma))
\end{equation}
The only solution of the conditions for $g^\pm$  is  $g=e^{\xi\hodge T_{-\gamma}(n,n)}$. The wave front set can be stated as
\begin{equation}
 \WF(E_n){ =} \{(g,T_\gamma p)\colon g=e^{\xi\hodge T_{-\gamma}{ \hat{p}}},\ p=(\wec{n},\wec{n})\}\cup\{0\}
\end{equation}
where as before $\hat{p}$ is the normalized $p$ bivector.

We can now compute in stages
\begin{equation}
 E^\gamma_{\text{FK}}(g)= \int_{S^2} \dd\wec{n}\ A(n,g),\quad 
A(n=[g'],g)= \int \frac{\dd\theta}{2\pi} A(e^{\theta \wec{n}} g'g(g')^{-1}e^{-\theta \wec{n}})
\end{equation}
Using in stages the properties  \ref{pro-prod} and \ref{pro-int} we get:
\begin{equation}\label{eq:FKWFS-conditions}
 \{(g,T_\gamma p)\colon \exists g',\ g'g(g')^{-1}=e^{\xi\hodge g'\act T_{-\gamma}\hat{p}}, g'\act T_{-\gamma}p=(\wec{n},\wec{n}),\
T_{\gamma} p\perp (L_i-g\act L_i)\}\cup\{0\}
\end{equation}
The only point is to prove that the application of the property \ref{pro-int} we obtain always the whole set.
For integration over $n$ the argument is that the projection is unique, whereas for integration over $\theta$ we can use 
\begin{equation}
 A(e^{\theta \wec{n}} g e^{-\theta \wec{n}})=A(g)
\end{equation}
Thus $A(g'g(g')^{-1})$ is the pullback of $A(n,g)$. Using property \ref{pro-ext} we can deduce equality also in this case.

The last condition of \eqref{eq:FKWFS-conditions} is tautologically satisfied, so we have
\begin{equation}
 \WF(E^\gamma_{\text{FK}}) = \{(g,T_\gamma p)\colon \ g=e^{\xi\hodge T_{-\gamma}\hat{p}}, p=(\wec{m},\wec{m}), \wec{m}\in \su(2)\}\cup\{0\}
\end{equation}
The wave front set is thus the same as in the case of the EPRL model. 
\begin{equation}
 \WF(E_{\text{FK}}^\gamma)= \{(g,p)\colon \ g=e^{\xi\hodge R_{\gamma}\hat{p}}, N^0\cdot T_{-\gamma}p = 0\}\cup\{0\}\ .
\end{equation}

\subsection{The $\overline{\mbox{FK}}$ model}\label{sec:FK bar}

The same analysis may be applied to the model Freidel and Krasnov proposed for the state sum without $\gamma$. We call this the $\overline{\mbox{FK}}$ model as it differs from the $\gamma = 0$ version by a complex conjugation in the $g^-$ sector. Explicitly its $E$ function is given by
\begin{equation}
 E_{\overline{\text{FK}}}(g)=\sum_k d_e(k)\int \dd g\ \langle
n|g^{-1}g^+g|n\rangle^{k}\langle -n|g^{-1}g^-g|-n\rangle^{k}. 
\end{equation}
The result is
\begin{equation}
 \WF(E_{\overline{\text{FK}}})= \{(g,p)\colon \ g=e^{\xi\hodge \hat{p}}, p=(\wec{m},-\wec{m}), \wec{m}\in \su(2)\}\cup\{0\}
\end{equation}

\subsection{Wave front set of the amplitude}\label{sec:EPRL-FK partition function wave front set}

We can now give the conditions on${\mathcal W}(\CC,\Gamma)$ for the various choices of E function given above. Note that as EPRL and FK have the same wave front sets, depending only on $\gamma$, we have the condition that $g_{ef}$ stabilizes $p_{ef}^{(f)}$ so we have
\begin{equation}
 p^{e}_{fv}=-p^{e}_{fv'}
\end{equation}
as in BF theory and in contrast to the BC model. Thus the transport equations split into two orientation equations and one parallel transport equation. We can thus summarize the conditions for configurations in ${\mathcal W}(\CC,\Gamma)$ as follows:
\begin{itemize}
 \item orientation and parallel transport equations:
\begin{align}
p^v_{e'e} = -p^v_{ee'} ,&\quad p^{e}_{fv} = - p^{e}_{fv'} \nn\\
p^{v}_{e'e} =  g_{ve} \act p^{e}_{fv}\ .&
\end{align}
\item simplicity: $N^0 \cdot T_{-\gamma} p^{e}_{fv} = 0$
\item closure: for all $e \notin \Gamma$, and $v \in e$ we have $\sum_{f\ni\ e} p^{e}_{fv}=0$.
\item holonomy: $\tilde{g}_f=1$ unless $p^{e}_{fv}=0$, otherwise no restriction,
\item $g_{ef}=e^{\xi_{ve}\hodge {R_\gamma\hat{p}^{e}_{fv}}}$ unless $p^{e}_{fv}=0$, otherwise
no restriction,
\end{itemize}
{With the notation of \eqref{eq:Rgamma}.}

To finally state or main result on the behaviour of the amplitude we then need to give the action of $\partial$ on these variables. In the 2-complexes we consider every boundary edge is 1-valent, that is, it borders exactly one face. Given a face $f$ containing a boundary edge $e \in \Gamma_e$, $f = ( \dots v,e,v',e' \dots)$ we have that $\partial_e$, that is, $\partial$ restricted to this boundary edge acts as follows on an element $\ww$ in $\WW(\CC,\Gamma)$:
\begin{align}
\partial_e:& \WW(\CC,\Gamma) \rightarrow T^* \Spin(4)_{ve} \times T^* \Spin(4)_{ev'}\nn\\
& \partial_e(\ww) = (g_{ve},-p^e_{fv},g_{ev'},p^{v'}_{ee'})
\end{align}
The operator $\partial_{\Gamma}$ is then just the product of the operators $\partial_e$, \begin{equation} \partial_{\Gamma}(\ww) = \times_{e \in \Gamma_e} \partial_e(\ww)\ .\end{equation}
We see that under $\partial_{\Gamma}$, $p_{ev}$ is twisted simple.

Changing variables from $g_{ve}$ to $g_{ev}$ we have
\begin{align}
\partial_e:& \WW(\CC,\Gamma) \rightarrow T^* \Spin(4)_{ve} \times T^* \Spin(4)_{ev'}\nn\\
& \partial_e(\ww) = (g_{ev},p^v_{ee''},g_{ev'},p^{v'}_{ee'})
\end{align}
{It can be written simply as $\partial_{e}=\partial_{ev}\times\partial_{ev'}$ where}
\begin{align}\label{eq:gev-bdry}
\partial_{ev}:& \WW(\CC,\Gamma) \rightarrow T^* \Spin(4)_{ev}\nn\\
& \partial_{ev}(\ww) = (g_{ev},p^v_{ee'})
\end{align}

Having the full set of equations for $\WW(\CC,\Gamma)$ we can now state our first main result of the paper
\begin{prop}\label{prop:first-main-result}
With $\partial_\Gamma$ and $\WW(\CC,\Gamma)$ given as above, we have that the partition function $\ZZ(\CC,\Gamma)$ exists as a distribution if
\begin{equation}\label{eq-exists2}
\partial_{\Gamma}^{-1}\{0\} \cap \WW(\CC,\Gamma) = \{0\}.
\end{equation}
In that case, its wave front set is restricted by 
\begin{equation} \WF(\ZZ(\CC,\Gamma)) \subset \partial_\Gamma \WW(\CC,\Gamma)\end{equation}
\end{prop}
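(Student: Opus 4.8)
The plan is to obtain the statement as a direct application of Lemma~\ref{lem:extension and integration}, since all of the combinatorial and microlocal input has already been assembled. I would take the family of distributions to be the extended face and edge weights $\{\omega^{(f)}\}_{f\in\CC_f}$ and $\{E^{(ef)}\}_{(ef)\in\CC_{ef}}$ on the base manifold $M = M_e\times M_i$, with $M_e = \Spin(4)^{\Gamma_{ev}}$ and $M_i = \Spin(4)^{\CC_{ev}/\Gamma_{ev}}\times\Spin(4)^{\CC_{ef}}$. Because $\Spin(4)=\SU(2)\times\SU(2)$ is compact and all index sets are finite, $M_i$ is a compact manifold, so the integration over the interior falls precisely under the hypotheses of the lemma.

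The microlocal input is supplied by the earlier properties. By Property~\ref{pro-prod} the exterior product has wave front set $W=\bigotimes_f \WF(\omega^{(f)})\otimes\bigotimes_{ef}\WF(E^{(ef)})$, while Property~\ref{pro-ext} forces the momentum components conjugate to the directions in which each factor is constant to vanish, giving \eqref{pro-1}. The only further ingredients are $\WF(\omega)$, which coincides with that of the delta function under the standard assumption on the face weight treated in Section~\ref{sec:face-wf}, and the wave front sets of $E$ computed in the preceding subsections.

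The substance of the proof is then the identification of the abstract objects of Lemma~\ref{lem:extension and integration} with the concrete ones built up above. I would check that the interior closed subspace $W_{\rm icl}$ is exactly $\WW(\CC,\Gamma)$: under the diagonal parametrization \eqref{eq:variables} and the explicit action of $\DD$ in \eqref{eq:DDinVariables}, the abstract closure requirement $\sum_k p_i^k=0$ becomes the interior closure $\sum_{f\ni e}p^e_{fv}=0$, and the remaining transport, simplicity and holonomy conditions are precisely those read off from $\WF(\omega)$ and $\WF(E)$ in Section~\ref{sec:FaceDist}. Likewise $\partial=r_{ex}\circ\DD$ restricts on $\WW(\CC,\Gamma)$ to the operator $\partial_\Gamma$ defined above. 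With these identifications in place the existence clause of the lemma becomes \eqref{eq-exists2} and its wave front bound $\WF(A)\subset\partial(\WF(A^1)\times\cdots)_{\rm icl}$ becomes $\WF(\ZZ(\CC,\Gamma))\subset\partial_\Gamma\WW(\CC,\Gamma)$, which is the claim.

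I expect the main obstacle to be purely one of bookkeeping rather than analysis: one must verify that the relabelling carried out in Section~\ref{sec:FaceDist} --- passing from the per-copy momenta $p^{(f)}_{ev},p^{(f)}_{ef}$ to the location-adapted $p^v_{ee'},p^e_{fv}$ --- is a faithful change of coordinates on $\Diag(W)$, so that the listed conditions for $\WW(\CC,\Gamma)$ reproduce $W_{\rm icl}$ with no momenta lost or double-counted. In particular the sign conventions relating $p^e_{fv}$ to $p^e_{fv'}$ through Property~\ref{pro-inverse}, and the consistent placement of $p^{(f)}_{ef}$ before rather than after $g_{ef}$ across all faces sharing an edge, have to be tracked carefully for the interior closure sum to assemble correctly.
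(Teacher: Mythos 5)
Your proposal is correct and follows essentially the same route as the paper: the proposition is obtained as a direct application of Lemma~\ref{lem:extension and integration}, with $W_{\rm icl}$ identified with $\WW(\CC,\Gamma)$ and $\partial=r_{ex}\circ\DD$ identified with $\partial_\Gamma$, exactly as assembled in sections~\ref{sec:GeneralPartFunc} and~\ref{sec:FaceDist} together with the computed wave front sets of $\omega$ and $E$. Your closing remark about verifying the change to location-adapted momenta and the sign conventions is the same bookkeeping the paper carries out in section~\ref{sec:FaceDist}, so nothing is missing.
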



\section{Geometricity in Bivector and Holonomy Variables}\label{sec:Geometry}

In the previous section the wave front sets for the BC, FK and the EPRL model were derived in terms of equations for the interior bivector and holonomy variables of $\WW(\CC,\Gamma)$. We know that the large spin asymptotics of the vertex weights can give a geometric interpretation to bivector variables \cite{Conrady:2008mk,Barrett:2009gg,Hellmann:2010nf}. We can easily identify the vertex equations necessary for this reconstruction in the wave front sets above, however, the meaning of the $g_{ef}$ and the various holonomy equations is a priori unclear.

In this section we will give the extended geometricity results that include these variables. To do so, we will first derive bivectors and holonomies from a given geometry, noting what equations they satisfy. We will then discuss to what extend a geometry can be reconstructed given bivectors and group elements that satisfy these equations. The result will be a set $\GG$ of variables satisfying equations that contains a geometric sector. Both its variables and equations closely resemble those of $\WW$. In the section after this one will study the precise relationship between these sets.

\subsection{Holonomies and Bivectors from Geometry}\label{sec:Hol and Biv from Geo}

We give a set of 4-simplices $\sigma^v$ in $\R^4$ associated to the vertices of $\CC$, such that the pullback of the flat metric and the standard orientation on $\R^4$ to the triangulated manifold dual to $\CC$ defines an orientation and a simplexwise flat, continuous, non-degenerate geometry. These simplices have boundary tetrahedra $\tau^v_e$, with outward normals $N^v_e$. The tetrahedra intersect at the triangles $t^v_{ee'}$. At these triangles we have area outward normals $A^v_{ee'}$ in the plane of the tetrahedron $\tau^v_e$, which satisfy $A^v_{ee'}\cdot t^v_{ee'} = A^v_{ee'}\cdot N^v_e = 0$ and $|A^v_{ee'}| = |t^v_{ee'}|$.

At the middle of the edges $e$ we now introduce a tetrahedron $\tau^e$ with the same geometry as $\tau^v_e$ and $\tau^{v'}_{e}$, normal to some \[N_v^e = -N_{v'}^e\] chosen such that the orientation it inherits from the standard orientation on $\R^4$ by reduction with the normal $N^e_v$ matches that of the orientation that $\tau^{v}_e$ inherits from $N^{v}_e$. This is possible as we required the 4-simplices to define a consistent orientation on the manifold.  We also again have triangles and area normals, that are now, however, indexed by a face, and called $t^e_f$ and $A^e_f$.

\paragraph{Holonomies:} We can now define the holonomies of the discrete connection, $G_{ev} \in \SO(4)$ by requiring that

\begin{equation}
\label{eq:G_{ev-def1}} G_{ev} \tau^v_e = \tau^e_v.
\end{equation}

From this we immediately have that, for $f$ being the face to which $e$, $v$, and $e'$ are adjacent, the outward normals behave naturally:

\begin{align}\label{eq:G_ev-def2}
G_{ev} N^v_e & = N^e_v, \nn\\
G_{ev} A^v_{ee'} &= A^e_f.
\end{align}

We also define $G_{ev} = G_{ve}^{-1}$. We can now also introduce the simplicity rotations $G_{ef}$. These are the interior dihedral rotations of the 4-simplices in the frames of the tetrahedra at the edges. Let $e'$ precede $v$ precede $e$ in the fiducial orientation around the face $f = (\dots e, v, e' , \dots)$, then we define:

\begin{align}\label{eq:G_ef-def}
G_{ef} t^e_f & = t^e_f, \nn\\
- G_{ve} G_{ef} G_{ev} N^v_{e'}&= N^v_e.
\end{align}

\begin{figure*}[htpb]
 \centering
 \includegraphics[scale=.8]{}
 \caption{Geometric quantities at edge $e^3$ in the face $f$, with the sequence of vertices and edges determined by the fiducial orientation being $f = (\dots, e^4,v^4,e^3,v^3, \dots)$.}
 \label{fig-data2}
\end{figure*}

The last line can equivalently be written as $G_{ef} G_{ev} N^v_{e'}= N^e_{v'}$, for $v'$ the vertex succeeding $e$ in the fiducial orientation, that is $f = (\dots v', e, v, e' , \dots)$. It follows that we have $G_{v'e} G_{ef} G_{ev} N^v_{e'}= N^{v'}_{e}$. This arrangement of group elements is given pictorially in figure \ref{fig-data2}.

We then define two group elements around a face $f$ with $n$ edges $e \dots e^{(n)}$ and orientation $f = (v,e^{(n)},v^{(n-1)},e^{(n-1)},\dots,v',e,v)$. The first is $G_f$:

\begin{align}\label{eq:tildeG_f-def} G_{f} &= \prod_{(a,b) \subset f} G_{ab}\nn\\ &= G_{ve^{(n)}} G_{e^{(n)}v^{(n-1)}} G_{v^{(n-1)}e^{(n-1)}} G_{e^{(n-1)}v^{(n-2)}} \dots G_{v'e} G_{ev}. \end{align}

This is the holonomy around the face $f$. From the definitions it is immediate that it satisfies

\begin{equation}\label{eq:tildeG_f-stabilityTriag} G_f t^v_{ee^{n}} = t^v_{ee^{n}},\end{equation}

but

\[ G_f N^v_{e} \neq N^v_{e}.\]

The angle between $G_f N^v_{e}$ and $N^v_{e}$ is exactly the deficit angle around the face $f$,
\begin{equation}
\label{eq:DeficitAngle}\Theta_f = 2\pi - \sum_{e \subset f} \xi_{ef},
\end{equation}
where $\xi_{ef}$ are the interior dihedral angles between $e'$ and $e$ at the vertex $v$ in $f = (\dots ,e,v,e', \dots)$ encoded in the interior dihedral rotations $G_{ef}$.
The other group element we can define is the analogue of the one that appears in the partition function,

\begin{align}\label{eq:G_f}\tilde G_{f} &= \prod_{(a,e,b) \subset f, e \in \CC_e} G_{ae} G_{ef} G_{eb}\nn\\ &= G_{ve^{(n)}} G_{e^{(n)}f} G_{e^{(n)}v^{(n-1)}} G_{v^{(n-1)}e^{(n-1)}} G_{e^{(n-1)}f} G_{e^{(n-1)}v^{(n-2)}} \dots G_{v'e} G_{ef} G_{ev}. \end{align}

This has the property that \[\tilde G_f t^{v}_{e(n)e} = t^{v}_{e(n)e}\] and that \[\tilde G_f N^{v}_{e(n)} = N^{v}_{e(n)},\] thus we have that \begin{equation}\label{eq:G_f=id} \tilde G_f = \id. \end{equation}

\begin{figure}[htp]
 \centering
 \includegraphics[scale=.70]{}
 \caption{The data around a face $f=(v^1,e^1,v^2,e^2,v^3,e^3,v^4,e^4,v^5,e^5,v^1)$.}
 \label{fig-data1}
\end{figure}

\paragraph{Bivectors:} From the geometric data we can further construct bivectors, or $\spin(4)$ elements, associated to the triangles. This is easiest using the area normals introduced above:

\begin{align}\label{eq:B-def}
B^v_{ee'} &= \hodge N^v_e \wedge A^v_{ee'},\nn\\
B^e_{fv} &= \hodge N^e_v \wedge A^e_f.
\end{align}

Here $\hodge$ is the hodge dual with $\hodge^2 = \id$. By construction these bivectors lie in the plane of the triangles $t^v_{ee'}$ and $t^v_f$ respectively, are oriented,
\begin{align}\label{eq:B-orient}
B^v_{ee'} &= - B^v_{e'e},\nn\\
B^e_{fv} &= - B^e_{fv'}.
\end{align}
and satisfy the parallel transport equation,
\begin{equation}\label{eq:B-parallel transport}
G_{ev} \act B^v_{ee'} = B^e_{fv}.
\end{equation}
It is also immediate to see that they satisfy simplicity,
\begin{equation}\label{eq:B-simplicity}
N^e_v \cdot B^e_{fv} = \hodge  N^e_v \wedge N^e_v \wedge A^e_f = 0,
\end{equation}
due to the antisymmetrization in the $\hodge$, and closure
\begin{equation}\label{eq:B-closure}
\sum_{f \ni e} B^e_{fv} = \hodge N^e_v \wedge \left(\sum_{f \ni e} A^e_f\right) = 0,
\end{equation}
due to the closure of the area outward normals of $\tau^e$. The arrangement of all these data around a face is illustrated in Figure \ref{fig-data1}.

Further as $G_{ef}$ stabilizes the triangle $t^e_f$ it is generated by a bivector orthogonal to the plane of $t^e_f$, which is thus proportional to $\hodge \hat B^e_{fv}$, where $\hat B$ denotes the normalised bivector. That is,
\begin{equation}
G_{ef} = \exp_{\SO(4)} (\xi_{ef} \hodge \hat B^e_{fv}),
\end{equation}
and a straightforward calculation shows that, as it has to be, $\tilde G_f$ can be written as
\begin{equation}\label{eq:G_f=id, interior and deficit angle}
\tilde G_f = \exp_{\SO(4)} (\Theta_f + \sum_{e \in f} \xi_{ef}) \hodge \hat B^v_{ee^{n}}.
\end{equation}

This elucidates the geometric meaning of equation \eqref{eq:G_f=id}. It expresses the matching of the deficit angle around a face and the sum of the interior dihedral angles.

\paragraph{Twisted Bivectors:} Finally, anticipating the presence of the Immirzi parameter in the wave front set, we can introduce the twisted bivectors and holonomies, given in terms of the rational number $\gamma$. We will twist the geometric data by using again the twisting operator $T_{\gamma} = 1 + \gamma \hodge$, acting on bivectors. Note that as, as pointed out above, $T_{\gamma} T_{-\gamma} = (1-\gamma^2) \id$, this operator is invertible for $\gamma \neq 1$ which is always assumed in these spin foam models. Thus acting with this operator does not change the content of the set, merely parametrizes it differently. The twisted holonomies and bivectors are then given by

\begin{align}\label{eq:BgGg-def}
\Bg^{e}_{fv}&=T_\gamma B^{e}_{fv},\nn\\
\Bg^{v}_{ee'}&=T_\gamma B^{v}_{ee'},\nn\\
\Gg_{ev}&=G_{ev},\nn\\
\Gg_{ef}&=\exp_{\SO(4)}{\xi_{ef} \hodge {T_{-\gamma}} \hat B^{e}_{fv}} = 
{\exp_{\SO(4)}{\xi_{ef} \hodge R_\gamma{}^\gamma\hat{B}^{e}_{fv}}.}
\end{align}
{Where we again used the shortcut (see \eqref{eq:Rgamma})
\begin{equation}
 R_\gamma p=\frac{1}{\sqrt{1+\gamma^2}} T_{-\gamma} T_\gamma^{-1} p\ .
\end{equation}}

From these we can define $\Ggf$ as before, ${G}_f$ does not involve twisted variables and remains unchanged.

As $T_{\gamma}$ commutes with the adjoint action of $\Spin(4)$ on bivectors, the stability properties of bivectors remain unchanged, and so do all equations that can be derived from group elements and their action on bivectors. That is, only the equations involving normals change. These are in particular simplicity \eqref{eq:B-simplicity} and \eqref{eq:G_f=id}:

\begin{align}
N^e_v \cdot B^e_{vf} &= 0,\nn\\
\tilde G_f &= \id.\nn
\end{align}

The first of these simply turns into a twisted form of simplicity:

\begin{equation}\label{eq:TwistedSimplicity}
N^e_v \cdot T_{-\gamma} \Bg^{e}_{vf} = 0.
\end{equation}

The second of these is slightly more complicated. Instead of just providing the interior dihedral angles, the $\Gg_{ef}$ now also contain a rotation in the plane of the triangle. We thus obtain the twisted face equation

\begin{equation}\label{eq:G_f=id,twisted}
\tilde G_f = \exp_{\SO(4)} \left[ \bigg(\Theta_f + \sum_{e \in f} \xi_{ef}\bigg) \hodge \hat B^v_{ee^{n}} - \gamma \sum_{e \in f} \xi_{ef} \hat B^v_{ee^{n}}\right] = \exp_{\SO(4)} \bigg({\gamma \sum_{e \in f} \xi_{ef} \hat B^v_{ee^{n}}}\bigg).
\end{equation}

\subsection{Geometry from Holonomies and Bivectors}\label{sec:Geom from Hol and Biv}

After giving the holonomy and bivector variables in terms of the geometry, and deriving a number of equations automatically satisfied by them, the question arises whether these equations are not just necessary for a set of bivectors and holonomies to arise from a geometry in the above sense, but also sufficient.

We begin by giving the precise set of equations we require our bivectors and holonomies to satisfy. First at the vertex we have from \eqref{eq:B-orient}, \eqref{eq:B-parallel transport}, \eqref{eq:B-closure}, and \eqref{eq:TwistedSimplicity}:

\begin{align}\label{eq:GeomEqs1}
\Bg^{v}_{ee'} &= - \Bg^v_{e'e},\nn\\
\Bg^e_{vf} &= - \Bg^e_{v'f}, (v,v' \in e)\nn\\
\Bg^v_{ee'} &= G_{ve} \act \Bg^e_{vf}, \nn\\
\exists N^v_e: N^v_e \cdot T_{-\gamma}\Bg^v_{ef} &= 0,\nn\\
\displaystyle \sum_{\substack{e' \ni v\\e' \neq e}} \Bg^v_{ee'} &= 0.
\end{align}

Using the second equation, the latter two equations can equivalently be written on the edges as

\begin{align}\label{eq:GeomEqs2}
\exists N^e_v: N^e_v \cdot T_{-\gamma}\Bg^e_{vf} &= 0,\nn\\
\displaystyle \sum_{f \supset e} \Bg^e_{vf} &= 0,
\end{align}
where we also have from \eqref{eq:BgGg-def}, that
\begin{align}\label{eq:GeomEqs3}
\exists \xi_{ef}:\Gg_{ef}=\exp_{\SO(4)}{\xi_{ef} \hodge T_{-\gamma} \hat B^{e}_{fv}}.
\end{align}

Finally, on the faces we have from \eqref{eq:tildeG_f-stabilityTriag} and \eqref{eq:G_f=id,twisted} that
\begin{align}\label{eq:GeomEqs4}
\Ggf \act \Bg^v_{e'e} &= \Bg^v_{e'e},\nn\\
{G}_f \act T_{-\gamma}\Bg^v_{e'e} &= T_{-\gamma}\Bg^v_{e'e}.\nn\\
\Ggf &= \exp_{\SO(4)}{{-\gamma \sum \xi_{ef} \hat{B}^v_{ee'} }}. 
\end{align}
Here the $\xi_{ef}$ have to be the same as in \eqref{eq:GeomEqs3}, and $v$ is the fiducial vertex of the face $f$. The first two lines are actually implied by the parallel transport conditions and \eqref{eq:GeomEqs3}. 

Any bivectors and group elements constructed from a geometry in the sense of the previous subsection satisfy these equations. However, they are not the only solutions to these equations. We denote the set of solutions $\Bg$, $\Gg$ to the above equations $\GG^\gamma_\CC$. The classification of solutions follows directly from the reconstruction theorems in the literature. Before we review these, though, we first need to give a set of symmetries that act on the space $\GG^\gamma_\CC$.

There are two kinds of such symmetries, the first kind derives from the geometric origin of the equation \eqref{eq:GeomEqs1} to \eqref{eq:GeomEqs4}, the second preserve only the equations on group elements and bivectors, but change their relationship to the geometric quantities.

\paragraph{Geometric:} We can act with $G_v$, $G_e$ in $SO(4)$ on the geometric 4-simplices and tetrahedra located at the vertices and edges. This induces a transformation on the $\Bg$ and $\Gg$ that preserves geometricity and thereby the above equations. Explicitly it acts as:

\begin{align}\label{eq:GeomSymmetries}
\sigma^v &\rightarrow G_v \sigma^v & \tau^e &\rightarrow G_e \tau^e \nn\\
N^v_e &\rightarrow G_v N^v_e & N^e_v &\rightarrow G_e N^e_v \nn\\
B^v_{ee'} &\rightarrow G_v \act B^v_{ee'} & B^e_{vf} &\rightarrow G_e \act B^e_{vf} \nn\\
G_{ev} &\rightarrow G_e G_{ev} G_v^{-1} & G_{ef} &\rightarrow G_e G_{ef} G_e^{-1}
\end{align}

In the rest of the paper we will fix the $N^e_{v}$ to be proportional to the universal reference vector $N^e_v = \pm N^0 = \pm(1,0,0,0)$. Then the $G_e$ take value in the $\SO(3)$ subgroup which preserves this reference vector. This can be seen as a partial gauge fixing.

Further there is a set of symmetries which leaves the geometric content of the configuration untouched but locally changes the orientation. Note that acting with an $\OO(4)$ element $P_v$ that is not in $\SO(4)$ on $\sigma^v$ we retain the same geometry. Thus it is still possible to map the tetrahedra $\tau^e$ to the boundary of $P_v \sigma^v$ by an $\SO(4)$ element, however, the normal $N^e_v$ will now not be taken to the outward normal $N^v_e$, but the inward normal $-N^v_e$. Thus the bivectors get transported not to the geometric bivectors of $P_v \sigma^v$, but to the negative geometric bivectors. Calling the reflection with respect to the plane of $\tau^e$, $P_e$, this symmetry thus acts as \begin{align}
G_{ve} &\rightarrow P_v G_{ve} P_e, \nn\\
B^v_{ee'} &\rightarrow P^{-1}_v \act B^v_{ee'}.
\end{align}
The resulting bivectors do not have the same relationship to the geometry as before but are instead related to the geometric ones by a sign. The issue of these orientations is discussed in detail in the literature \cite{Barrett:2009cj,Barrett:2009gg,Hellmann:2010nf}.

Finally there is a global scaling symmetry that sends all $B$ to $\lambda B$. This will be important to understand when the partition function suffers from existence failure. We will come back to this issue in \ref{sec:cond-fin}.

\paragraph{Nongeometric:} The non-geometric symmetry affects the $G_{ef}$ and is novel to the setup we consider here. It acts as

\begin{equation} \Gg_{ef}  \rightarrow \Gg_{ef} \exp_{\SO(4)}(\xi'_{ef} {\hodge T_{-\gamma} }\hat{B}^e_{vf}), \label{eq:Symmetry-U(1)}\end{equation}
with $\sum_{e \in f} \xi'_{ef} = 2 \pi \N$. This transformation of course destroys the geometric interpretation of the individual $\Gg_{ef}$.

\paragraph{Reconstruction:} We will now use the reconstructions theorems from the literature to classify the elements in $\GG^\gamma_\CC$.

\paragraph{At the vertex:} Let us first consider the equations $\eqref{eq:GeomEqs1}$. These equations, and the extend to which the characterise a geometry, have received considerable attention in the literature \cite{Barrett:1993db,Barrett:1998gs,Conrady:2008ea,Conrady:2008mk,Barrett:2009cj,Barrett:2009gg,Barrett:2009as,Hellmann:2010nf,Barrett:2010ex,Engle:2011ps,Han:2011re,Han:2011rf}.

By theorems proved a long time ago in a galaxy far far away \cite{Barrett:1998gs,Conrady:2008ea,Barrett:2009cj,Hellmann:2010nf}, $\eqref{eq:GeomEqs1}$ suffices to reconstruct a geometry and orientation at every vertex if there are two additional non-degeneracy assumptions satisfied.

These non-degeneracy assumptions are the bivectorial version of the non-degeneracy of the geometry we assumed at the outset:
\begin{itemize}
 \item[3d:] The $A^e_f$ for fixed $e$ span a full 3-dimensional space. Equivalently, every set of three $B^e_{vf}$ for fixed $e$ and $v$ is linearly independent.
 \item[4d:] The $N^v_e$ span full 4-dimensional space. Equivalently, for every edge $e''$ at the vertex $v$ the six bivectors $B^v_{ee'}$ with $e$, $e' \neq e''$ span the full space of bivectors.
\end{itemize}

The failure of the second condition is well understood, and leads to a set of $\SU(2)$ $BF$ solutions \cite{Barrett:2009as}. At fixed bivector length this sector is five dimensions larger than the geometric sector and dominates in the Barrett-Crane vertex amplitude asymptotics \cite{Baez:2002rx}.

Further clearly a failure of the 3d condition implies a failure of the 4d condition. Thus the elements of $\GG^\gamma_\CC$ can be classified according to whether, at each vertex, the non-degeneracy assumptions are satisfied not at all, at the 3d level, or at the 4d level. In the latter case we can further differentiate between the two possible orientations. That is every solution defines a function $s(v)$, assigning every vertex either $+$, $-$, $\SU(2)$, or deg splits the solution space. The only fully oriented geometric solutions are when $s(v)=+$ for all $v$ or $s(v) = -$. Note that the orientation changing symmetry discussed above, acting at a vertex $v$, changes the $s(v) = +$ to $s(v) = -$ when in the geometric sector.

\paragraph{On the face:} The equations \eqref{eq:GeomEqs1} determine the geometry at the vertex, and thus already fix $\Gg_{ev} = G_{ev}$ to be a geometric connection. The geometric meaning of the $\Gg_{ef}$ is fixed by equation \eqref{eq:GeomEqs3}, and the last line of \eqref{eq:GeomEqs4}. Equation \eqref{eq:GeomEqs3} implies that the $\Gg_{ef}$ are fixed by the geometricity at the vertex up to their angle of rotation $\xi_{ef}$. To simplify the discussion we can now use the geometric symmetries to put us into a frame where all bivectors at a face equal some fiducial bivector, \[B^f = B^v_{ee'} = B^e_{vf}, \mbox{with} \, (e,v,e') \subset f.\] We then have

\begin{equation}\label{eq:Twisted-interwoven-face-holonomy}
\Ggf = \exp_{\SO(4)} \left[ \bigg(\Theta_f + \sum_{e \in f} \xi_{ef}\bigg) \hodge \hat B_f - \gamma \sum_{e \in f} \xi_{ef} \hat B_f\right]
\end{equation}

The last line of \eqref{eq:GeomEqs4}, which requires that \[\Ggf = \exp {-\gamma \sum_{e \in f} \xi_{ef} \hat B_f},\] now fixes the two angles of rotation:

\begin{align}\label{eq:FaceAngles}
\Theta_f + \sum_{e \in f} \xi_{ef} &= 0 \mod 2\pi\nn\\
\gamma \sum_{e \in f} \xi_{ef} &= \gamma \sum_{e \in f} \xi_{ef} \mod 2\pi
\end{align}

These equations, being invariant under the geometric symmetries, apply in all frames. Now using the symmetry \eqref{eq:Symmetry-U(1)}, we can indeed fix the $\xi_{ef}$ to be the interior dihedral angles, such that $\Gg_{ef}$ are just the geometric ones defined by \eqref{eq:BgGg-def}.

Thus, if for an element of $\GG^\gamma_\CC$, all vertices at a face $f$ are of the type $s(v)=+$ or are all of the type $s(v)=-$, the $\Bg$ at the vertices around that face, and the $\Gg_{ev}$, and $\Gg_{ef}$ around that face are the geometric ones up to the symmetry \eqref{eq:Symmetry-U(1)}.

\section{Geometry of the EPRL/FK wave front set: Accidental curvature constraints}\label{sec:Geom of PartFunc}

We can now match the set $\GG^\gamma_\CC$ to the interior closed set $\WW(\CC,\Gamma)$ arising in the wave front set analysis. Let us first recall the sets $\WW$ in question:

$\WW$ is given by the set of variables
\begin{equation}
 g_{ef},g_{ev}, p^{e}_{fv}, p^v_{ee'}
\end{equation}
satisfying
\begin{itemize}
 \item orientation and parallel transport equations:
\begin{align}
p^v_{e'e} = -p^v_{ee'} ,&\quad p^{e}_{fv} = - p^{e}_{fv'} \nn\\
p^{v}_{e'e} =  g_{ve} p^{e}_{fv}\ .&
\end{align}
\item closure: for all $e \notin \Gamma$, and $v \in e$ we have $\sum_{f\ni\ e} p^{e}_{fv}=0$.
\item holonomy: $\tilde{g}_f=1$ unless $p^{e}_{fv}=0$, otherwise no restriction,
\item simplicity: $(g_{ef},p^e_{fv})\in \WF(E)$,
\end{itemize}

We have two different type of simplicity constraints, the Barrett-Crane type and the FK/EPRL type. These are respectively

\begin{itemize}
\item (BC) $(g_{ef},p^e_{fv})$ such that $N^0 \cdot \hodge p^e_{fv} = 0$, $g_{ef} \act N^0 = N^0$.
\item {($\overline{\text{FK}}$) $(g_{ef},p^e_{fv})$ such that $N^0 \cdot \hodge p^e_{fv} = 0$, there exists a $\xi_{ef}$ such that $g_{ef}=e^{\xi\hodge \hat{p}^e_{fv}}$.}
\item (EPRL/FK$_{\gamma}$) $(g_{ef},p^e_{fv})$ such that ${N^0 \cdot T_{-\gamma} p^e_{fv}} = 0$, there exist a $\xi_{ef}$ such that {$g_{ef} = \exp(\xi_{ef} \hodge R_\gamma\hat{p}^e_{fv})$}
\end{itemize}

\subsection{{FK and EPRL for $\gamma=0$}}

We will begin with the simplest case, that of $FK = 0$. In this case we see that $\WW$ exactly covers the geometric set $\GG$. That is, the variables
\begin{equation}
(G_{ef}, G_{ev}, B^e_{vf}, B^v_{ee'})=(\hat{g}_{ef},\hat{g}_{ev},p^e_{vf},p^v_{ee'}),
\end{equation}
where $\hat{g}$ is the $\SO(4)$ element associated to the $\Spin(4)$ element $g$, satisfy all equations of $\GG$. Recall that the set $\GG$ decomposes according to the function $s(v)$, which associates to each vertex the level of degeneracy it satisfies. The same decomposition then applies in $\WW$. We have a further ambiguity in that for every element $\ww$ in $\WW$ that covers an element $\hat{\ww}$ in $\GG$ we have that for functions $l_{ev}: \CC_{ev} \rightarrow \pm$ and $l_{ef}: \CC_{ev} \rightarrow \pm$ satisfying $\prod_{(ev) \in f}l_{ev}\prod_{e \in f}l_{ef} = +$, the element $\ww'$ of $\WW$ given by changing $g_{ev} \rightarrow l_{ev} g_{ev}$, $g_{ef} \rightarrow l_{ef} g_{ef}$ cover the same element $\hat{\ww}$ of $\GG$. Thus we see that the association between elements of $\WW$ and $\GG$ is not injective.

Conversely it is always possible to find a set of $g_{ev}$, $g_{ef}$ covering a given element $\hat{g}$, and such that the equation $\tilde{g}_f = 1$ is satisfied by adjusting the coverings of $g_{ef}$

Thus we see that for $FK_{0}$, the set $\WW{\CC}$ splits according to the function $s(v)$ giving the degeneracy at the vertex, and the functions $l_{ev}$, $l_{ef}$ act freely on each of these layers.

Note that this argument use extensively the property of the $\WF(E^0_{\rm FK})$ that
\begin{equation}\label{eq:gef-lift}
 (g_{ef},p_{ef})\in \WF(E^0_{\rm FK})\Rightarrow (-g_{ef},p_{ef})\in \WF(E^0_{\rm FK})\ .
\end{equation}

\subsection{FK$_\gamma$, EPRL$_\gamma$}

Whereas in the previous case we had perfect matching between the geometric equations and the wave front set equations, this will break down here. The elements of a covered solution $\hat{\ww}$, satisfy all the equations of $\GG^\gamma$ except for \eqref{eq:G_f=id,twisted}:
\[
\Ggf = \exp_{\SO(4)} \bigg({-\gamma} \sum_{e \in f} \xi_{ef} \hat B^v_{ee^{n}}\bigg).
\]

Instead we have $\Ggf = \id$, as for the case $\GG^0$. That means, that instead of equation \eqref{eq:FaceAngles}, we have,
\begin{align}
\Theta_f + \sum_{e \in f} \xi_{ef} &= 0 \mod 2\pi\nn\\
\gamma \sum_{e \in f} \xi_{ef} &= 0 \mod 2\pi\ .
\end{align}
In \eqref{eq:FaceAngles}, and in the case $\gamma = 0$, the second equation is tautological. Here it is revealed as the origin of the so called flatness problem first pointed out by Bonzom.

Note that all other equations are unchanged, that is, elements of $\WW^\gamma$ still cover elements of $\GG^\gamma$. Thus in particular, the reconstruction theorems apply, and $\WW$ splits according to $s(v)$ again. Given a face $f$ such that $s(v) = +$ for all $v \in f$, that is, given a face in the geometric sector, these equations imply that the dihedral angle, $\Theta_f$, satisfies
\begin{equation}
\gamma \Theta_f = 0 \mod 2 \pi\ .
\end{equation}

Note that, as opposed to the case before, due to the presence of $\gamma$ the wave front set in general no longer satisfies the property \eqref{eq:gef-lift}\footnote{{
This property fails for example for $\gamma=\frac{1}{3}$, but if $\gamma$ is of the form $\frac{p}{q}$ where only either $p$ or $q$ is divisible by $2$ the condition \ref{eq:gef-lift} stil holds.}}.
It is no longer clear that every geometric configuration can be lifted. Thus it is in principle possible that despite the existence of compatible geometric configuration, the amplitude is suppressed due to this kind of spin structure obstruction.

\subsection{BC and $\overline{\text{FK}}$}
The BC and $\overline{\text{FK}}$ models structurally resemble the limit $\gamma \rightarrow \infty$. This is reflected in their geometry.

For the Barrett-Crane model the elements of $\WW$ no longer necessarily cover elements of $\GG$. The vertex reconstruction still proceeds as before, however, as $G_{ef} B^{e}_{fv} \neq B^{e}_{fv'}$, and the different $G_{ef}$ at an edge are uncorrelated, the geometry of the tetrahedron described by the bivectors $B^e_{fv}$ is changed.

That is, for BC the parallel transport condition $G_{ve}G_{ev'} \tau^{v'}_e = \tau^v_e$ fails. The geometry of neighbouring 4-simplices becomes uncorrelated.

The $\overline{\text{FK}}$ model differs from the BC model in this point, there the parallel transport condition $G_{ef} B^{e}_{fv} = B^{e}_{fv'}$ holds. The reconstruction proceed thus as in FK$_\gamma$ and EPRL$_\gamma$ case.

However for both models the flatness problem remains in its extreme form. We have simply that
\begin{equation}
\gamma \Theta_f = 0 \mod 2 \pi\ .
\end{equation}
For these models the $SO(4)$ holonomy in the geometric sector must be the identity.

\subsection{A corrected partition function}

As we saw, for the cases involving $\gamma$ the wave front set includes a spurious equation that restricted the curvature in the geometric sector to a discrete set. This arises as the face distribution, $\delta(\tilde g_{f})$ is not twisted along with the simplicity constraints $E(g_{ef})$. While the $g_{ef}$ then include a torsion contribution, the $g_{ev}$ remain the torsion free geometric connection.

In order to fix this issue we can twist the face amplitude. The role of the face amplitude in the wave front set is to enforce parallel transport, and equation \eqref{eq:FaceAngles} that enforces \begin{equation}\Theta_f + \sum_{e \in f} \xi_{ef} = 0 \text{ mod } 2 \pi\ ,\end{equation} or more geometrically
\begin{equation}\sum_{e \in f} \xi_{ef} = 2 \pi - \Theta_f \text{ mod } 2 \pi\ ,\end{equation}
which allows us to interpret the $\xi_{ef}$ as the interior dihedral rotations.

The new face amplitude $\omega'(g_{ev},g_{ef},\ldots)$ should enforce these two conditions. Recall that the parallel transport condition arose purely from the multiplicative structure of the functional dependence of $\omega$. If $\omega'$ is of the form
\begin{equation}\omega'(g_{ev},g_{ef},\ldots) = D(g_f,\tilde g_f), \end{equation} we automatically have guaranteed that the parallel transport conditions are still satisfied.

The wave front set of $\omega'$ given in terms of
\begin{equation}
 \{g_{ef},g_{ev},\ldots,p_{ef},p_{ev},\ldots\}
\end{equation}
can be written in terms of auxiliary variables by application of lemma \ref{lem:extension and integration}
properties \ref{pro-delta}, \ref{pro-parallel} and \ref{pro-inverse} . These auxiliary variables are
 \begin{align}
\tilde{q}_{ef} =  g_{ev}\act \tilde{q}_{ev} ,&\quad \tilde{q}_{ev} =  g_{ve'}\act \tilde{q}_{ve'},\nn\\
\tilde{q}_{ve'} = g_{e'f}\act \tilde{q}_{e'f} ,&\quad \tilde{q}_{e'f} =  {g_{e'v'}}\act \tilde{q}_{e'v'}.\\
{q}_{ev}=g_{ve'}\act{q}_{ve'},&\quad {q}_{ve'}={g_{e'v'}}\act{q}_{e'v'}
\end{align}
such that
\begin{equation}
 (g,\tilde{g},q_{ev},\tilde{q}_{ev})\in \WF(D)
\end{equation}
for $v$ the fiducial vertex of the face $f$. The wave front set is then given in terms of these auxiliary variables by
\begin{equation}
 p_{ef}=\tilde{q}_{ef},\quad p_{ev}=q_{ev}+\tilde{q}_{ev}\ .
\end{equation}
As, due to the wave front set of the $E$ function, $\tilde{q}_{ef}$ is stabilised by $g_{ef}$, the $p_{ef}$ and $p_{ev}$ also satisfy parallel transport.

We thus need to choose our function $D(g_{f},\tilde g_{f})$ such that it enforces only \eqref{eq:FaceAngles}. Note that as we have all equations up to \eqref{eq:FaceAngles} the full geometric reconstruction can already be applied, with the exception of the interpretation of the $\xi_{ef}$.

Without known anything further about $D$ we can thus already conclude (see equation \eqref{eq:Twisted-interwoven-face-holonomy} for example) that in the wave front set  $g_{f}$ and $\tilde g_{f}$ have to be of the forms
\[g_f = \pm \exp \Theta_f \hodge \hat B_f\]
and,
\[\tilde g_f = \pm \exp\left[ \bigg(\Theta_f + \sum_{e \in f} \xi_{ef}\bigg) \hodge \hat B_f - \gamma \sum_{e \in f} \xi_{ef} \hat B_f\right]\ .\]{where $B_f$ is a simple bivector obtained from  $p_{ef}$.
\begin{equation}
\hat{B}_f=\widehat{T_\gamma^{-1}p_{ef}}
\end{equation}

We also know that $\tilde q_{ev}$, being the parallel transport of the bivector in the wave front set of the e function has to be of the form $\tilde q \propto T_\gamma (\wec{n},\wec{n})$.

We can now try the ansatz 
\begin{align} 
D(g,\tilde{g}) = \sum_{k\in {\mathbb N}} \int_{S^2\times S^2}& \dd \wec{n} \,\dd \wec{m}\, \langle \epsilon_+ n|g^+|\epsilon_+ n \rangle^{|\alpha+\beta|k} \langle \epsilon_- m|g^-|\epsilon_- m \rangle^{|\alpha-\beta|k}\times\nn\\&\times  \langle \tilde \epsilon_+ n|\tilde{g}^+|\tilde \epsilon_+ n \rangle^{|\tilde\alpha+\tilde\beta|k} \langle \tilde \epsilon_- m|\tilde{g}^-|\tilde \epsilon_- m \rangle^{|\tilde\alpha-\tilde\beta|k}\ . 
\end{align}
where $\alpha,\beta, \tilde\alpha,\tilde\beta\in{\mathbb Z}$ are to be determined, and $\epsilon_{\pm}|\alpha \pm \beta| = \alpha \pm \beta$ and $\tilde \epsilon_{\pm}|\tilde \alpha \pm \tilde \beta| = \tilde \alpha \pm \tilde \beta$.

From the properties and Lemmas given, we can compute the wave front set of this distribution
\begin{equation}
\begin{split}
 \WF(D)\subset&\Big\{(g,\tilde{g},q,\tilde{q})\colon
q=\alpha B+\beta \hodge B, \tilde{q}=\tilde\alpha B+\tilde\beta \hodge B,
g=e^{\chi \hat B+\rho\hodge \hat B}, \tilde{g}=e^{\tilde\chi \hat B+\tilde\rho\hodge \hat B},\\
&\text{where } B=\xi(\hat{\wec{n}},\hat{\wec{m}}),
2\alpha\chi+2\beta\rho+2\tilde\alpha\tilde\chi+2\tilde\beta\tilde\rho
= 0 \text{ mod } 4\pi\Big\}
\end{split}
\end{equation}

First note that to have $\tilde q \propto T_\gamma (\wec{n},\wec{n})$, as required by the parallel transport condition, we have to have $\alpha = \gamma \beta \neq 0$.

Now in the wave front set we have from the form of $g_f$ and $\tilde g_f$ that
\begin{equation}
 \begin{array}{lcl}
 g_f = \pm e^{\Theta_f \hodge \hat B_f} & \Rightarrow &  \chi=0, \rho=\Theta_f \\
 \tilde g_f = \pm e^{\left(\Theta_f + \sum_{e \in f} \xi_{ef}\right) \hodge \hat B_f - \gamma \sum_{e \in f} \xi_{ef} \hat B_f} &\Rightarrow & \tilde\chi=\Theta_f + \sum_{e \in f} \xi_{ef},\ \tilde\rho=- \gamma \sum_{e \in f} \xi_{ef} \ .
 \end{array}
\end{equation}

The equalities on the right hand side are modulo $2\pi$. Thus the equation we want the wave front set to encode is simply that $\tilde \chi = 0$ while all other angles are free\footnote{We can't simply set $\tilde \alpha = 1$ and all other coefficients to zero, as we then could no longer satisfy the parallel transport conditions.}: \begin{equation}\label{eq:DfuncWFScondition}
 \alpha\chi+\beta\rho+\tilde\alpha\tilde\chi+\tilde\beta\tilde\rho
= 0 \text{ mod } 2\pi\Leftrightarrow \tilde\chi=0 \text{ mod } 2\pi\ .
\end{equation}

We now assume that $\alpha$ and $\beta$ are the minimal integers satisfying $\gamma=\frac{\beta}{\alpha}$, then we can eliminate $\Theta_f$ and $\sum_{e \in f} \xi_{ef}$ from the conditions derived from geometricity,
\begin{equation}
\rho=\Theta_f,\ \tilde\chi=\Theta_f + \sum_{e \in f} \xi_{ef}+2\pi m,\ \tilde\rho=- \gamma \sum_{e \in f} \xi_{ef}+2\pi m'
\end{equation}
to obtain relations between the $\chi$, $\tilde \chi$, $\rho$ and $\tilde \rho$
\begin{equation}
\exists m,m'\colon  \tilde\chi-\rho=-\frac{\tilde\rho}{\gamma}+\frac{2\pi \alpha m'}{\beta}+2\pi m 
\end{equation}

which is equivalent to

\begin{equation}
\beta(\tilde\chi-\rho)=-\alpha\tilde\rho \text{ mod } 2\pi
\end{equation}
because $\alpha$ and $\beta$ have no common divisors.

Using $\chi=0$, $\beta\rho=\gamma\alpha\tilde\chi+\alpha\tilde\rho$ mod $2\pi$ and $\beta=\gamma\alpha$ we can transform 
\begin{align}
 \alpha\chi+\beta\rho+\tilde\alpha\tilde\chi+\tilde\beta\tilde\rho
&= 0 \text{ mod } 2\pi\Leftrightarrow\nn\\
(\gamma \alpha + \tilde\alpha) \tilde \chi + (\alpha + \tilde\beta)\tilde\rho &= 0 \text{ mod } 2\pi 
\end{align}

Thus in order to have \eqref{eq:DfuncWFScondition} we have to have
\begin{equation}
 \alpha\gamma+\tilde\alpha=1 ,\quad \alpha+\tilde\beta=0 ,\quad  \beta=\gamma\alpha\ , 
\end{equation}
where $\alpha$ and $\beta$ are the minimal integer pair satisfying $\beta=\gamma\alpha$.
We can further compute that the wave front set bivectors of $\omega'$ are of the form
\begin{equation}
 p_{ev}=q_{ev}+\tilde{q}_{ev}=
 ((\alpha+\tilde{\alpha})\wec{n},(\beta+\tilde{\beta})\wec{n})=\alpha (1-\gamma)(\wec{n},-\wec{n})+(\wec{n},0)\ .
 \end{equation}
 
It should be noted that with this face amplitude, the connection to the $\SU(2)$ Hilbert space, sometimes touted as a desirable feature of the EPRL model, is lost. In this proposition we also depart from the general form of the partition function for holonomy spin foam models. {The question remains if one also obtains a connection to the Regge action in the large $j$ limit with this face amplitude. This issue, however, cannot be address by our method.}

\section{Further properties and an example}\label{sec:FurtherProperties}

In this section we will show that the wave front set analysis given above extends exactly to the type of face weights used in the literature. We then show that the wave front set analysis allows us to give exactly the large spin asymptotics of the partition function. Finally we will give a concrete example in the form of the 3-3 move, where our conditional is satisfied.

\subsection{General face weights}
\label{sec:face-wf}
In the literature, we find a wide rang elf proposals for the face amplitude of spin foam models, which all lead to different face distributions $\omega$. As the theories are motivated from BF theory with $\omega = \delta$, we have so far assumed that the wave front set properties of $\omega$ and those of $\delta$ coincide. This is indeed the case for a wide class of face amplitudes, including most that have been considered in the literature.

Specifically consider the general form of face amplitude defined by
\begin{equation}
 \omega(g)=\sum_{j_+,j_-} \tilde{d}_{j_+,j_-}
\int_{S^2\times S^2} \dd \wec{n}_+\ \dd\wec{n}_-\ \langle n_+|g^+|n_+\rangle^{2j^+}\langle n_-|g^-|n_-\rangle^{2j^-}
\end{equation}
where $\tilde{d}_{j_+,j_-}=d(x,y)$ is the face amplitude. 
Then assume that $d(x,y)$  satisfies (see \ref{sec:ellipticity} for definitions)
\begin{equation}
 d\in S^\rho({\mathbb R}^2),\quad d^{-1}\in S^{-\rho}({\mathbb R}_+^2)
\end{equation}
for some $\rho\in{\mathbb R}$. All proposed spin foam amplitudes satisfy this condition, for example
\begin{itemize}
 \item the $SO(4)$ face weight 
\begin{equation}
 d(x,y)=(2x+1)^2(2y+1)^2,
\end{equation}
and similarly all polynomial $d$ functions.
\item the $\SU(2)$ face weight 
\begin{equation}
 \frac{\sqrt{2}}{\sqrt{1+\gamma^2}}\sqrt{x^2+y^2}(2x+1)(2y+1)
\end{equation}
It can be checked that for $j^\pm=\frac{|1\pm\gamma|}{2}k$, 
\begin{equation}
 d(j^+,j^-)=k(2j^++1)(2j^-+1),
\end{equation}
\end{itemize}

We can then use lemma \ref{lem-FK}, and properties \ref{pro:ellip}, \ref{pro-int} (with the case of a unique projection of the interior closed set to the boundary) to
deduce that
\begin{equation}
 \WF(\omega)=\WF(\delta)=\{0,\mathfrak{g}\}\cup \{0\}
\end{equation}

One class of examples that does not satisfy these assumptions are the heat kernel regularised spin foam amplitudes. These can be seen to be smooth.

\subsection{The geometry of the existence condition of the partition function}
\label{sec:cond-fin}

The geometric meaning of the condition \ref{eq-exists2} is not clear. In this section we will relate it to the large bivector limit in the interior. In fact suppose that we have a sequence of configurations
\begin{equation}
 \{{}^lg_{ev},{}^lg_{ef},{}^lp^f_{ev},\cdots)
\end{equation}
labelled by integer $l$ with the property that
\begin{equation}
 \frac{\sum_{e\not\in \Gamma} |{}^lp^f_{ev}|}{\sum_{e\in \Gamma} |{}^lp^f_{ev}|}\rightarrow \infty
\end{equation}
that is exactly condition that the internal bivectors become much larger than the boundary ones. Let denote
\begin{equation}
 \lambda_l=\frac{1}{\sum_{e\in\CC} |{}^lp^f_{ev}|}
\end{equation}
then the rescaled configurations
\begin{equation}\label{eq-conf1}
 \{{}^lg_{ev},{}^lg_{ef},{}^l\tilde{p}^f_{ev}=\lambda_l {}^lp^f_{ev},\cdots)
\end{equation}
still belong to $\WW(\CC,\Gamma)$.
The space of all $\{g_{ev},\ldots , p^f_{ev}\}$ with the condition $\sum_{e\in\CC} |{}^lp^f_{ev}|=1$ is however compact, and $\WW(\CC,\Gamma)$ is closed. Thus there exists a convergent subsequence of \eqref{eq-conf1} to the point
\begin{equation}
 \{\tilde{g}_{ev},\dots,\tilde{p}^f_{ev}\}
\end{equation}
It is easy to check that on the boundary $\tilde{p}^f_{ev}=0$ whereas, as we have $\sum_{e\in\CC} |\tilde{p}^f_{ev}|=1$, there need to be nonzero bivectors in the interior.

This proves that if we can make interior bivectors arbitrary large while keeping the boundary $B$'s small, condition \eqref{eq-exists2} is not satisfied. Conversely if a finite interior with zero boundary exists, this can of course be scale up arbitrarily.

\subsection{Relation between large $j$ limit and the wave front set}\label{sec:large j limit}

So far we have derived our results on the behaviour of the partition function in terms of the wave front set of the resulting distribution. This raises the question how the regime that we probe is related to the regime most studied so far, that of the large spin asymptotics. With the results we have derived so far we can indeed show that these two regimes coincide, our wave front set analysis provides exactly the large spin asymptotics of the partition function.

To see this we can use the distributions described in Lemma \ref{lem-FK}. Given a set of of $\SU(2)$ coherent states on the boundary
\begin{equation}
|n^{v\pm}_{e}\rangle, |n^{e\pm}_{v}\rangle,\colon \{ev\}\in \Gamma
\end{equation}
we can introduce a sequence of coherent states in universal boundary Hilbert space
\begin{equation}
 \psi_\lambda(g_{ev},\ldots)=\prod_{ev\in \Gamma}
\langle n^{e+}_{v}|g^+_{ev}|n^{v+}_{e}\rangle^{2\lambda j^+_{ev}}
\langle n^{e-}_{v}|g^-_{ev}|n^{v-}_{e}\rangle^{2\lambda j^-_{ev}}\ .
\end{equation}
We are interested in determining the behaviour of
\begin{equation}
 \lambda\rightarrow \langle\overline{\psi_\lambda},\ZZ\rangle
\end{equation}
for large $\lambda$.

The coherent state kernel distribution $K_{n^e_{v}}(x,g_{ev},\ldots)$, integrated with
$e^{i\lambda x}$ provides exactly this coherent state:
\begin{equation}
 \psi_\lambda(g_{ev},\ldots)=\int \dd x\ e^{i\lambda x}K_{n^e_{v}}(x,g_{ev},\ldots),
\end{equation}
Let us contract $K_{n_{ev}}(x,\ldots)$ with the amplitude. The result is a distribution $P(x)$ on $S^1$: 
\begin{equation}
 P(x)=\int\prod \dd g_{ev}\ K_{n^e_{v}}(g_{ev},\ldots)\ZZ(g_{ev},\ldots)
\end{equation}
Note that
\begin{equation}
 \int \dd x\ P(x)e^{i\lambda x}=\langle \overline{\psi_\lambda},\ZZ\rangle\ ,
\end{equation}
thus, if $P(x)$ is a smooth function, that is, its wave front set vanishes, the amplitude for the probed coherent states vanishes like $O(\lambda^{-\infty})$ in the large $j$ limit. 

We thus obtain a necessary condition on the existence of a non-suppressed large $j$ limit from the wave front set of the distribution $P(x)$.  Denote by $p^{v}_e$ and $p^e_{v}$ the bivectors associated with $|n^{v\pm}_{e}\rangle$ and $|n^{e\pm}_{v}\rangle$,
\begin{equation}
 p^v_{e}=(2j^+\wec{n}^{v+}_{e},2j^-\wec{n}^{v-}_{e}),\quad p^e_{v}=(2j^+\wec{n}^{e+}_{v},2j^-\wec{n}^{e-}_v)\ .
\end{equation}
Now combining the wave front set of $K_{n^e_{v}}$ from lemma \ref{lem-FK} with that of $\ZZ$, using property \ref{pro-int}
we see that $P(x)$ is smooth (that is, the large $j$ behaviour is $O(\lambda^{-\infty})$) if
\begin{equation}
 \not\exists\ \{ g_{ev}\colon (ev)\in\Gamma\} \colon\
p^e_{v}=g_{ev}\act p^v_{e},\quad
(g_{ev},\ldots; - p^v_{e}\ldots)\in \WF(\ZZ)
\end{equation}

Thus the suppression of the amplitude is governed by the wave front set. Further the singular support of $P(x)$ gives us information on the phase of the partition function. Note that the wave front set contains potentially more information than the coherent states. The coherent states only probe the existence of a $g_{ev}$ satisfying the above, the wave front set can tell us which one actually occurs. Conversely, for every element of the wave front set there clearly is a coherent state probing this point.

Thus we have that coherent state boundary data is suppressed if the geometry it describes is not the boundary geometry of an internal solution. The relationship between the inner and the boundary geometry is that for a boundary vertex $v$ from which we have the unique internal edge $e'$ we have a tetrahedron $\tau^v$ that has as it's bivectors the $p^v_{ee'}$ in the wave front set as given by the boundary operator \eqref{eq:gev-bdry}. These tetrahedra have the same geometry as the boundary tetrahedron $\tau^{v'}_{e'}$, but are arbitrarily rotated.

A more detailed discussion of the boundary geometry and wave front set for the case of projected spin network boundaries can be found in appendix \ref{sec:ProjectedSpinNetworkBoundary}.


\subsection{A concrete case: The 3-3 move}\label{sec:33move}
An interesting example to consider is that of the 2-complex that arises in the 3-3 Pachner move, $\CC_{33}$. This is dual to three 4-simplices glued together around a common face. The triangulation has no internal edges, but an internal face. Thus there are no Regge equations of motion, on this triangulation all discrete metrics are Regge metrics. This example has been studied repeatedly in the literature \cite{Mamone:2009pw,Magliaro:2011qm}. 

\begin{figure}[htp]
 \centering
 \includegraphics[scale=.70]{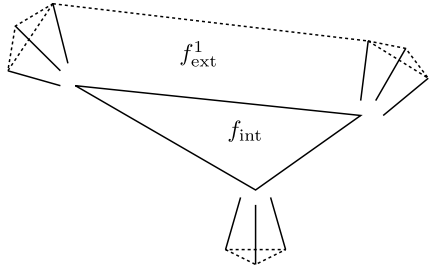}
 \caption{The dual complex of the configuration that occurs in the 3-3 move, with internal edges solid and some of the boundary edges given as dotted lines. The only face not touching the boundary, and thus made of fully internal edges is $f_{\text{int}}$. The faces touching the boundary come in two flavours, belonging to one 4-simplex or belonging to two. The face $f^1_{\text{ext}}$ is an example of a face belonging to two simplices.}
 \label{fig-33move}
\end{figure}

The 2-complex of the 3-3 move consists of a single internal face $f_{\text{int}}$ with three boundary edges. Each of these three boundary edges has three more faces associated to it. These faces all touch the boundary. Each of the vertices has four more edges that also go out to the boundary. Let us consider the conditional above. The map $\partial$ identifies the boundary bivectors with the bivectors of the faces that touch the boundary. Thus the conditional reads that if all the bivectors on faces that touch the boundary are set to zero, then the internal faces are forced to be zero, too.

This follows simply from the closure condition on the edges of the internal face. As all terms but the bivector belonging to the internal face are set to zero closure simply reads $p^e_{f_{\text{int}}v} = 0$. Clearly this argument can be iterated. 
This leads to the notion of a no-tardis triangulation of \cite{Barrett:2008wh}, for which condition \ref{eq-exists2} is satisfied.

Applying our theorem above to the EPRL partition function on $\CC_{33}$ we see that the support of the partition function asymptotically is contained in those geometries that are the boundary of an internal geometry that has discrete curvature values on the internal face. This is thus a perfect example of the accidental curvature constraints in action.

We can understand this from the point of view of the sum over spins, too. While the terms asymptotically separated from the stationary point are suppressed exponentially, this is no longer true for terms at finite distance. If the spins differ from the stationary conditions only by fixed finite amount then in the asymptotic limit the contribution is of the same order as from the stationary point. These terms of similar size as the dominant term will contribute to the asymptotic behaviour. Symbolically, while the asymptotics show that $f(\lambda (j_{\mbox{crit}} + 1))\ll f(\lambda j_{\mbox{crit}}) $, we can not conclude that $f(\lambda j_{\mbox{crit}} + 1)$ is also small. In fact we expect that $f(\lambda j_{\mbox{crit}} + 1) \sim f(\lambda j_{\mbox{crit}})$\footnote{In particular Magliaro and Perini in \cite{Magliaro:2011qm} discarded all such terms at a finite distance, resulting in a sparse sum that behaved as if only the dominant term were there.}.

This problem neatly demonstrates that the question of gluing asymptotic results together correctly in the interior is enormously subtle in the spin picture. In the holonomy formulation of spin foam models, the properties of the wave front sets given above handle all these subtleties for us.

\section{Discussion}\label{sec:Discussion}

In this paper we have introduced the notion of the wave front set of a distribution into the study of spin foam partition function. As a result we could obtain a sufficient criterion for the partition function to be well defined and a necessary criterion for the partition function to be not suppressed in the limit of large boundary spins. This immediately begs two questions, how sufficient and how necessary are these criteria.

\subsection{The role of the Regge equations of motion}

We begin with the latter question. Our results give a necessary condition for the boundary variables to be in the wave front set. They need to occur as the boundary of an element of $\WW(\CC,\Gamma)$. If this were also sufficient for the boundary to be not suppressed these theories would not encode the Einstein equations in any form, all geometric manifolds would occur as semi-classical solutions in the amplitude.

For a discrete geometry, the additional condition we want a configuration to satisfy in order to be in the wave front set of the partition function, is for it to occur as the boundary of a Regge manifold. That is, there should be a continuation of the boundary data into the interior that satisfies the Regge equations of motion.

Our control on the structure of the distributions is not precise enough to establish whether or not this is the case. To do so we will need to develop the tools introduced here further. Further, we do not necessarily expect that the spin foam models studied here would satisfy them, due to the presence of many additional semi classical configurations that are not discrete geometries. In the case of the $\gamma = 0$ models we can also expect problems from the results of the vertex amplitude, which showed that the Regge action does not appear in the geometric sector. In fact this can be seen as the reason that these models do not suffer from flatness, pointing to a possibly severe conundrum.

Our analysis above focused on the geometric sector, where $s(v) = +$. Understanding the dynamics of the sector with $s(v) = BF$, and their coupling would be crucial to truly understand the behaviour of the partition function.

\subsection{Extension problems and regularisation}
On the other hand we gave a sufficient criterion for the partition function to be well defined. Recent results by Bonzom and Dittrich suggest that this might be far to conservative. In \cite{Bonzom2013} they study the behaviour of the convolution of a single face amplitude with the BC E-functions, and find, surprisingly that this actually remains finite on much of the singular support. More generally choosing better behaved face amplitudes $\omega$ always enables us to make the model finite \cite{Perez2001b,Perez2001}. In \cite{Riello2013} it was also found that even if not finite, an important class of triangulations is only logarithmically divergent for the EPRL model.

If the model is infinite and requires regularisation, the wave front set analysis still tells us about the way in which it requires regularisation. In particular, we also learn what parts of the spin foam integrand are regular and should thus not be changed by the regularisation. The remaining ambiguities of the regularisation are then known as the extension problem. It is important to note that extending products of distributions to locations where they are not well defined in this way can potentially change the wave front set. However, the singular support does remain the same in this procedure.

\subsection{Conclusions}

As the singular support of the distribution does not change, the flatness problem, which arises from the equations imposed by the face amplitude, remains. More precisely, as long as the $g_{ev}$ are a geometric connection and the $g_{ef}$ are generated by twisted simple bivectors, the condition $g_{f} = \id$ will induce the accidental curvature constraints.

Following the geometricity discussion in the appropriate variables we now can understand that the flatness is not a result of taking incorrect limits, or of a mistake in the quantisation, but instead is due to the interaction of the discretisation and holonomisation of the geometric connection on the one hand, and the introduction of the Immirzi parameter on the other.

In order to avoid the accidental curvature constraints, we are thus forced to give up on the twisting (as in the ${\text{FK}^0}$ model), the simple form of the face amplitude (as with our modified $D(\tilde{g},g)$), or the geometricity of the connection. Note that the modified partition functions with $D$ does not allow for the $\SU(2)$ boundary Hilbert space, see the discussion in \cite{Dittrich2013}, whereas $\text{FK}^0$ suffers sever problem that the asymptotic of the vertex amplitude does not reproduce the Regge action.

Thus our results pose severe problems for the program of obtaining continuum geometries satisfying the Einstein equations via intermediate coarse triangulations. In the models under investigation, the curvature at the scale of a single vertex is severely restricted, this in turn means that the variation over the set of allowed geometries, weighted with the Regge action, will almost certainly not produce Regge manifolds. Furthermore we can not use single or few 4-simplex calculations to study, even perturbatively, the behaviour around flat space, we can only probe flat space directly. Any perturbation of the curvature has to come in multiples of $2 \pi/\gamma$.

In our view, the only way around this conclusion is to view the individual simplices as microscopic objects. In that case it is not the observable curvature that is restricted but the local, small scale curvature. In this context the exact flatness of the $\gamma = \infty$ models can actually be seen as a boon. It restricts the curvature at the microscopic single vertex scale, and thus makes the model much more regular.

On the methodological side, we have seen that the introduction of holonomy spin foam models, and wave front sets greatly facilitated understanding the underlying geometry of the composition of the individual building blocks. Understanding the wave front sets of the $E$ and $\omega$ was no more complicated than understanding the asymptotics of the vertex amplitude. The composition of these then is mostly done at the structural level, without needing detailed knowledge of the distributions involved. This gives us a very clear picture as to which aspects of the amplitude are responsible for which equations. The fact that the vertex amplitude asymptotics and the behaviour around a face appear on the same footing in this formulation allowed us for the first time to clearly understand the origin and the precise formulation of the flatness problem.

While conditional our results also are the first results of geometricity that pertain to the entire partition function, rather than to its individual weights. One way to view the wave front set calculus is that it allows us to understand how the asymptotic limits of various building blocks combine to give the asymptotic limit of the partition function built from them.

The big outstanding challenge at this point is to study the degree of sufficiency and necessity in our main result. This would allow us to obtain unconditional statements. One path towards this goal is to combine our analysis here with the spinor formulation of spin foam models. Wave front sets behave well under symplectic reduction, thus combining these to methods has the potential to further simplify and extend the analysis of these amplitudes.

\acknowledgments
 We would like to thank Bianca Dittrich for fruitful discussions and comments.
WK acknowledges the grant of Polish Narodowe Centrum Nauki number 501/11-02-00/66-4162. Research
at Perimeter Institute is supported by the Government of Canada through Industry Canada and by the Province
of Ontario through the Ministry of Research and Innovation.

\bibliographystyle{JHEP}
\bibliography{AD-JHEP}
\appendix

\section{Wave front set lemmas}\label{sec:WFS-Lemmas}

This section is devoted to derivation of main properties of the simplicity functions in EPRL and FK
models. We will also prove useful criteria for similar distributions.

\subsection{The wave front set of the EPRL E-function}
\label{sec-E-der}
Here we derive the E-function asymptotics.
Let us assume $\gamma>1$. The derivation for the case $\gamma<1$ is similar and we
leave it to reader.

We need to analyze $E(g)$. We know
\begin{itemize}
 \item $E$ is invariant under $SU(2)$ action
\begin{equation}
 E(g)=E(hgh^{-1}),\ h\in \SU(2)\subset \Spin(4)
\end{equation}
\item $E$ is the solution of $CE=0$
where
\begin{equation}
C=(\gamma-1)\sqrt{C_++\frac{1}{4}}-(1+\gamma)\sqrt{C_-+\frac{1}{4}}+2
\end{equation}
where $C_\pm$ are the Casimirs of $\SU(2)_\pm$.
We have the coordinates on
$T^*\SU(2)$ given by
\begin{equation}
 (x,\wec{p})\colon x\in \SU(2), \wec{p}\in \su(2)\text{ left invariant vector fields}
\end{equation}
The Poisson brackets are as follows
\begin{equation}
 \{p_i,p_j\}=\epsilon_{ijk}p_k,\ \{p_i,f(x)\}=(L_if)(x),\ \{f_1(x),f_2(x)\}=0\ .
\end{equation}
The principal symbols of the Casimirs are given by $|\wec{p}^\pm|^2=\sum_i (p^\pm_i)^2$.
The principal symbol of $C$ is thus
\begin{equation}
 c=(\gamma-1)|\wec{p}^+|-(1+\gamma)|\wec{p}^-|\ .
\end{equation}
\item $E$ is also the solution of $\tilde{C}E=0$ where
\begin{equation}
 \tilde{C}=2\sqrt{C_++\frac{1}{4}}-(1+\gamma)\sqrt{L^2+\frac{1}{4}}+\frac{1}{2}(\gamma-1)
\end{equation}
where $L^2=(L_i^++L^-_i)^2$\ .
 The principal symbol of $\tilde{C}$ is
\begin{equation}
 \tilde{c}=2 |\wec{p}^+|-(1+\gamma)|\wec{p}^++\wec{p}^-|\ .
\end{equation}

\item From the equation
\begin{equation}
 E(g)=\sum d_e(k)\int dn \langle n|g^+|n\rangle^{(1+\gamma)k}\langle
n|g^-|n\rangle^{(1-\gamma)k}
\end{equation}
where $d_e(k)$ has at most polynomially growth,
we know that beside the set
\begin{equation}
 \{(g^+,g^-)\colon \exists \wec{n},\ g^\pm=e^{\pm
\alpha(1\mp\gamma)\wec{n}}\}\ ,
\end{equation}
$E(g)$ is a smooth function.
\end{itemize}

We have
\begin{theo}\label{theo-eprl-unique}
 The conical subset
$W$of $T^*Spin(4)\setminus\{0\}$ with the properties
\begin{itemize}
 \item $W$ is invariant under adjoint action of the diagonal $\SU(2)$, elements
of $W$ annihilate generators of this action
\item $W\subset\{c=0\}$, $W$ is invariant under the hamiltonian action of $c$
\item $W\subset\{\tilde{c}=0\}$, $W$ is invariant under the hamiltonian action
of $\tilde{c}$,
\end{itemize}
is the sum of sets of the type $W_\alpha^\pm$ defined by
\begin{align}
W^\pm_\alpha=\left\{\left(e^{\alpha(\wec{n},\wec{0})+\xi\hodge T_{-\gamma}(\wec{n},\wec{n})},\pm\chi
T_\gamma(\wec{n},\wec{n})\right)\colon
\chi>0,\xi\in {\mathbb R},|\wec{n}|=1\right\}
\end{align}
\end{theo}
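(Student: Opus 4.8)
The plan is to treat this as a classification of a conical subset $W\subset T^*\Spin(4)\setminus\{0\}$ cut out by three invariance hypotheses, working throughout in the coordinates $(g^+,g^-;\wec{p}^+,\wec{p}^-)$ afforded by $\Spin(4)=\SU(2)_+\times\SU(2)_-$. The strategy has two stages: first pin down \emph{pointwise} what a single element of $W$ can look like, using only the vanishing of the symbols $c$ and $\tilde c$ together with the diagonal moment condition; and then use the Hamiltonian flows of $c$ and $\tilde c$, the conical structure, and the residual diagonal $\SU(2)$ action to show that $W$ must be a union of entire orbits, each of which is one of the sets $W_\alpha^\pm$.

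First I would analyse the momenta. On $W$ we have $c=(\gamma-1)|\wec{p}^+|-(1+\gamma)|\wec{p}^-|=0$ and $\tilde c=2|\wec{p}^+|-(1+\gamma)|\wec{p}^++\wec{p}^-|=0$. Writing $\vartheta$ for the angle between $\wec{p}^+$ and $\wec{p}^-$ viewed as vectors in $\su(2)\cong\R^3$, the first equation fixes $|\wec{p}^-|/|\wec{p}^+|=(\gamma-1)/(\gamma+1)$ and the second, squared and combined with the law of cosines, reduces after cancellation to $(\gamma^2-1)\cos\vartheta=-(\gamma^2-1)$. Since $\gamma>1$ this forces $\cos\vartheta=-1$, i.e.\ $\wec{p}^+$ and $\wec{p}^-$ are antiparallel with the above norm ratio. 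Hence every momentum in $W$ is of the form $p=\pm\chi\,T_\gamma(\wec{n},\wec{n})$ with $\chi>0$ and $|\wec{n}|=1$, since $T_\gamma(\wec{n},\wec{n})=((1+\gamma)\wec{n},(1-\gamma)\wec{n})$ realises exactly an antiparallel pair with this ratio.

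Next I would constrain the group elements using the diagonal invariance. The group-invariance property gives that elements of $W$ annihilate the generators of the diagonal $\SU(2)$, which in left-trivialised coordinates is the moment condition $g^+\act\wec{p}^++g^-\act\wec{p}^-=\wec{p}^++\wec{p}^-$. Inserting $\wec{p}^+=a\wec{n}$, $\wec{p}^-=-b\wec{n}$ with $a>b>0$ and abbreviating $\wec{u}=g^+\act\wec{n}$, $\wec{v}=g^-\act\wec{n}$ (both unit vectors), this reads $a\wec{u}-b\wec{v}=(a-b)\wec{n}$. Taking squared norms of both sides gives $\wec{u}\cdot\wec{v}=1$, hence $\wec{u}=\wec{v}$, and then $a\neq b$ forces $\wec{u}=\wec{v}=\wec{n}$. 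Thus both $g^+$ and $g^-$ fix the axis $\wec{n}$, so $g^\pm=e^{\theta^\pm\wec{n}}$ for some angles $\theta^\pm$; this is precisely the form of the group element appearing in $W_\alpha^\pm$ once we reparametrise $(\theta^+,\theta^-)$.

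Finally I would pass from the pointwise description to the orbit statement. Using the Poisson brackets, the Hamiltonian flow of $c$ preserves both momenta (so fixes $\wec{n}$ and $\chi$) and translates $(\theta^+,\theta^-)$ along the direction $(\gamma-1,\,1+\gamma)$; restricted to the collinear locus just found, the flow of $\tilde c$, although it precesses the momenta in general, has $\wec{p}^+\times(\wec{p}^++\wec{p}^-)=0$ and therefore also fixes the momenta while translating $(\theta^+,\theta^-)$ along the parallel direction $(1-\gamma,\,-(1+\gamma))$. Matching this one-parameter translation to the $\xi$-dependence of $W_\alpha^\pm$, together with conicity (which scales $\chi$) and the diagonal $\SU(2)$ action (which sweeps $\wec{n}$ over $S^2$), shows that these three symmetries generate exactly the parameters $(\xi,\chi,\wec{n})$ of a single $W_\alpha^\pm$, labelled by the flow-transverse invariant $(1+\gamma)\theta^++(1-\gamma)\theta^-$ (which recovers $\alpha$) and the sign of $p$. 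Since $W$ is invariant under all three, it is a union of such orbits, i.e.\ a sum of sets $W_\alpha^\pm$. I expect the main obstacle to be the rigorous handling of the flow of the non-smooth symbol $\tilde c=2|\wec{p}^+|-(1+\gamma)|\wec{p}^++\wec{p}^-|$: one must check that $dc\neq0$ and $d\tilde c\neq0$ hold along $W$ (so that the Hamiltonian-flow clause of the Invariance property actually applies, which needs $\wec{p}^\pm\neq0$ and $\wec{p}^++\wec{p}^-\neq0$, both true since $\chi>0$ and $a>b$), and that the generically momentum-precessing $\tilde c$-flow degenerates on the collinear critical set to a pure right-translation whose orbit coincides with that of $c$. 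The remaining work — tracking the $4\pi$-periodicity of the $\SU(2)$ angles, fixing the fundamental range of $\alpha$, and disentangling the $\pm$ label from the freedom $\wec{n}\mapsto-\wec{n}$ — is bookkeeping.
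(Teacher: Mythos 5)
Your proposal is correct and follows essentially the same route as the paper's own proof: first classify the momenta from $c=\tilde c=0$ (the paper's Lemma \ref{lm-1}, which you reprove via the law of cosines instead of the paper's scalar-product manipulation), then show the annihilation of the diagonal $\SU(2)$ generators forces $g^\pm=e^{\theta^\pm\wec{n}}$ (the paper's Lemma \ref{lm-2}, which you reprove with squared norms instead of the triangle inequality), and finally compute the Hamiltonian flows of $c$ and $\tilde c$ on the critical set and observe they reduce there to proportional right-translations along $\wec{n}$, so that flow invariance, conicity and $\SU(2)$ invariance sweep out exactly the orbits $W^\pm_\alpha$. Your explicit check that the momentum-precessing part of the $\tilde c$-flow vanishes on the collinear locus, and your identification of the transverse invariant $(1+\gamma)\theta^++(1-\gamma)\theta^-$ with $(1+\gamma)\alpha$, are slightly more careful renderings of steps the paper leaves implicit.
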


Let us notice the following identities
\begin{align}
 W^\pm_{\alpha+2\pi}&=W^\pm_{\alpha}\\
W^\pm_{\alpha}&=W^\mp_{-\alpha}
\end{align}

\begin{proof}
The only co-vectors $(\wec{p}^+,\wec{p}^-)$ being the solution of $\tilde{c}=c=0$ by
Lemma \ref{lm-1} are of the form
\begin{equation}
 \wec{p}^\pm=(1\pm\gamma)\wec{n}\ .
\end{equation}
The action of $SU(2)$ is defined as
\begin{equation}
 g^\pm\rightarrow h^{-1}g^\pm h
\end{equation}
Its vector fields are spanned
for such a point $(g^+,g^-)$ by the vectors
\begin{equation}
 (\wec{m}-(g^+)^{-1}\akt \wec{m},\wec{m}-(g^-)^{-1}\akt\wec{m})\colon \wec{m}\in su(2)
\end{equation}
We have
\begin{equation}
 \left(\wec{m}-(g^+)^{-1}\akt\wec{m},\wec{m}-(g^-)^{-1}\akt\wec{m}\right)\perp (\wec{p}^+,\wec{p}^-)
\end{equation}
and thus by Lemma \ref{lm-2}
\begin{equation}
 g^\pm=e^{\alpha^\pm \wec{n}}
\end{equation}
Let us consider the flow generated by $c$ and $\tilde{c}$.
We have
\begin{equation}
 \{c^\pm,p^\pm_i\}=0,\ \{c^\pm,f(x^\pm)\}=2p^\pm_i (L_if)(x^\pm)
\end{equation}
and so
\begin{align}
 &\{c,p^\pm_i\}=0\\
&\{c,f(x^+,x^-)\}=\frac{(\gamma-1)p^+_i}{|p^+|}L^+_if-
\frac{(1+\gamma)p^-_i}{|p^-|}L^-_if
\end{align}
The vector field generated by $c$ on $T^*M-\{0\}$ is given by
\begin{equation}
 \frac{(\gamma-1)p^+_i}{|p^+|}L^+_i-
\frac{(1+\gamma)p^-_i}{|p^-|}L^-_i
\end{equation}
and by $\tilde{c}$
\begin{equation}
 \frac{2 p^+_i}{|p^+|}L^+_i-
\frac{(1+\gamma)(p^-_i+p^+_i)}{|p^-+p^+|}(L^-_i+L^+_i)
\end{equation}
Let us notice that on the set $(p^+,p^-)$ considered, the flow $c$
is given by
\begin{equation}
 -(1-\gamma)L^+_{\frac{n}{|n|}}-(1+\gamma)L^-_{\frac{-n}{|n|}}=(\gamma-1)L^+_{\frac{n}{|n|}}+(1+\gamma)L^-_{\frac{n}{|n|}}
\end{equation}
because $p^-=-(\gamma-1)n$.
The flow of $\tilde{c}$ is proportional to the flow of $c$.
So if the point $(e^{i\alpha_+ \wec{n}'},e^{i\alpha_- \wec{n}'};
T_\gamma(\wec{n}',\wec{n}')$ is in the set then
also
\begin{equation}
 \forall \xi,\quad(e^{(\alpha_- \wec{n}'-\xi(1-\gamma)\wec{n}')},e^{(\alpha_-
\wec{n}'+\xi(1+\gamma)\wec{n}')};T_\gamma(\wec{n}',\wec{n}'))
\end{equation}
is in this set. Finally $W$ ($SU(2)$ invariant) must consist of a sum of the
sets of the form
\begin{equation}
W^\pm_\alpha=\left\{\left(e^{\alpha(\wec{n},\wec{0})+\xi\hodge T_{-\gamma}(\wec{n},\wec{n})},\pm
T_\gamma(\wec{n},\wec{n})\right)\colon
\xi\in {\mathbb R},\wec{n}\in \su(2)\right\}
\qedhere
\end{equation}
\end{proof}

\subsubsection{Sublemmas}

\begin{lem}\label{lm-2}
Suppose for some vector $n\not=0$ holds
\begin{equation}
 \forall \wec{m}\in su(2),\ \left(\wec{m}-(g^+)^{-1}\akt\wec{m},\wec{m}-(g^-)^{-1}\akt\wec{m}\right)\perp ((1+\gamma)\wec{n},(1-\gamma)\wec{n})
\end{equation}
then $\exists \alpha^\pm$ such that $g^\pm=e^{i\alpha^\pm \wec{n}}$.
\end{lem}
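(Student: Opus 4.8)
The plan is to collapse the one-parameter family of orthogonality conditions into a single vector identity, and then extract rigidity from the fact that the adjoint action is by isometries. First I would use that the natural inner product on $\su(2)$ is $\mathrm{Ad}$-invariant, so that $\mathrm{Ad}_{(g^\pm)^{-1}}$ is orthogonal and its transpose is $\mathrm{Ad}_{g^\pm}$. This lets me move the group element off of $\wec{m}$ and onto $\wec{n}$: for each factor $(\wec{m} - (g^\pm)^{-1}\akt\wec{m})\cdot\wec{n} = \wec{m}\cdot(\wec{n} - g^\pm\akt\wec{n})$. The hypothesis then says that $\wec{m}\cdot\big[(1+\gamma)(\wec{n}-g^+\akt\wec{n}) + (1-\gamma)(\wec{n}-g^-\akt\wec{n})\big]=0$ for every $\wec{m}\in\su(2)$, and since $\wec{m}$ is arbitrary the bracket must vanish. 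Writing $\wec{a}=g^+\akt\wec{n}$ and $\wec{b}=g^-\akt\wec{n}$ and using $(1+\gamma)+(1-\gamma)=2$, this is the single equation $2\wec{n} = (1+\gamma)\wec{a} + (1-\gamma)\wec{b}$.

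The key step is then to feed in that $g^\pm$ act as rotations, so $|\wec{a}|=|\wec{b}|=|\wec{n}|$, and to take the squared norm of this equation. The coefficients combine as $(1+\gamma)^2+(1-\gamma)^2 = 2(1+\gamma^2)$ for the diagonal terms and $2(1+\gamma)(1-\gamma)=2(1-\gamma^2)$ for the cross term, so after using $|\wec{a}|^2=|\wec{b}|^2=|\wec{n}|^2$ everything reduces to $(1-\gamma^2)\,\wec{a}\cdot\wec{b} = (1-\gamma^2)\,|\wec{n}|^2$. Since these spin foam models are always taken away from $\gamma=1$, the factor $1-\gamma^2$ is nonzero and cancels, leaving $\wec{a}\cdot\wec{b}=|\wec{n}|^2=|\wec{a}|\,|\wec{b}|$. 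This is exactly the equality case of Cauchy--Schwarz, forcing $\wec{a}=\wec{b}$.

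Substituting $\wec{a}=\wec{b}$ back into $2\wec{n}=(1+\gamma)\wec{a}+(1-\gamma)\wec{b}$ gives $\wec{n}=\wec{a}=\wec{b}$, i.e. $g^+\akt\wec{n}=g^-\akt\wec{n}=\wec{n}$. Thus both $g^\pm$ stabilise the nonzero vector $\wec{n}$ under the adjoint action. Finally I would invoke the standard fact that the $\SU(2)$-stabiliser of a nonzero element $\wec{n}\in\su(2)$ is precisely the maximal torus $\{e^{i\alpha\wec{n}}:\alpha\in\R\}$ (the preimage under $\SU(2)\to\SO(3)$ of the rotations about the axis $\wec{n}$, which is connected), which yields $g^\pm=e^{i\alpha^\pm\wec{n}}$ for suitable $\alpha^\pm$ and completes the proof.

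The main obstacle is the middle paragraph: the plain vector identity $2\wec{n}=(1+\gamma)\wec{a}+(1-\gamma)\wec{b}$ by itself admits a whole family of spurious solutions with $\wec{a}\neq\wec{b}$, so the orthogonality data alone is not enough. What makes the argument work is combining it with the isometry constraint $|\wec{a}|=|\wec{b}|=|\wec{n}|$ through the norm computation, where the precise cancellation down to $\wec{a}\cdot\wec{b}=|\wec{n}|^2$ hinges on $\gamma\neq 1$; this is where the assumption that we stay away from $\gamma=1$ is essential.
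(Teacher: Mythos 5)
Your proof is correct, and its skeleton is the same as the paper's: use Ad-invariance of the inner product to move $(g^\pm)^{-1}$ off $\wec{m}$ and onto $\wec{n}$, use the arbitrariness of $\wec{m}$ to collapse the hypothesis into the single vector identity $2\wec{n}=(1+\gamma)\,g^+\akt\wec{n}+(1-\gamma)\,g^-\akt\wec{n}$, exploit norm preservation of the adjoint action, and finish with the fact that the $\SU(2)$-stabiliser of a nonzero $\wec{n}\in\su(2)$ is the maximal torus $\{e^{\alpha\wec{n}}\}$. The one place you genuinely diverge is the rigidity step: the paper rearranges the identity as $2\wec{n}+(\gamma-1)\,g^-\akt\wec{n}=(1+\gamma)\,g^+\akt\wec{n}$ (recall the appendix works under $\gamma>1$, so all three coefficients are nonnegative and $2+(\gamma-1)=1+\gamma$) and invokes the equality case of the triangle inequality, whereas you square the identity and invoke the equality case of Cauchy--Schwarz. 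These are equivalent facts, but your version is sign-agnostic — it needs no rearrangement adapted to whether $\gamma$ is larger or smaller than $1$ — and it is in fact exactly the method the paper itself uses for the companion Lemma \ref{lm-1}. One small imprecision to fix: "$\gamma\neq 1$" alone does not give $1-\gamma^2\neq 0$; you also need $\gamma\neq -1$. In context this is harmless (the appendix assumes $\gamma>1$, and invertibility of $T_\gamma$ requires $|\gamma|\neq 1$ anyway), and some such assumption is genuinely necessary, since at $\gamma=\pm 1$ one of the two factors drops out of the hypothesis entirely and the corresponding $g^\mp$ is unconstrained, so the lemma would be false there.
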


\begin{proof}
\begin{equation}
 \forall \wec{m}\quad\tr \big[(1+\gamma)(\wec{m}-(g^+)^{-1}\akt\wec{m})\wec{n}+(1-\gamma)(\wec{m}-(g^-)^{-1}\akt\wec{m})\wec{n}\big]=0
\end{equation}
This is equivalent to
\begin{align}
\begin{split}
 &\wec{m}\cdot \big[(1+\gamma)(\wec{n}-g^+\akt\wec{n}+(1-\gamma)(\wec{n}-g^-\akt\wec{n})\big]=0\\
 &(1+\gamma)(\wec{n}-g^+\akt\wec{n})+(1-\gamma)(\wec{n}-g^-\akt\wec{n})=0
\end{split}\\
 &2\wec{n}+(\gamma-1)g^-\akt\wec{n}=(1+\gamma)g^+\akt\wec{n}
\end{align}
But
\begin{align}
 |\wec{n}|=|g^-\akt\wec{n}|=|g^+\akt\wec{n}|\\
2+(\gamma-1)=(1+\gamma)
\end{align}
and so in order to satisfy triangle inequality we have to have
\begin{equation}
 \wec{n}=g^-\akt\wec{n}=g^+\akt\wec{n}
\end{equation}
This is equaivalent to the statement
\begin{equation}
g^\pm=e^{\alpha^\pm \wec{n}}\qedhere
\end{equation}
\end{proof}

\begin{lem}\label{lm-1}
 The only vectors $\wec{p}^\pm$ satisfying
\begin{equation}\label{eq-lm-1}
 \begin{split}
 &2|\wec{p}^+|=(1+\gamma)|\wec{p}^++\wec{p}^-|\\
&(1+\gamma)|\wec{p}^-|=|1-\gamma||\wec{p}^+|
 \end{split}
\end{equation}
are $\wec{p}^\pm$ of the form
\begin{equation}
 \wec{p}^\pm=(1\pm\gamma)\wec{n}\ .
\end{equation}
\end{lem}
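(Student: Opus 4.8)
The plan is to reduce both conditions to statements about the lengths of $\wec{p}^+$ and $\wec{p}^-$ and the angle between them, and then to recognise the first equation as the saturation of the reverse triangle inequality. Since we work in the regime $\gamma>1$ we have $|1-\gamma|=\gamma-1$, so the hypotheses \eqref{eq-lm-1} read $2|\wec{p}^+|=(1+\gamma)|\wec{p}^++\wec{p}^-|$ and $(1+\gamma)|\wec{p}^-|=(\gamma-1)|\wec{p}^+|$. From the second I read off the length ratio $|\wec{p}^-|=\tfrac{\gamma-1}{1+\gamma}|\wec{p}^+|$, and, dividing the first by $(1+\gamma)$, the relation $|\wec{p}^++\wec{p}^-|=\tfrac{2}{1+\gamma}|\wec{p}^+|$. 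First I would dispose of the degenerate case: if $\wec{p}^+=0$ the length ratio forces $\wec{p}^-=0$ as well and the conclusion holds with $\wec{n}=0$, so from now on assume $\wec{p}^+\neq 0$.

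The key observation is that, because $\tfrac{\gamma-1}{1+\gamma}<1$, we have $|\wec{p}^+|>|\wec{p}^-|$, so the reverse triangle inequality gives
\begin{equation}
 |\wec{p}^++\wec{p}^-|\ \geq\ |\wec{p}^+|-|\wec{p}^-|\ =\ \Big(1-\tfrac{\gamma-1}{1+\gamma}\Big)|\wec{p}^+|\ =\ \tfrac{2}{1+\gamma}|\wec{p}^+|\ .
\end{equation}
The first hypothesis says precisely that this inequality is an equality. Equality in $|\wec{p}^++\wec{p}^-|=|\wec{p}^+|-|\wec{p}^-|$ forces $\wec{p}^+$ and $\wec{p}^-$ to be antiparallel, i.e.\ $\wec{p}^-=-\lambda\,\wec{p}^+$ for some $\lambda\geq 0$; comparing lengths fixes $\lambda=\tfrac{\gamma-1}{1+\gamma}$. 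Setting $\wec{n}=\wec{p}^+/(1+\gamma)$ then yields $\wec{p}^+=(1+\gamma)\wec{n}$ and $\wec{p}^-=-(\gamma-1)\wec{n}=(1-\gamma)\wec{n}$, as claimed.

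The only step that requires any care is the equality case of the triangle inequality, and the cleanest way to pin it down rigorously is to square rather than to invoke the geometric statement verbatim. Expanding $|\wec{p}^++\wec{p}^-|^2=|\wec{p}^+|^2+2\,\wec{p}^+\cdot\wec{p}^-+|\wec{p}^-|^2$ in the squared first hypothesis and substituting $(1+\gamma)^2|\wec{p}^-|^2=(\gamma-1)^2|\wec{p}^+|^2$ collapses everything to $\wec{p}^+\cdot\wec{p}^-=-\tfrac{\gamma-1}{1+\gamma}|\wec{p}^+|^2=-|\wec{p}^+|\,|\wec{p}^-|$. This is the Cauchy--Schwarz bound attained with the opposite sign, hence equivalent to antiparallelism, which closes the argument. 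I expect this identity to be the main (and essentially only) obstacle; the remaining manipulations are routine bookkeeping with the prefactors $(1\pm\gamma)$.
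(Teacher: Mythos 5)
Your proposal is correct, and its rigorous core --- squaring the first relation, substituting $(1+\gamma)^2|\wec{p}^-|^2=(1-\gamma)^2|\wec{p}^+|^2$, and collapsing to the equality case of Cauchy--Schwarz --- is exactly the paper's own proof; the paper merely packages the sign into the vectors, writing the equality case as $[(1-\gamma)\wec{p}^+]\cdot[(1+\gamma)\wec{p}^-]=|(1-\gamma)\wec{p}^+|^2$ with equal norms, hence $(1-\gamma)\wec{p}^+=(1+\gamma)\wec{p}^-$, instead of your antiparallelism statement $\wec{p}^-=-\lambda\wec{p}^+$. The reverse-triangle-inequality narrative and the explicit $\wec{p}^+=0$ case are harmless glosses on the same argument.
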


\begin{proof}
 We have from \eqref{eq-lm-1}
\begin{equation}
 \begin{split}
 &4(\wec{p}^+)^2=(1+\gamma)^2\left((\wec{p}^+)^2+2\wec{p}^+\cdot \wec{p}^-+(\wec{p}^-)^2\right)\\
&(1+\gamma)^2(\wec{p}^-)^2=(1-\gamma)^2(\wec{p}^+)^2
 \end{split}\label{pppm}
\end{equation}
and so
\begin{equation}
2(1-\gamma^2)(\wec{p}^+)^2=2(1+\gamma)^2 \wec{p}^+\wec{p}^-
\end{equation}
this is equivalent to
\begin{equation}
 [(1-\gamma)\wec{p}^+]\circ [(1+\gamma)\wec{p}^-]=\left|[(1-\gamma)\wec{p}^+]\right|^2
\end{equation}
since $\left|[(1-\gamma)\wec{p}^+]\right|=\left|[(1+\gamma)\wec{p}^-]\right|$ we have
\begin{equation}
 (1-\gamma)\wec{p}^+=(1+\gamma)\wec{p}^-
\end{equation}
what is equivalent to the thesis.
\end{proof}

\subsection{The coherent state kernel distribution}
\label{sec:coherent_state_dist_kernel}

We will prove

\begin{lem}\label{lem-FK}
 The wave front set of
\begin{equation}
 K_{n_i}(x,g_1,\ldots, g_n)=\sum_{\lambda\geq 0} e^{-i\lambda x}\langle n^1_1|g^1|n_2^1\rangle^{2\lambda
j_1}\cdots \langle n_1^n|g^n|n_2^n\rangle^{2\lambda j_n}
\end{equation}
satisfies
\begin{equation}
\begin{split}
 \WF(K_{n_i}){=}\Big\{&
\left(\chi,g_1,\ldots g_n; -\xi,
\wec{p}_1,\dots \wec{p}_n\right)\colon \\
&\chi=\sum_i 2j_i\ln\langle n^i_1|g_i|n^i_2\rangle,\
\xi>0,\ \forall_i\ \wec{n}^i_1=g_i\akt\wec{n}^i_2,\
\wec{p}_i=2j_i\xi \wec{n}_2^i \Big\}\cup\{0\}
\end{split}
\end{equation}
\end{lem}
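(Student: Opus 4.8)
The plan is to exploit the fact that, term by term, the summand is a pure power, so that $K_{n_i}$ is a geometric series governed by a single complex amplitude. Writing $f_i(g_i)=\langle n^i_1|g_i|n^i_2\rangle$ and
\begin{equation}
z(x,g)=e^{-ix}\prod_{i}f_i(g_i)^{2j_i},\qquad K_{n_i}=\sum_{\lambda\geq0}z^\lambda,
\end{equation}
I would first record the elementary Cauchy--Schwarz bound $|f_i|\leq1$, with equality iff $g_i\akt\wec n^i_2=\wec n^i_1$. Hence $|z|\leq1$ everywhere and $|z|=1$ exactly on the critical set $\mathcal C=\{g:\ \forall_i\ \wec n^i_1=g_i\akt\wec n^i_2\}$, which already matches the transport constraint in the statement.

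Next I would pin down the singular support and smoothness off it. Where $|z|<1$ the series and every series obtained by differentiating it converge absolutely and locally uniformly, since a derivative only produces a polynomial factor in $\lambda$ against the bounded functions $\partial^\alpha z$, dominated by the geometric factor $|z|^\lambda$; thus $K_{n_i}$ is smooth there. On the part of $\{|z|=1\}$ with $z\neq1$ I sum the series as a distribution to $K_{n_i}=(1-z)^{-1}$, with $z$ smooth and bounded away from $1$, hence again smooth. This gives
\begin{equation}
\mathrm{sing}\,K_{n_i}=\{(x,g)\in\mathcal C:\ x=\textstyle\sum_i2j_i\arg f_i(g_i)\ \mathrm{mod}\ 2\pi\},
\end{equation}
which is the projection of the asserted wave front set; the first entry $\chi$ of the statement encodes this $x$-value, recalling that $\ln f_i$ is purely imaginary on $\mathcal C$.

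The heart of the proof is the computation of the covectors over $\mathrm{sing}\,K_{n_i}$. Near a singular point I would write $g_i=g_i^0\,e^{\epsilon_i}$, $\epsilon_i\in\su(2)$, and expand $\ln f_i=i\phi_i+\langle n^i_2|\epsilon_i|n^i_2\rangle+O(\epsilon_i^2)$. The linear term is proportional to $\wec n^i_2\cdot\epsilon_i$ and purely imaginary, while the real part of the quadratic term is negative semidefinite, vanishing exactly in the stabiliser directions and so supplying a Gaussian damping transverse to $\mathcal C$. The $\mu$-th term of the localised transform $\int\phi\,e^{i\lambda\tilde p\cdot X}K_{n_i}\,dX=\sum_{\mu\geq0}I_\mu$ is then an oscillatory integral in $X=(x,g)$ with phase $\lambda\tilde p\cdot X+\mu\,\Im\!\big(-x+\sum_i2j_i\ln f_i\big)$ and amplitude damped by $e^{\mu\,\Reel(\cdots)}$. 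A steepest-descent analysis pins the contributions onto $\mathcal C$ and forces the covector to be proportional, with a positive factor $\xi=\mu/\lambda$, to the direction $(-1;\,2j_i\wec n^i_2)$ in the $(x;g_i)$ slots; for any $\tilde p$ not of this form, non-stationary phase (integration by parts in $X$, using the transverse damping to handle the remaining directions) gives rapid decay of $I_\mu$ uniformly in $\mu$ and hence of the sum, so such covectors are absent. Conversely, for $\tilde p$ of the claimed form the saddle at $\mu\sim\lambda\xi$ contributes at order $O(1)$, so the transform does not decay rapidly and the covector $(-\xi;\,2j_i\xi\wec n^i_2)$ lies in $\WF(K_{n_i})$. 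Crucially, the one-sidedness $\mu\geq0$ selects $\xi>0$, which is exactly why only half of the conormal to $\mathcal C$ survives.

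To streamline the core estimate I would reduce each factor to a reference spinor, writing $f_i(g_i)=\langle+|u_1^{i\,-1}g_iu_2^i|+\rangle$ with $u^i_a\akt\hat z=\wec n^i_a$, so the model computation need only be done for the north-pole matrix element and then transported to general $\wec n^i_a$ using invariance of the wave front set under the diffeomorphism $g_i\mapsto u_1^{i\,-1}g_iu_2^i$ and the induced symplectic action on covectors. The main obstacle is the \emph{uniform-in-$\mu$} control of the integrals $I_\mu$: the dominant $\mu$ grows like $\lambda$ while the transverse Gaussian width shrinks like $\mu^{-1/2}$, so the stationary-phase bounds and the subsequent summation over $\mu$ must be carried out with constants independent of $\mu$. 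Securing these estimates, together with the matching non-rapid-decay lower bound along the claimed directions, is the technical core on which the lemma rests.
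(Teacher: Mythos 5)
Your proposal is correct in outline, but it follows a genuinely different route from the paper's proof. The paper never performs a direct stationary--phase analysis of the series: it first treats the single--variable kernel $K(x,g)=\sum_k e^{-ikx}\langle n|g|n\rangle^k$ by observing that it solves the pseudodifferential equation $CK=0$ with $C=i\partial_x-\sqrt{L^2+\tfrac14}-\tfrac12$ and is invariant under the two translation symmetries $x\to x+\lambda$, $g\to e^{\lambda\wec{n}}g$ (resp.\ $g\to ge^{\lambda\wec{n}}$), so that the microlocal invariance and group--invariance properties confine $\WF(K)$ to the characteristic set $\{p_x+|\wec{p}|=0\}$ intersected with the annihilator of the symmetry directions; the multivariable kernel is then assembled as a convolution, $K_{n_i}=\int \dd x_1\cdots \dd x_n\,\delta_{2\pi}(x-\sum_i 2j_ix_i)\prod_iK_{n^i_1,n^i_2}(x_i,g_i)$, and the product/restriction/integration calculus (properties \ref{pro-prod}, \ref{pro-res}, \ref{pro-int}) yields the upper bound. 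Crucially, the paper then gets \emph{equality} without any lower--bound estimate: the closed form $K_{n_i}=(1-z)^{-1}$ gives the singular support exactly, and since the upper bound leaves only \emph{one} ray over each singular base point while $\mathrm{sing}\,K_{n_i}=\pi(\WF\setminus\{0\})$ forces at least one ray there, the inclusion is an equality. Your plan instead attacks the full series directly: the upper bound via uniform-in-$\mu$ non-stationary phase with transverse Gaussian damping, and membership via a saddle-point lower bound at $\mu\sim\lambda\xi$. This is more elementary and self-contained (no pseudodifferential machinery, and it exposes the geometric origin of the one-sidedness $\xi>0$ very cleanly), but it forces you to confront exactly the two hardest analytic issues — uniformity in $\mu$ and a non-cancellation lower bound for $\sum_\mu I_\mu$ — which the paper sidesteps entirely. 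Note that you could graft the paper's shortcut onto your own argument: you already compute the singular support from $(1-z)^{-1}$, and your non-stationary-phase upper bound leaves a single candidate ray over each singular point, so the "unique ray" observation would let you delete the saddle-point lower bound, which you yourself identify as the technical core.
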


Let us consider an operator
\begin{equation}
 {K}\colon C^\infty(S^1)\rightarrow C^\infty(\SU(2))
\end{equation}
given by an integral kernel distribution on $S^1\times \SU(2)$
\begin{equation}
 {K}(x,g)=\sum_{k\geq 0} e^{-ikx}\langle n|g|n\rangle^{k}
\end{equation}
We can easily check that
\begin{enumerate}
 \item $CV=0$ where
\begin{equation}
 C=i\frac{\partial}{\partial x}-\sqrt{L^2+\frac{1}{4}}-\frac{1}{2}
\end{equation}
\item ${K}$ is invariant with respect to transformations
\begin{equation}
\left\{ \begin{aligned}
  x\rightarrow x+\lambda\\ g\rightarrow e^{\lambda \wec{n}}g
 \end{aligned}\right\},\quad
 \left\{\begin{aligned}
  x\rightarrow x+\lambda\\ g\rightarrow ge^{\lambda \wec{n}}
 \end{aligned}\right\}
\end{equation}
\end{enumerate}
As before we can obtain conditions for $\WF$
\begin{itemize}
 \item $c=-p_x-|\wec{p}|=0$ and $\WF$ is invariant with respect to the flow
\begin{equation}
 -\partial_x-\frac{p^i}{|p|}L^i
\end{equation}
and $p_x,p^i$ is preserved.
\item $(p_x,p)\perp (1,g^{-1}\akt\wec{n})$ so
\begin{equation}
 p_x+\vec{p}\circ g^{-1}\akt\wec{n}=0
\end{equation}
\item $(p_x,\wec{p})\perp (1,\wec{n})$
\begin{equation}
 p_x+\wec{p}\circ g^{-1}\akt\wec{n}=0
\end{equation}
\end{itemize}
We know that $|\wec{p}|=-p_x\geq 0$
\begin{equation}
 \wec{p}\circ \wec{n}=|\wec{p}|
\end{equation}
since $|\wec{n}|=1$ we have $\wec{p}=-p_x \wec{n}$ and similarly $\wec{p}=-p_x g^{-1}\akt\wec{n}$.
The covector $\wec{n}$ is stabilized by $g$ so $g=e^{\chi \wec{n}}$. The flow of $C$ (and both actions) are
\begin{equation}
 -\partial_x-n^iL^i
\end{equation}
so if $(x,e^{\chi \wec{n}}, -\xi, \xi \wec{n})\in \WF$ then also
\begin{equation}
(x+\lambda,e^{(\chi+\lambda) \wec{n}}, -\xi, \xi \wec{n})\in \WF
\end{equation}
The singular support is
\begin{equation}
 {\rm sing}\ {K}=\{ \chi, e^{\chi \wec{n}}\}
\end{equation}
and thus the wave front set is given by
\begin{equation}
 \WF=\{(\chi,e^{i\chi \wec{n}}, -\xi, \xi \wec{n}), \xi>0\}\cup\{0\}
\end{equation}

We can now generalise to
\begin{equation}
 {K}_{n_1,n_2}(x,g)=\sum_j e^{-i2jx}\langle n|g_1^{-1}gg_2|n\rangle^{2j}
\end{equation}
where $\wec{n}_1=g_1\wec{n}$ and $\wec{n}_2=g_2\wec{n}$.

We can describe this wave front set in a geometric way
\begin{equation}
 \WF({K}_{n_1,n_2})=
\left\{ (\ln\langle n_1|g|n_2\rangle, g; -\xi, \xi \wec{n}_2)\colon \wec{n}_1=g \akt\wec{n}_2,\ |\wec{n}_2|=1,\ 
\xi>0\right\}\cup\{0\}
\end{equation}

This result easily generalises to many group variables since
\begin{equation}
\begin{split}
 K_{n_i}(x,g_1,\ldots, g_n)&=\sum_\lambda e^{-i\lambda x}\langle n^1_1|g^1|n_2^1\rangle^{2\lambda j_1}
\cdots \langle n_1^n|g^n|n_2^n\rangle^{2\lambda j_n}\\
&=\int \dd x_1\cdots \dd x_n\
\frac{1}{(2\pi)^{n-1}}
\delta_{2\pi}\left(x-\sum_i2j_ix_i\right)
\prod_i {K}_{n_1^i,n_2^i}(x_i,g_i)
\end{split}
\end{equation}
because
\begin{equation}
\delta_{2\pi}\left(x-\sum_i2j_ix_i\right)=\frac{1}{2\pi}
\sum_k e^{-ikx} \prod_i e^{ik2j_ix_i}
\end{equation}
We have from property \ref{pro-delta}
\begin{equation}
 \WF(\delta_{2\pi})=\left\{(x,x_1,\ldots x_n;p,p_1,\ldots p_n)\colon
x=\sum_i 2j_ix_i \text{ mod }2\pi,\ p_i=2j_ip\right\}\cup\{0\}
\end{equation}

Applying properties  \ref{pro-prod}, \ref{pro-res}, \ref{pro-int} we obtain 
\begin{equation}\label{eq:IntermediateSet}
\begin{split}
 \WF(K_{n_i})\subset\Big\{&
\left(\chi,g_1,\ldots g_n; -\xi,
\wec{p}_1,\dots \wec{p}_n\right)\colon \\
&\chi=\sum_i2j_i\ln\langle n^i_1|g_i|n^i_2\rangle,\
\xi>0,\ \forall_i\ \wec{n}^i_1=g_i\akt \wec{n}^i_2,\
\wec{p}_i=2j_i\xi \wec{n}_2^i \Big\}\cup\{0\}
\end{split}
\end{equation}
Now since
\begin{equation}
 K_{n_i}=\frac{1}{1-e^{-i x}\langle n^1_1|g^1|n_2^1\rangle^{2
j_1}\cdots \langle n_1^n|g^n|n_2^n\rangle^{2 j_n}}
\end{equation}
we know that the singular support is
\begin{equation}
 {\rm sing}\ K_{n_i}=\Big\{
\left(\chi,g_1,\ldots g_n\right)\colon 
\chi=\sum_i2j_i\ln\langle n^i_1|g_i|n^i_2\rangle,\
\forall_i\ \wec{n}^i_1=g_i\akt \wec{n}^i_2\Big\}
\end{equation}
This allows us to determine $\WF(K_{n_i})$ uniquely. This is due to the fact that for $g$ to be in the singular support, there needs to be a non zero direction above it in the wave front set, however, there is only one direction per $g$ in the sets of lemma \eqref{eq:IntermediateSet}. This proves the result.

\subsection{Elliptic operators}\label{sec:ellipticity}

Let $f\in C^\infty({\mathbb R}^n)$.
For a conical subspace of $V\subset {\mathbb R}^n$ we say that $f\in S^\rho$ if 
\begin{equation}
 \forall_{n_i\geq 0}\ \exists C>0\ \forall p \colon \left|\frac{\partial^{\sum_i n_i}f}{\partial p_1^{n_1}\cdots\partial p_n^{n_n}}\right|\leq C\left(1+\sum_i p_i^2\right)^{\frac{\rho-\sum_i n_i}{2}}
\end{equation}
Let us consider the operator $P: C^\infty((S^1)^n)\rightarrow C^\infty((S^1)^n)$ with integral kernel
\begin{equation}
 P(x_1,\ldots,x_n;y_1,\ldots y_n)=\sum_{n_1,\ldots, n_n\in{\mathbb Z}} f(n_1,\ldots,n_n) e^{-i\sum_i n_i(x_i-y_i)}
\end{equation}
The operator $P$ can be written using Fourier transform as
\begin{equation}
 f\left(i\frac{\partial}{\partial x_1},\ldots, i\frac{\partial}{\partial x_n}\right)\ .
\end{equation}
We can then have an important property proved in \cite{Strichartz1972}

\begin{pro}[Function \cite{Strichartz1972}]
For $f\in S^\rho({\mathbb R}^n)$ the operator
\begin{equation}
 P=f\left(i\frac{\partial}{\partial x_1},\ldots, i\frac{\partial}{\partial x_n}\right)
\end{equation}
is a classical pseudodifferential operator \cite{lars} with the symbol
\begin{equation}
 f(p_1,\ldots,p_n)+O(x,|p|^{\rho-1})
\end{equation}
\end{pro}

Together with the definition of microlocal invertibility (ellipticity) \cite{lars, grigis1994microlocal} we have

\begin{pro}[Ellipticity]\label{pro:ellip}
Assume that $f\in S^\rho({\mathbb R}^n)$ and for some conical neighbourhood $V$ of $(p_1,\ldots, p_n)\in{\mathbb R}^n$ holds
\begin{equation}
 \exists C>0\ \forall (p_1'\ldots,p_n')\in V,\ f(p_1'\ldots,p_n')\geq C\left(1+|p'|^2\right)^{-\rho/2},
\end{equation}
then for every $(x_1,\ldots, x_n)\in (S^1)^n$ and any distribution $A$ on $(S^1)^n$
\begin{equation}
 (x_1,\dots,x_n,p_1,\ldots p_n)\in \WF(A)\Rightarrow (x_1,\dots,x_n,p_1,\ldots p_n)\in \WF(PA)
\end{equation}
where $P$ is defined as
\begin{equation}
 P=f\left(i\frac{\partial}{\partial x_1},\ldots, i\frac{\partial}{\partial x_n}\right)
\end{equation}
\end{pro}

\subsection{Smooth distributions}
\label{sec:smooth}

In this section we describe a useful criterium for determining if a distribution is a smooth function \cite{lars,grigis1994microlocal}. On the manifold $M$ we choose any smooth nonvanishing Lebegue measure. Let $Q$ be an elliptic positive operator of first order \cite{lars,grigis1994microlocal}. For example $\sqrt{\sum L^2_a}$ for the product of group manifolds. Let us also introduce an orthonormal basis $\psi_i$ of smooth functions on the manifold $M$.
Then for the distribution $A$ we have that
\begin{equation}
 A\in C^\infty(M)\Leftrightarrow \forall_{n\geq 0}\ \exists C>0\ \forall_i\ |\la A | Q^n\psi_i\ra|\leq C
\end{equation}
Thus $A$ is not smooth if we can find a sequence $\tilde{\psi}_i$ of different eigenvectors of $Q$, such that
\begin{equation}
 \la A|\tilde{\psi}_n\ra\not=O(n^{-\infty})
\end{equation}


\section{The geometry of projected spin network boundary states}\label{sec:ProjectedSpinNetworkBoundary}

In order not to overburden the paper with notation we developed the wave front set above for the case of the universal boundary Hilbert space \cite{Dittrich2013}. We also left the precise geometricity of the boundary implicit, focusing instead on the bulk equations of motion. In this appendix we will briefly sketch the boundary geometry that arises in the case of projected spin network boundaries. This example illustrates the concepts required to generalize our analysis to arbitrary other boundaries, including, in particular, the $\SU(2)$ boundary Hilbert space. 

\subsection{Wave front set of trimmed boundaries}

To obtain a projected spin network boundary we need to consider the partition function on a trimmed complex. That is, we have a dual 2-complex $\CC$ with a boundary graph dual to the boundary which ``cuts'' faces and edges touching the boundary. This boundary graph consists of edges $e \in \Gamma_e$ and vertices $v \in \Gamma_v$. Every $v \in \Gamma_v$ is the endpoint of a half-edge $e \in E$ touching the boundary. Edges in $\Gamma_e$ contain only one face. For more details see the discussion in \cite{Dittrich2013}.

The wave front set analysis extends to this case. The only new ingredient is the wave front set of the "square root" of the $E$ function, called $F$ in the preceding papers. For the case of EPRL we can choose $E$ such that $E = F$, and the extension really is straightforward. It is this case we will discuss here, leaving the extension to other models for future work.

In our case, the projected spin network space is given by $L^2(\Spin(4)^{\Gamma_e}/\SU(2)^{\Gamma_v})$, with $\SU(2) \subset \Spin(4)$ denoting the diagonal subgroup. For $\omega = \delta$ the part of the spin foam integrand associated to a single boundary edge $e'$ that is part of a trimmed face $f = (\dots,v_2,e_2,v',e',v'',e_3,v_4,e_4,\dots)$ takes the form
\begin{equation}
 \delta(\dots g_{v_2e_2} g_{e_2f} g_{v'e'v''} g_{e_3f} g_{e_3v_4} \dots) F(g_{e_2f}) F(g^{-1}_{e_3f}).
\end{equation}
Note that generically we have $F(g) \neq F(g^{-1})$, for the EPRL case that we will discuss, we still have $F = E$ and in order to not have to track inverses here we can write
\begin{equation}\label{eq:trimmed face}
 \delta(\dots g_{v_2e_2} g_{e_2f} g_{v'e'v''} g_{e_3f} g_{e_3v_4} \dots) E(g_{e_2f}) E(g_{e_3f}).
\end{equation}

This can then be convoluted with a coherent state of the form\footnote{It is geometrically more transparent to work with states that do not satisfy the gauge symmetry at the vertices, this is automatically implemented by the partition function anyways.} \[\la n_{v'e'}^+ |g^+_{v'e'v''} |n^+_{e'v''}\ra^{2j_e^+} \la n_{v'e'}^- |g^-_{v'e'v''} |n^-_{e'v''}\ra^{2j_e^-} \in L^2(\Spin(4)^{\Gamma_e})\ ,\] and by the same argument as given in section \ref{sec:large j limit}, the large $j$ limit of the amplitude with this state is then governed by the wave front set of the partition function.

To obtain the wave front set of \eqref{eq:trimmed face} we introduce the covectors $ p^{v''}_{e'} = p_{v'e'v''}$, $- p^{v'}_{e'} = p^{(f)}_{e_2f}$, and we have as before that $p^{(f)}_{v_2e_2} = p^{e_2}_{fv_2}$, $p^{(f)}_{e_3f} = p^{e_3}_{fv_4}$ and $p^{(f)}_{e_3 v_4} = p^{v_4}_{e_3e_4}$. These satisfy the transport equations
\begin{align}
&p^{v_2}_{e_2e_1} = \act g_{v_2e_2} p^{e_2}_{fv_2},  & \; p^{e_2}_{fv_2} = g_{e_2 f} \act p^{v'}_{e'},\nn\\
&p^{v'}_{e'} = - g_{v' e' v''} \act p^{v''}_{e'},  & \; p^{v''}_{e'} = g_{e_3f} \act p^{e_3}_{fv_4}, \nn\\
&p^{e_3}_{fv_4} = g_{e_3 v_4} \act p^{v_3}_{e_3e_4}\ . & 
\end{align}
As the $g_{ef}$ have to stabilise their own bivectors they drop out and we obtain $p^{e_2}_{fv_2} = p^{v'}_{e'}$ and $p^{v''}_{e'} = p^{e_3}_{fv_4}$ and thus more simply
\begin{align}
&p^{v'}_{e'} = g_{e_2v_2} \act p^{v_2}_{e_2e_1},  & \; p^{v''}_{e'} = g_{e_3 v_4} \act p^{v_3}_{e_3e_4},\nn\\
&p^{v'}_{e'} = - g_{v' e' v''} \act p^{v''}_{e'}\ . & 
\end{align}
Thus we have that these satisfy simplicity and closure, that is we have 
\begin{align}
N^0 \cdot T_{- \gamma} p^{v''}_{e'} = N^0 \cdot T_{- \gamma} p^{v'}_{e'} = 0\ ,\nn\\
\sum_{e:\ v' \in e \in \Gamma_e} p^{v'}_e = 0.
\end{align}
Thus, after reconstructing the geometric 4-simplices at the vertex, these boundary bivectors capture the boundary of the geometric 4-simplices in a common frame. If the interior set is $\gamma$ twisted, the boundary will be $\gamma$ twisted, too.

The geometric meaning of the group element $g_{v'e'v''}$ is much less clear though. The condition $p^{v'}_{e'} = - g_{v' e' v''} \act p^{v''}_{e'}$ only fixes this group element up to two angles, it is preserved by the transformation $g'_{v'e'v''} = \exp(\phi^1 p^{v'}_{e'} + \phi^2 \hodge p^{v'}_{e'}) g_{v'e'v''}$. These angles have to be fixed by the wave front set condition of the delta function. This tells us that
\begin{equation}\label{eq:BoundaryFaceId}
 g_{v'e'v''}^{-1} = g_{e_3f} g_{e_3v_4} g_{v_4e_4} \dots g_{v_2 e_2} g_{e_2 f}.
\end{equation}

To clarify both the geometric meaning of $g'_{v'e'v''}$ and reiterate the interpretation of the boundary covector variables we can again construct a solution from a given geometry.

\subsection{Bivectors and holonomies from geometry, Part 2: The boundary edition}

On the boundary vertices $v'\in\Gamma_v$ we have a tetrahedron $\tau^{v'}$. As the vertex $v'$ cuts the internal edge $e$ that it touches in half, this is identified with the tetrahedron $\tau^e$ of the interior complex. In the example of the face considered above that would be the edge $e_2$, and we have $N^0 \cdot \tau^{v'} = 0$ and 
\begin{equation}
\tau^{v'} G_{e_2v_2} = \tau^{v_2}_{e_2}.
\end{equation}

As we want the orientation of the boundary and the interior 4-simplex to match we also have
\begin{equation}
N^0 =  G_{e_2v_2} N^{v_2}_{e_2}.
\end{equation}

\begin{figure}[htp]
 \centering
 \includegraphics[scale=.70]{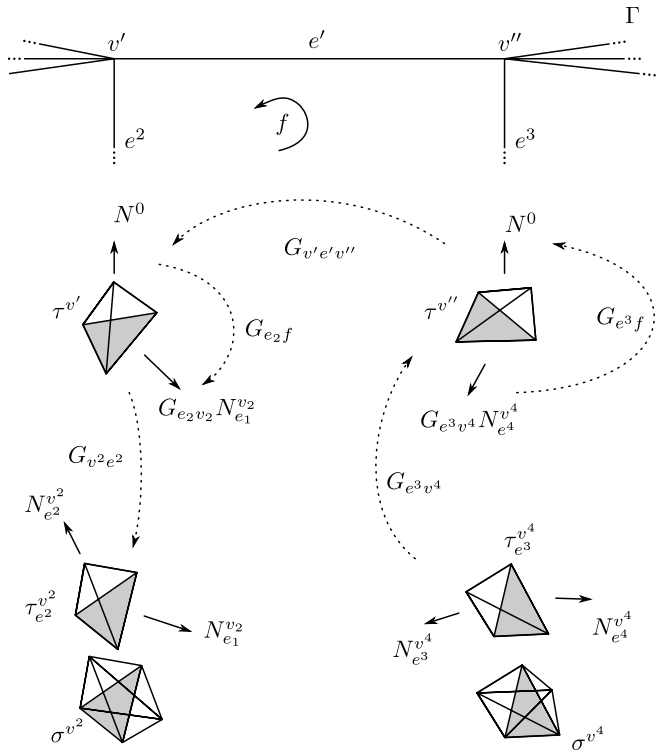}
 \caption{The geometric quantities at a boundary edge $e'$, in between boundary vertices $v'$ and $v''$ in a face $f=(\dots,v_2,e_2,v',e',v'',e_3,v_4,e_4,\dots)$.}
 \label{fig-data3}
\end{figure}

The boundary tetrahedron $\tau^{v'}$ has triangles $t^{v'}_{e'}$ with outward area normals $A^{v'}_{e'}$ in the direction of the boundary edge $e'$ connecting the boundary vertices $v'$ and $v''$. We can then also introduce the boundary connection $G_{v'e'v''}$ by requiring

\begin{align}\label{eq:Bound-Gvev-def}
N^0 &= G_{v'e'v''} N^{0},\nn\\
t^{v'}_{e'} &= G_{v'e'v''} t^{v''}_{e'},\nn\\
-A^{v'}_{e'} &= G_{v'e'v''} A^{v''}_{e'}.
\end{align}

The elements of the type $G_{e_2 f}$ that occur after a trimmed edge are introduced as

\begin{align}\label{eq:Bound-Gef-def}
G_{e_2v_2} N^{v_2}_{e_1} &= G_{e_2f} N^{0},\nn\\
t^{v'}_{e'} &= G_{e_2f} t^{v'}_{e'}.\nn\\
\end{align}

With these data we can define bivectors as
\begin{equation}
B^{v'}_{e'} = \hodge N^0 \wedge A^{v'}_{e'}.
\end{equation}
These geometric elements along with their relationships are illustrated in figure \ref{fig-data3}.

It is now easy to check that, if we can lift them to $\Spin(4)$, they satisfy the wave front set equations for $\gamma = 0$, in particular the bivectors are simple, satisfy the parallel transport condition and closure, and the product of group elements around a face is $\id$ as is easily seen by following the triangle dual to the face, and the normal orthogonal to it.

The question then arises to what degree one can invert this construction. After reconstructing the interior 4-simplices, and assuming they are geometric we directly obtain $g_{ev}$ that parallel transport the triangle (rather than just the bivector), and the $g_{ef}$ stabilize the triangle orthogonal to them by construction. Thus the condition \eqref{eq:BoundaryFaceId} immediately tells us that the $g_{v'e'v''}$ in the wave front set has to cover an element $G_{v'e'v''}$ that satisfies
$t^{v'}_{e'} = G_{v'e'v''} t^{v''}_{e'}$. If $G_{v'e'v''}$ is in the subgroup that satisfies $G_{v'e'v''} N^0 = N^0$, then $G_{v'e'v''}$ is simply the geometric connection of the boundary 3-geometry, and the group element used in defining Regge boundary states in \cite{Barrett:2009gg,Barrett:2009mw,Hellmann:2010nf}.

However this latter condition can indeed fail. To see how, note that the $g_{ef}$ have the non-geometric symmetry explained in section \ref{sec:Geom from Hol and Biv} in equation \eqref{eq:Symmetry-U(1)}. We had to use this symmetry to bring the $g_{ef}$ to the geometric ones. That this is always possible is ensured by equation \eqref{eq:FaceAngles}. If we use the same trick as in section \ref{sec:Geom from Hol and Biv}, and momentarily go to the frame in which all bivectors associated to the face $f$ are parallel to some bivector $B_f$, and thus $G_{v'e'v''} = \exp_{\SO(4)}(\phi^1 B_f + \phi^2 \hodge B_f)$ we can derive the analogous result in the presence of a boundary as
\begin{align}\label{eq:FaceAnglesBoundary}
\Theta_f + \sum_{e \in f} \xi_{ef} &= - \phi^2 \mod 2\pi\nn\\
\gamma \sum_{e \in f} \xi_{ef} &= \gamma \sum_{e \in f} \xi_{ef} - \phi^1 \mod 2\pi
\end{align}

The second line, implying $\phi^1 = 2\pi \N$ is simply the already stated fact that the triangle around the face is stabilized by the $G_{v'e'v''}$. The first line though shows that now the U$(1)$ symmetry can generally not set the $g_{ef}$ equal to the interior dihedral angle. The $g_{v'e'v''}$ acts as a boundary dihedral connection that adds another, arbitrary, deficit angle $\phi^2$ that needs to be compensated by the $g_{ef}$.

\section{BF theory and wave front set conditions}
\label{sec:BF}

In this appendix we recall the notions of homological algebra used in the main body of the text to solve the BF equations of motion.

\subsection{Local homology}

For a configuration in ${\mathcal W}(\CC,\Gamma)$ of the BF partition function, parallel transports around any face is trivial.

We can thus introduce the set of elements $h_{ef}$ such that for every face
\begin{equation}
 g_{ev}g_{ve'}=h_{ef}h_{e'f}^{-1}
\end{equation}
In order to define the relevant operators we need to introduce an orientation on the edges as well. This can be encoded by writing the edges as an ordered pair of vertices, $e = (v,v')$. If an edge $e = (v,v')$ is in a face $e \in f$, their orientations can agree, meaning $(v,e,v') \subset f$, or disagree if instead $(v',e,v) \subset f$. We introduce signs $\epsilon_{ef}$ which are $+$ if the orientation of $e$ and $f$ agree, and $-$ otherwise. We further introduce signs $\epsilon_{ve}$, which are $+$ if $v$ is the target of $e$, that is, $e = (v,v')$, and $-$ if instead $e = (v',v)$, that is, if v is the source.

For $\CC$ we can introduce a boundary operator in local homology \cite{Barrett:2008wh}
\begin{equation}
 d^\CC_2\colon C(\CC_f)\rightarrow C(\CC_e),\quad d^\CC_2(\mathcal{F})(e)=\sum_{f\ni e} \epsilon_{ef} h_{ef}\rhd \mathcal{F}(f)
\end{equation}
where $C(\CC_f)$ and $C(\CC_e)$ are linear vector spaces generated by sets of $\su(2)$ labelled by $\CC_f$ and $\CC_e$. That is, an element $\mathcal{E} \in C(\CC_e)$ is a set of convectors $\{p_e\}$ with $e$ ranging over $\CC_e$, and $\mathcal{E}(e) = p_e$. Note that the covectors $p_e$ defined this way satisfy the parallel transport condition.

Similarly
\begin{equation}
 d^\CC_1: C(\CC_e)\rightarrow C(\CC_v),\quad d^\CC_1(\mathcal{E})(v)=\sum_{e\ni v} \epsilon_{ve} g_{ve}\rhd \mathcal{E}(e)
\end{equation}
Again the sign depends on the orientation of the edge.

The homology groups with local coefficients are defined as
\begin{equation}
 H_i(\CC,\su(2))=\ker d^\CC_i/{\rm im}\ d^\CC_{i+1}
\end{equation}
where $d_3=0$, $d_0=0$.

Similar construction applied to $\Gamma$ with convention that $C(\Gamma_f)=\{0\}$. We can thus identify
\begin{equation}
 H_1(\Gamma,\su(2))=\ker d^\Gamma_1\ ,
\end{equation}
which are sets of bivectors $p_{e}$ such that
\begin{equation}\label{eq:Boundary-closure}
 \forall_{v\in\Gamma_v} \sum_{e\ni v} \epsilon_{ve} g_{ve}\rhd p_{e}=0
\end{equation}

\subsection{Inclusion}

We can introduce the inclusion operators 
\begin{equation}
  {\rm incl_f}:\{0\}\rightarrow C(\CC_f),\ {\rm incl_e}:C(\Gamma_e)\rightarrow C(\CC_e),\  {\rm incl_v}:C(\Gamma_v)\rightarrow C(\CC_v)
\end{equation}
These are defined by 
\begin{equation}
({\rm incl_e} \mathcal{E})(e) =
\begin{cases}
\mathcal{E}(e)& e \in \Gamma_e\\
0 & \text{otherwise}
\end{cases}
\end{equation}
and similarly for $f$ and $v$.

This map descends to a map on homology
\begin{equation}
 [{\rm incl}]_i: H_i(\Gamma,\su(2))\rightarrow H_i(\CC,\su(2))
\end{equation}

In $H_1$ we can introduce the variables
\begin{equation}
 p_{ve}=\epsilon_{ve} p_{e}
\end{equation}
depending on the orientation. The conditions for $p_{ev}$ that arise from the wave front set are the following: There exists a set of $p_f$ such that under the map $d_2^\CC$ their image $p_e$ ($p_{ev}$) satisfies the closure condition on the internal edges (is zero there).

Using the inclusion map the condition on $p_f$ is that their image $p_{e}$ belong to the image of ${\rm incl}$. The set of valid boundary data is thus
\begin{equation}
 {\rm incl_e}^{-1}({\rm im}\ d_2^\CC)
\end{equation}

In terms of sets we can say that a valid boundary is a set of \[\{p_e\},\ e \in \Gamma\] satisfying \eqref{eq:Boundary-closure} such that the set \[\{p'_e\},\ e \in \CC\] defined as  \[p'_e = p_e\ \forall e \in \Gamma,\ p'_e = 0 \text{ otherwise}\] is of the form \[\{ p'_e = \sum_{f\ni e} \epsilon_{ef} h_{ef} p_f\}\] for some $p_f$.

As the condition \eqref{eq:Boundary-closure} is homological, we can view valid boundary data $s$ as elements of the first homolgy, that is, $s \in H_1(\Gamma,\su(2))$. Then the other two conditions, parallel transport and closure, can be written homologically as well as
\[[{\rm incl}]_1(s)=0\ .\]

Let us unpack this last line. Equality in $H_1(\CC,\su(2))$ is up to elements in ${\rm im}\ d_2^\CC$. That is, there must exist $p_f$, such that \[\{p'_e: \ p'_e = p_e \forall e \in \Gamma, \ p'_e = 0 \text{ otherwise} \}\] differs from \[\{p'_e \in \CC, p'_e = 0\}\] exactly by $\tilde p_e = \sum_{f \ni e} h_{ef} p_f $. This, however is just the condition explained before.

To summarise, we have that $s$ is a valid boundary configuration if 
\begin{equation}
 s\in \ker d_1=H_1(\Gamma,\su(2))\subset C(\Gamma_e),\quad [{\rm incl}]_1(s)=0,
\end{equation}
or, even more concisely, the admissible configurations $p_{ve} = -p_{ev} = \epsilon_{ve} p_e$ are those belonging to
\begin{equation}
[{\rm incl}]_1^{-1}(0)
\end{equation}
understood as a subset of $C(\Gamma_e)$.

\section{Conventions}\label{sec:conv}
We briefly summarise our conventions for various structures around $\Spin(4)$, $\SO(4)$, $\SU(2)$, and $\SO(3)$. For elements of $\su(2)$ and their eigenstates, the coherent states, we write
\[
\wec{n} \in \su(2),\ \hat{\wec{n}} = \frac{\wec{n}}{|\wec{n}|},\ 
\hat{\wec{n}} | n \rangle = \frac{i}2 | n \rangle\ .
\] 
{Components of the vector will be denoted by} $n^i$.
For the exponential map in the group $\SU(2)$ (and its image in $\SO(3)$) we use the convention
\begin{equation*}
\begin{array}{lll}
 \exp (\wec{n}) \in \SU(2), & \exp (2 \pi\ \hat{\wec{n}}) = -\id \in \SU(2), & \exp (4 \pi\ \hat{\wec{n}}) = \id \in \SU(2)\\
 \exp_{\SO(3)} (\wec{n}) \in \SO(3), & \exp_{\SO(3)} (2\pi\ \hat{\wec{n}}) = \id \in \SO(3), &\\ 
\end{array}
\end{equation*}
For bivectors we have:
\begin{equation*}
\begin{array}{ll}
 B = (\wec{n},\wec{n}') \in \spin(4) = \so(4), & \hodge (\wec{n},\wec{n}') = (\wec{n},-\wec{n}') \\
 T_{\gamma} = 1 + \gamma \hodge, & R_{\gamma} = \frac1{\sqrt{1+\gamma^2}} T_{-\gamma} T_{\gamma}^{-1}\\
 |B|^2 = |(\wec{n},\wec{n}')|^2 = \frac12 (|\wec{n}|^2 + |\wec{n}'|^2),\ & 
 |T_{\gamma}(\wec{n},\wec{n})|^2 = 
  (1+\gamma^2)|\wec{n}|^2
\end{array} 
\end{equation*}
In the case of $\Spin(4)$ and $\SO(4)$ these are
\begin{equation*}
\begin{array}{ll}
 g = (g^+,g^-) \in \Spin(4),\ &  \exp (B)\in \Spin(4),\\
 G \in \SO(4),  & \exp_{\SO(4)} (B)\in \SO(4)
\end{array} 
\end{equation*}
{and we also use following convention for the action of $\SU(2)$ ($\SO(3)$) on vectors
\begin{equation}
 (h\akt \wec{n})^i\sigma_i= n^i\ h \sigma_i h^{-1}
\end{equation}
with $\sigma_i$ the Pauli matrices and similarly for action of $\Spin(4)$ ($\SO(4)$)
\begin{equation*}
 (\exp_{\SO(4)} (\wec{n},\wec{n}') \act N)^I\sigma_I = N^I \exp(\wec{n}) \sigma_I \exp(\wec{n}')^{-1},
\end{equation*}}
with $\sigma_0 = \id$. In particular the subgroup generated by bivectors of the form $(\wec{n},\wec{n})$ stabilises the north pole $N^0 =\delta^{0I} = (1,0,0,0)$. Thus as bivectors we have $N^0 \cdot B = 0$ if $B = (\wec{n},\wec{n})$.

\end{document}